%
%
\RequirePackage{snapshot}
\newif\ifproofs\proofsfalse\ifproofs\RequirePackage[displaymath,mathlines]{lineno}\fi

%
%
%
\newif\ifsubsections
\subsectionsfalse
%
%
%
\documentclass[]{pcmi}

%
%
%
%
\usepackage[sc]{mathpazo}          
\usepackage{eulervm}               
\usepackage[mathscr]{eucal}     
\usepackage[scaled=0.86]{berasans} 
\usepackage[scaled=1]{inconsolata} 
\usepackage[T1]{fontenc}
\usepackage{overpic}
\usepackage{float}
 
%
%
\usepackage[%
	protrusion=true,
	expansion=false,
	auto=false
	]{microtype}

%
%
%
%
%
%
\usepackage{xcolor}
\usepackage{graphicx}
\graphicspath{{./figures/}}

\long\def\commentout#1{}
%
%
%

\ifdraft
	\definecolor{linkred}{rgb}{0.7,0.2,0.2}
	\definecolor{linkblue}{rgb}{0,0.2,0.6}
\else
	\definecolor{linkred}{rgb}{0.0,0.0,0.0}
	\definecolor{linkblue}{rgb}{0,0.0,0.0}
\fi

%
%

\usepackage{tikz}
\usetikzlibrary{calc,arrows,shapes,positioning}
\usetikzlibrary{decorations.markings}
\tikzstyle arrowstyle=[scale=1]
\tikzstyle directed=[postaction={decorate,decoration={markings,
    mark=at position .65 with {\arrow[arrowstyle]{stealth}}}}]
\tikzstyle reverse directed=[postaction={decorate,decoration={markings,
    mark=at position .65 with {\arrowreversed[arrowstyle]{stealth};}}}]
\usepackage{float}
\usepackage[skip=6pt]{caption}

%
%
%
\PassOptionsToPackage{hyphens}{url}
\usepackage[
    setpagesize=false,
    pagebackref,
	plainpages=false,pdfpagelabels,pageanchor=true,
    pdfstartview={FitH 1000},
    bookmarksnumbered=false,
    linktoc=all,
    colorlinks=true,
    anchorcolor=black,
    menucolor=black,
    runcolor=black,
    filecolor=black,
    linkcolor=linkblue,
	citecolor=linkblue,
	urlcolor=linkred,
]{hyperref}
\usepackage[backrefs,msc-links,nobysame]{amsrefs}

%
%
%
%
%
\customizeamsrefs

%
%

%
%
%
%
%
%
%
%


\theoremstyle{plain}
\newtheorem{theorem}[equation]{Theorem}
\newtheorem{corol}[equation]{Corollary}
\newtheorem{prop}[equation]{Proposition}
\newtheorem{define}[equation]{Definition}

\theoremstyle{definition}

\newtheorem{exer}[equation]{Exercise}
\newtheorem{rem}[equation]{Remark}
\newtheorem*{remark}{Remark}
\newtheorem*{remarks}{Remarks}
\newtheorem{exam}[equation]{Example}
\newtheorem*{ques}{Question}

%
%
%
%


\DeclareRobustCommand{\ddbar}{%
  {\mathord{\text{\lower0.05ex\hbox{$\mathchar'26$}{$\mkern-11mu d$}}}}%
}
\newcommand{\dts}{\ddbar s}

\newcommand{\dtx}{\ddbar x}
\newcommand{\dtz}{\ddbar z}

\newcommand{\lf}{\left}
\newcommand{\rt}{\right}

\newcommand{\tit}{\textit}
\newcommand{\tbf}{\textbf}

\newcommand{\ovl}{\overline}
\newcommand{\tconst}{\text{const.}}
\newcommand{\ndconst}{\text{const }}

\newcommand{\sgn}{\text{sgn}}
\newcommand{\ind}{\text{ind}}
\newcommand{\noi}{}

\newcommand{\beq}{\begin{equation}}
\newcommand{\eeq}{\end{equation}}
\newcommand{\beqs}{\begin{equation*}}
\newcommand{\eeqs}{\end{equation*}}

\newcommand{\RR}{\mathbb{R}}
\newcommand{\TT}{\mathbb{T}}
\newcommand{\CC}{\mathbb{C}}
\newcommand{\mL}{\mathcal{L}}
\newcommand{\mP}{\mathcal{P}}
\newcommand{\mR}{\mathcal{R}}
\newcommand{\msA}{\mathscr{A}}
\newcommand{\msC}{\mathscr{C}}
\newcommand{\msF}{\mathscr{F}}
\newcommand{\mS}{\mathcal{S}}
\newcommand{\chf}{\check{f}}

\newcommand{\hC}{\hat{C}}

\newcommand{\hf}{\hat{f}}
\newcommand{\hh}{\hat{h}}
\newcommand{\hm}{\hat{m}}

\newcommand{\tc}{\tilde{c}}
\newcommand{\ttc}{\tilde{\tc}}

\newcommand{\hz}{\hat{z}}

\newcommand{\al}{\alpha}
\newcommand{\ga}{\gamma}
\newcommand{\Ga}{\Gamma}
\newcommand{\de}{\delta}

\newcommand{\ep}{\epsilon}
\newcommand{\om}{\omega}

\newcommand{\sg}{\sigma}
\newcommand{\Sg}{\Sigma}
\newcommand{\la}{\lambda}

\newcommand{\p}{\partial}

\newcommand{\half}{\frac{1}{2}}

\newcommand{\arcThroughThreePoints}[4][]{
\coordinate (middle1) at ($(#2)!.5!(#3)$);
\coordinate (middle2) at ($(#3)!.5!(#4)$);
\coordinate (aux1) at ($(middle1)!1!90:(#3)$);
\coordinate (aux2) at ($(middle2)!1!90:(#4)$);
\coordinate (center) at ($(intersection of middle1--aux1 and middle2--aux2)$);
\draw[thick,#1] 
 let \p1=($(#2)-(center)$),
      \p2=($(#4)-(center)$),
      \n0={veclen(\p1)},       
      \n1={atan2(\y1,\x1)}, 
      \n2={atan2(\y2,\x2)},
      \n3={\n2>\n1?\n2:\n2+360}
    in (#2) arc(\n1:\n3:\n0);
}

%
%

\begin{document}

%
%
%
%
%
%

\title{Riemann--Hilbert Problems}
\author{Percy Deift}
\address{Department of Mathematics, Courant Institute of Mathematical Sciences, New York University}
\email{
deift@cims.nyu.edu}

\begin{abstract}%
These lectures introduce the method of nonlinear steepest descent for Riemann-Hilbert problems. This method finds use in studying asymptotics associated to a variety of special functions such as the Painlev\'{e} equations and orthogonal polynomials,
in solving the inverse scattering problem for certain integrable systems, and in proving universality for certain classes of random matrix ensembles. These lectures highlight a few such applications.
\end{abstract}

\maketitle
\tableofcontents
\thispagestyle{empty}

\section*{Lecture 1}
\setcounter{section}{1}\addcontentsline{toc}{section}{Lecture 1}

These four lectures are an abridged version of 14 lectures that I gave at
the Courant Institute on RHPs in 2015.  These 14 lectures are freely available
on the AMS website AMS Open Notes.

Basic references for RHPs are \cite{ClanceyGohberg,Litvinchuk1987,DeiftOrthogonalPolynomials}.  Basic references for complex function theory are \cite{Duren,Garnett2007,Goluzin1969}.
Many more specific references will be given as the course proceeds.

Special functions are important because they provide \tit{explicitly
solvable models} for a vast array of phenomena in mathematics and physics.
By ``special functions'' I mean Bessel functions, Airy functions, Legendre
functions, and so on.  If you have not yet met up with these functions, be
assured, sooner or later, you surely will.

It works like this.  Consider the \tit{Airy equation} (see, e.g. \cite{AbramowitzStegun,DLMF})
\beq
y''(x) = x y(x),\quad  -\infty < x < \infty.
\label{8}
\eeq
Seek a solution of \eqref{8} in the form
\beqs
y(x) = \int_\Sg e^{xs} \: f(s)\, ds
\eeqs
for some functions $f(x)$ and some contours $\Sg$ in the complex plane
$\CC$.  We have
\begin{align*}
y'' (x) &= \int_\Sg s^2\: e^{xs}\: f(s)\, ds\\[-3mm]
\intertext{and}
x\, y(x) &= \int_\Sg \lf(\frac{d}{ds} \;e^{xs} \rt) f(s)\, ds\\[2mm]
&= -\int e^{xs} \, f'(s)\, ds
\end{align*}
provided we can drop the boundary terms.  In order to solve \eqref{8} we need to
have
\begin{align*}
-f'(s) &= s^2\, f\\
\intertext{and so}
f(s) &= \tconst \; e^{-\frac{1}{3}\;s^3}\ .
\end{align*}
Thus
\[
y(x) = \tconst \int_\Sg e^{xs - \frac{1}{3}\; s^3}\, ds
\]
provides a solution of the Airy equation.

The particular choice
\[
\tconst = \frac{1}{2\pi i}
\]
and $\Sigma$ in Figure~\ref{fig:sigma}
is known as \tit{Airy's integral} $Ai(x)$
\beq
Ai(x) = \frac{1}{2\pi i} \int_\Sg e^{xz- \frac{1}{3}\;z^3}\: dz\,.
\label{10}
\eeq

\begin{figure}[H]
\centering
\begin{tikzpicture}[scale=6]
\draw[line width = 1, directed]  (0, -.3)to [out=30,in= 270] (.28,0) to [out=90, in = -30] (0, .3);
\node[above right] at (0,.3){$\infty e^{i2\pi/3}$};
\node[below right] at (0,-.3){$\infty e^{-i2\pi/3}$};
\node[left] at (-0, 0) {\scalebox{2}{$\Sigma = $}};
\end{tikzpicture}
\caption{$\Sigma$ for Airy's integral. \label{fig:sigma}} 
\end{figure}
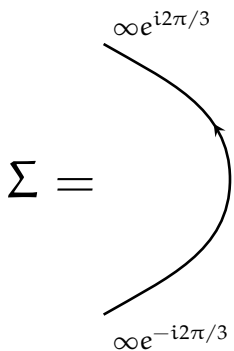

Other contours provide other, independent solutions of Airy's equation, such
as $Bi(x)$ (see \cite{AbramowitzStegun}).  Now the \tit{basic fact of the matter} is
that the integral representation \eqref{10} for $Ai(x)$ enables us, using the
classical method of \tit{stationary phase/steepest descent}, to compute
the asymptotics of $Ai(x)$ as $x\to + \infty$ and $-\infty$ with
\tit{any desired accuracy}.  We find, in particular \cite[p.~448]{AbramowitzStegun}, that for $ \zeta = \frac23 x^{3/2}$
\begin{equation}\label{11}
Ai(x) \sim \frac{1}{2\sqrt{\pi}}\; x^{-\frac{1}{4}} \;e^{-\zeta}
\sum^\infty_{k=0} (-1)^k\, c_k\, \zeta^{-k} 
\end{equation}
as $x\to +\infty$, where
\begin{align*}
c_0 &= 1,\\  c_k &= \frac{\Ga \lf(3k+ \frac{1}{2}\rt)}{{54}^k\;
k!\; \Ga\lf(k+\half\rt)}
= \frac{(2k+1) (2k+3) \dots (6k-1)}{(216)^k\; k!}\ , \qquad k \ge 1\,.
\end{align*}
and that
\begin{equation} \label{12} 
\begin{split}
	Ai(-x) \sim\; \frac{1}{\sqrt{\pi}} \;\; x^{-1/4}
	\biggl(&\sin
	\lf( \zeta + \frac{\pi}{4} \rt)
	\sum^\infty_0 \, (-1)^k \: c_{2k} \: \zeta^{-2k} \\
	-&\cos \lf(\zeta + \frac{\pi}{4}\rt) \sum^\infty_0 (-1)^k\:
	c_{2k+1} \: \zeta^{-2k-1}\biggr), 
\end{split}
\end{equation}
as $x\to +\infty$.

Such results for solutions of general 2$^{nd}$ order equations are very rare.
Formulae \eqref{11} and \eqref{12} solve the fundamental \tit{connection problem}
or \tit{scattering problem} for solutions of the Airy equation.  Thus, if
we know that a solution $y(x)$ of the Airy equation behaves like
\[
y(x) = \frac{1}{2\sqrt{\pi}} \;x^{-1/4} \; e^{-\zeta} \lf(1-\frac{c_1}{\zeta}
+ \dots \rt)
\]
as $x\to + \infty$, then we know \tit{precisely} how it behaves as
$x\to -\infty$, and vice versa, by \eqref{11} \eqref{12}, see Figure~\ref{fig:airy}.

\begin{figure}[H]

  \centering\begin{overpic}[width=.65\linewidth]{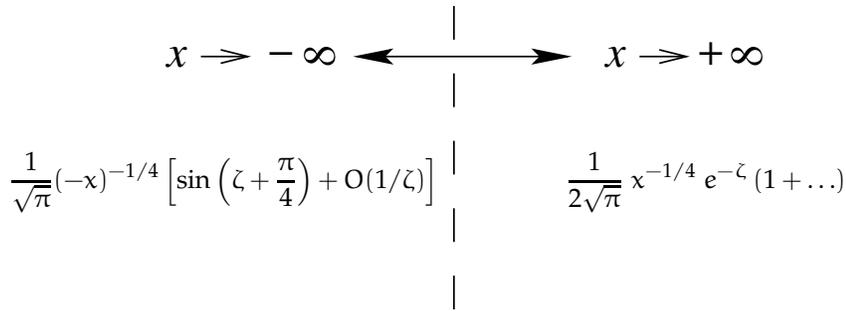}
    \put(-25,20){$\displaystyle
\frac{1}{\sqrt{\pi}}(-x)^{-1/4} \lf[ \sin \lf( \zeta +  \frac{\pi}{4}\rt) + O(1/\zeta) \rt]$}
\put(65,20){$\displaystyle \frac{1}{2\sqrt{\pi}} \; x^{-1/4} \;e^{-\zeta} \lf(1+ \dots \rt)$}
\end{overpic}
\caption{\label{fig:airy} Asymptotics for Airy's integral.}
\end{figure}

\begin{exer}
Use the classical steepest-descent method to verify \eqref{11} and \eqref{12}.
There are similar precise results for \tit{all} the classical special
functions. The diligent student should regard Abramowitz \& Stegun \cite{AbramowitzStegun} as an
exercise book for the steepest descent method --- verify all the
asymptotic formulae!

Now in recent years it has become clear that a new and extremely broad class
of problems in mathematics, engineering and physics is described by a
\tit{new} class of special functions, the so-called \tit{Painlev\'e
functions}.  There are six Painlev\'e equations and we will say more
about them later on.  Whereas the classical special functions, such as
Airy functions, Bessel functions, etc. typically arise in linear
(or linearized problems) such as acoustics or electromagnetism, the
Painlev\'e equations arise in nonlinear problems, and they are now
recognized as forming the core of modern special function theory.  Here are
some examples of how Painlev\'e equations arise:
\end{exer}

\begin{exam}
Consider solutions of the modified Korteweg--de Vries equation (MKdV)
\begin{equation}\label{13}
\begin{split}
	&u_t - 6u^2\,u_x + u_{xxx} = 0,\; - \infty< x < \infty, \quad t>0, \\
	&u(x,0) = u_0 (x) \to 0 \quad\text{as}\quad |x| \to \infty. 
\end{split}
\end{equation}
Then \cite{deiftzhoumkdv} as $t\to \infty$, in the region $|x| \le c \; t^{1/3}, \;\;c<\infty$,
\beq
u(x,t) = \frac{1}{(3t)^{1/3}} \quad p\lf(\frac{x}{(3t)^{1/3}} \rt)
+ O\lf(\frac{1}{t^{2/3}}\rt)
\label{14}
\eeq
where $p(s)$ is a particular solution of the \tit{Painlev\'e II (PII)}
equation
\beqs
p''(s) = s \;p(s) + 2 \;p^3(s).
\eeqs
\end{exam}

\begin{exam} Let $\pi := \lf(\pi_1\, \pi_2 \dots \pi_N\rt) \in S_N$ be a permutation of the numbers
$1, 2, \dots, N$.  We say that $\pi_{i_1}, \pi_{i_2},
\dots \pi_{i_k}$ is an \tit{increasing subsequence} of $\pi$ of \tit{length k}
if
\[
i_1 < i_2 < \dots < i_k
\]
and
\[
\pi_{i_1} < \pi_{i_2} < \dots < \pi_{i_k}.
\]
Thus if $N=6$ and $\pi=(413265)$, then 125 and 136 are increasing subsequences
of $\pi$ of length 3. Let $\ell_N(\pi)$ denote the length of a longest
increasing subsequence of $\pi$, e.g., for $N=6$ and $\pi$ as above,
$\ell_6(\pi)=3$, which is the length of the longest increasing subsequences
125 and 136.

Now equip $S_N$ with uniform measure. Thus
\beqs
\text{Prob } \lf(\ell_N \le n\rt) = \frac{\# \; \lf\{ \pi \in S_N:
\ell_N(\pi) \le n\rt\}}{N!}. 
\eeqs
\end{exam}
\begin{ques}
How does $\ell_N$ behave statistically as $N,n\to \infty$?
\end{ques}

\begin{theorem}[\cite{Baik1999b}] \label{t:long}

Center and scale $\ell_N$ as follows:
\[
\ell_N \to X_N = \frac{ \ell_N - 2\sqrt{N}}{N^{1/6}}
\]
then
\beqs
\lim_{N\to \infty} \textup{Prob} \lf(X_N \le x\rt) =
e^{-\int^\infty_x \; (s-x)\, u^2(s)\, ds}
\eeqs
where $u(s)$ is the (unique) solution of Painlev\'e II (the so-called
Hastings-McLeod solution) normalized such that
\beqs
u(s) \sim Ai(s) \quad\text{as} \quad s\to + \infty.
\eeqs
\end{theorem}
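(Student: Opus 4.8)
The plan is to replace the fixed-$N$ ensemble by a Poissonized one, convert the resulting probability into a Toeplitz determinant, recognize that determinant through orthogonal polynomials on the unit circle, analyze the associated $2\times2$ Riemann--Hilbert problem by nonlinear steepest descent in a double-scaling limit, and finally de-Poissonize. Concretely, for a Poisson parameter $\la>0$ put
\[
\Phi_\la(n):=e^{-\la}\sum_{N=0}^\infty\frac{\la^N}{N!}\,\textup{Prob}\lf(\ell_N\le n\rt)=\textup{Prob}\lf(\ell_{N_\la}\le n\rt),\qquad N_\la\sim\textup{Poisson}(\la).
\]
By the Robinson--Schensted correspondence together with Gessel's identity, $e^{\la}\Phi_\la(n)=D_n(\la)$, the $n\times n$ Toeplitz determinant $\det\bigl(I_{j-k}(2\sqrt\la)\bigr)_{0\le j,k\le n-1}$ with symbol $f(z)=e^{\sqrt\la(z+z^{-1})}$ on the unit circle $\TT$, where $I_m$ is the modified Bessel function.

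Next, encode the orthogonal polynomials with weight $f$ on $\TT$ into the matrix Riemann--Hilbert problem of Fokas--Its--Kitaev type: $Y(z)$ analytic off $\TT$, with jump $Y_+=Y_-\lf(\begin{smallmatrix}1 & z^{-n}f(z)\\ 0 & 1\end{smallmatrix}\rt)$ on $\TT$ and normalization $Y(z)z^{-n\sg_3}\to I$ as $z\to\infty$; then $D_n(\la)$ is recovered from the expansions of $Y$ at $z=0$ and $z=\infty$ (or, more conveniently, from a differential identity in $\la$). Now impose the scaling $n=2\sqrt\la+x\,\la^{1/6}$ with $x$ fixed, and form the equilibrium measure for the external field $n^{-1}\sqrt\la\,(z+z^{-1})$ on $\TT$: it is supported on an arc whose two endpoints coalesce to a single point exactly in this critical regime. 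Perform the standard chain of invertible transformations $Y\mapsto T\mapsto S$ ($g$-function normalization, then opening of lenses around the arc so that away from the endpoints the jumps of $S$ are exponentially close to $I$), and build a parametrix $P$: a global piece from the conformal diagonalization of the leading jump, together with --- crucially --- \emph{Airy parametrices} in shrinking disks about the two endpoints. This is exactly where $Ai$, and after matching the Hastings--McLeod solution, enters. The ratio $R:=SP^{-1}$ then solves a small-norm Riemann--Hilbert problem, so $R=I+o(1)$ uniformly off the disks.

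Unwinding the transformations back through $T$ and $Y$ and reading off the relevant coefficients gives, in the above scaling,
\[
\Phi_\la(n)\longrightarrow F(x):=\det\bigl(I-K_{\mathrm{Airy}}\bigr)_{L^2(x,\infty)}=e^{-\int_x^\infty(s-x)\,u(s)^2\,ds},
\]
the last equality being the Tracy--Widom formula for the Airy-kernel Fredholm determinant, with $u$ the Hastings--McLeod solution normalized by $u(s)\sim Ai(s)$ as $s\to+\infty$ (the needed Painlev\'e II connection formula being itself a steepest-descent computation on the PII Riemann--Hilbert problem). Finally, since $N\mapsto\textup{Prob}(\ell_N\le n)$ is nonincreasing in $N$, Johansson's de-Poissonization lemma --- a Tauberian statement exploiting the concentration of the Poisson weights near $N=\la$ --- sandwiches $\textup{Prob}(\ell_N\le n)$ between $\Phi_{\la_\pm}(n)$ with $\la_\pm=N\pm\sqrt N\log N$; these shifts are negligible inside the $N^{1/6}$ fluctuation window, so the Poissonized limit transfers to fixed $N$, giving $\lim_{N\to\infty}\textup{Prob}(X_N\le x)=F(x)$ as claimed.

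The main obstacle is the local analysis at the \emph{coalescing} endpoints of the support arc: one must identify the Airy parametrix as the correct local model there, take the endpoint disks shrinking at just the right rate so that the matching jumps on their boundaries are uniformly (in $x$ on compact sets) close to $I$, and keep track of constants carefully enough that the limiting object is \emph{exactly} the Airy-kernel Fredholm determinant rather than a reparametrization of it. A secondary technical point is making the de-Poissonization errors uniform in $x$, and --- if one wants convergence of moments in addition to distributions --- controlling the tails as $x\to\pm\infty$.
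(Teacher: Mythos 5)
First, a point of comparison: the paper does not prove this theorem at all --- it is quoted from \cite{Baik1999b} as motivation for the Riemann--Hilbert machinery developed later --- so there is no internal proof to measure you against. I am therefore judging your outline against the actual Baik--Deift--Johansson argument, whose architecture you have largely reproduced correctly: Poissonization, Gessel's identity turning $e^{\la}\Phi_\la(n)$ into the Toeplitz determinant with symbol $e^{\sqrt\la(z+z^{-1})}$, the Fokas--Its--Kitaev RHP for the associated orthogonal polynomials on $\TT$, nonlinear steepest descent in the double-scaling regime $n=2\sqrt\la+x\la^{1/6}$, and de-Poissonization via monotonicity of $N\mapsto\textup{Prob}(\ell_N\le n)$ and concentration of the Poisson weights. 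All of that is right.

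The genuine gap is in the local analysis, exactly at the point you flag as the ``main obstacle.'' In the critical window the equilibrium density vanishes \emph{quadratically} at $z=-1$ (equivalently, the two endpoints of the support arc coalesce there), and the correct local model is \emph{not} an Airy parametrix at each endpoint: Airy parametrices are the right model only when the two soft edges stay separated on the scale of their disks, which is precisely what fails in the window that produces the $x$-dependence, so gluing two Airy models there does not yield a small-norm problem uniformly in $x$ on compacts. The correct local parametrix is built from the Painlev\'e II Riemann--Hilbert problem of Figure~\ref{fig:4} with the Hastings--McLeod Stokes data; that is how $u$ enters, not ``after matching'' an Airy model. Relatedly, the steepest-descent analysis does not hand you $\det(I-K_{\mathrm{Airy}})$ as an intermediate object; what it delivers is asymptotics for quantities such as $\pi_n(0)$ (which converge, suitably scaled, to $u(x)$), and these feed into an exact differential identity for $\log D_n$. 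One then shows that the second derivative of $\log\Phi_\la$ in the scaling variable tends to $-u^2(x)$ and integrates twice, using the decay $u(s)\sim Ai(s)$ as $s\to+\infty$ to fix the constants; this produces $e^{-\int_x^\infty(s-x)u^2(s)\,ds}$ directly, with no appeal to the Tracy--Widom Fredholm-determinant identity. Your parenthetical ``differential identity in $\la$'' is the right instinct, but it must be carried through --- it is the bridge from the RHP data to the determinant, and without it the step ``unwinding the transformations gives $\Phi_\la(n)\to F(x)$'' is unsupported. The de-Poissonization step is fine as described.
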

The distribution on the right in Theorem \ref{t:long} is the famous
Tracy-Widom distribution for the largest eigenvalue of a GUE matrix in
the edge scaling limit.  Theorem 1 is one of a very large number of
probabilistic problems  in combinatorics and related areas, whose solution
is expressed in terms of \tit{Random Matrix Theory (RMT)} via Painlev\'e
functions (see, e.g., \cite{Baik2017}).


The \tit{key} question is the following:  Can we describe the solutions of the Painlev\'e equations as precisely
as we can describe the solutions of the classical special functions such
as Airy, Bessel, $\dots$ ? In particular, can we describe the solutions of
the Painlev\'e equations asymptotically with arbitrary precision and solve
the connection/scattering problem as in \eqref{11} and \eqref{12} for the Airy
equation (or any other of the classical special functions):
\begin{equation*}
	\text{known behavior as }\; x\to + \infty
	\qquad \Rightarrow\qquad
	\text{known behavior as }\; x\to - \infty
\end{equation*}
and vice versa.

As we have indicated, at the technical level, connection formulae such
as \eqref{11} and \eqref{12} can be obtained because of the existence of an integral
representation such as \eqref{10} for the solution.  Once we have such a
representation the asymptotic behavior is obtained by applying the
(classical) steepest descent method to the integral.  There are,  however,
no known integral representations for solutions of the Painlev\'e equations
and we are led to the following questions:

\noi \tbf{Question 1:}  Is there an analog of an integral representation
for solutions of the Painlev\'e equations?

\noi \tbf{Question 2:}  Is there an analog of the classical steepest
descent method which will enable us to extract precise asymptotic information
about solutions of the Painlev\'e equations from this analog representation?

The answer to both questions is \tit{yes}: In place of an integral
representation such as \eqref{10}, we have a \tit{Riemann--Hilbert Problem (RHP)},
and in place of the classical steepest descent method we have the
\tit{nonlinear (or non-commutative) steepest descent method} for RHPs (introduced by P.~Deift
and X.~Zhou \cite{deiftzhoumkdv}).

So what is a RHP?  Let $\Sg$ be an oriented contour in the plane, see Figure~\ref{fig:2}.
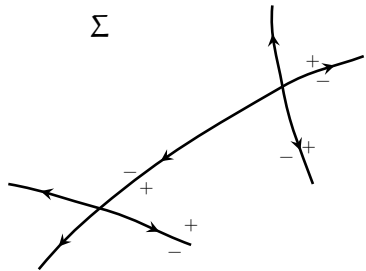
\begin{figure}[H]
\centering
\begin{tikzpicture}[scale =1.35]
\draw[line width = 1, directed] (0,0) to [out=220, in=50](-.6,-.6);
\draw[line width = 1, directed] (0,0) to [out=165, in=-10](-.9,.24);
\draw[line width = 1, directed] (0,0) to [out=-15, in=160](.9,-.36);
\draw[line width = 1, directed] (1.8,1.2) to [out=210, in = 40] (0,0);
\draw[line width = 1, directed] (1.8,1.2) to [out=100, in = 265] (1.7,2);
\draw[line width = 1, directed] (1.8,1.2) to [out=280, in = 110] (2.1,.24);
\draw[line width = 1, directed] (1.8,1.2) to [out=30, in = 200] (2.6,1.5);
\node[above] at (.3, .2) {\scalebox{.7}{$-$}};
\node[right] at (.3, .2) {\scalebox{.7}{$+$}};
\node[above] at (.9, -.32) {\scalebox{.7}{$+$}};
\node[below left] at (.9, -.28) {\scalebox{.7}{$-$}};
\node[above] at (2.1, 1.3) {\scalebox{.7}{$+$}};
\node[below] at (2.2, 1.4) {\scalebox{.7}{$-$}};
\node[right] at (1.9, .6) {\scalebox{.7}{$+$}};
\node[left] at (2, .5) {\scalebox{.7}{$-$}};
\node at (0, 1.8) {\scalebox{1.2}{$\Sigma$}};
\end{tikzpicture}
\caption{\label{fig:2} An oriented contour in the plane.} 
\end{figure}
\noi By convention, if we move along an arc in $\Sg$ in the direction of the
orientation, the $(\pm)$-sides lie on the left (resp.\ right).  Let $v:
\Sg \to \textup{GL} (k, \CC)$, the \tit{jump matrix}, be an invertible
$k \times k$ matrix function defined on $\Sg$ with
\beqs
v, v^{-1} \in L^\infty (\Sg). 
  \eeqs
We say that an $n\times k$ matrix function $m(z)$ is a \tit{solution}
of the RHP $(\Sg, v)$ if\\
\begin{minipage}{.6\textwidth}
\begin{align}
&m(z) \; \text{ is analytic in } \CC/\Sg, \notag\\
&m_+(z) = m_-(z) v(z), z \in \Sg, \notag \\
&\text{where } m_\pm (z) = \lim_{z' \to z_\pm }  m(z').\notag
\end{align}
\end{minipage}
\begin{minipage}{0.4\textwidth}
\begin{tikzpicture}[scale=1]
\fill (1,0.6) circle (.05cm);
\draw[line width=1, directed] (0,0) to (2,1.2);
\draw[dashed] (1,0) to (1,.6);
\draw[dashed] (.4,.9) to (1,.6);
\node[above left] at (1.8, .9) {\scalebox{.7}{$+$}};
\node[below] at (1.8, 1) {\scalebox{.7}{$-$}};
\node[below right] at (1,.6) {$z$};
\node[below right] at (0.8,0){$z' \to z^{-}$};
\node[above left] at (.4, .9) {$z' \to z^{+}$};
\end{tikzpicture} 
\end{minipage}
If, in addition, $n=k$ and
\beqs
m(z) \to I_k \quad\text{as }\quad z\to \infty, 
\eeqs
we say that $m(z)$ solves the \tit{normalized} RHP $(\Sg, v)$.

RHPs involve a lot of technical issues.  In particular
\begin{itemize}
\item How smooth should $\Sg$ be?
\item What measure theory/function spaces are suitable for RHPs?
\item What happens at points of self intersection (see Figure~\ref{fig:3})?
\end{itemize}

\begin{figure}[H] 
\centering
\begin{tikzpicture}
\fill (0,0) circle (.066cm);
\draw[line width = 1, directed] (0,0) to [out=30, in = 190](1.5, .6);
\draw[line width = 1, directed] (1.2,-.6) to [out=150, in = -20](0,0);
\draw[line width = 1, directed] (0,0) to [out=165, in = 0](-1.5, .3);
\draw[line width = 1, directed] (0,0) to [out=210, in = 45](-1.1, -.8);
\end{tikzpicture}
\caption{\label{fig:3} A point of self intersection.} 
\end{figure}
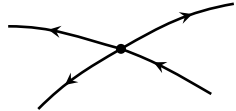

\begin{itemize}
\item In what sense are the limits $m_\pm(z)$ achieved?
\item In the case $n=k$, in what sense is the limit $m(z) \to I_k$
achieved?
\item Does an $n\times k$ solution exist?
\item In the normalized case, is the solution unique?
\end{itemize}
And most importantly
\begin{itemize}
\item at the analytical level, what kind of problem is a RHP?  As we will
see, the problem reduces to the analysis of singular integral equations on
$\Sg$.
\end{itemize}

There is not enough time in these 4 lectures to address all these issues
systematically.  Rather we will address specific issues as they arise.

As an example of how things work, we now show how PII is related to
a RHP
(see, e.g. \cite{FokasPainleve}). Let $\Sg$ denote the union of six rays
\[
\Sg_k = e^{i(k-1)\, \pi/3}\; \rho,\quad \rho>0, \quad 1\le  k \le 6
\]
oriented outwards.
Let $p, q, r$ be complex numbers satisfying the relation
\beq
p + q+r + pqr =0.
\label{25}
\eeq
Let $v(z), \;\;z\in \Sg$, be constant on each ray as indicated in Figure~\ref{fig:4}
and for \tit{fixed} $x \in \CC$ set
\beqs
v_x(z) = \begin{pmatrix}
e^{-i\theta} &0 \\
0 & e^{i\theta}
\end{pmatrix} \; v(z) \;
\begin{pmatrix}
e^{i\theta} &0 \\
0 & e^{-i\theta}
\end{pmatrix}, \qquad z\in \Sg
\eeqs
where
\beqs
\theta = \theta_x(z) = \frac{4}{3} z^3 + xz.
\eeqs
Thus for $z\in \Sg_3$
\[
v_x(z) = \begin{pmatrix}
1 & r\, e^{-2i\theta}\\
0 &1
\end{pmatrix}
\]
and so on.

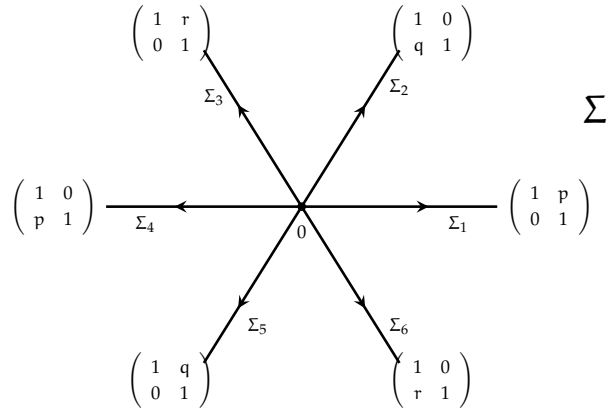
\begin{figure}[H]
\centering
\begin{tikzpicture}[scale=1.3]
\node at (3, 1) {\scalebox{1.5}{$\Sigma$}};
\fill (0,0) circle (.044cm);
\node [below] at (0,-.1) {\scalebox{.7}{$0$}};
\draw[line width=1, directed] (0,0) to (2,0);
\node[below] at (1.6, 0){\scalebox{.7}{$\Sigma_1$}};
\draw[line width=1, directed] (0,0) to (-2,0);
\node[below] at (-1.6, 0){\scalebox{.7}{$\Sigma_4$}};
\draw[line width=1, directed] (0,0) to (1,1.6);
\node[below right] at (.8, 1.4) {\scalebox{.7}{$\Sigma_2$}};
\draw[line width=1, directed] (0,0) to (1,-1.6);
\node[above right] at (.8, -1.4) {\scalebox{.7}{$\Sigma_6$}};
\draw[line width=1, directed] (0,0) to (-1,1.6);
\node[below left] at (-.7, 1.3) {\scalebox{.7}{$\Sigma_3$}};
\draw[line width=1, directed] (0,0) to (-1,-1.6);
\node[above right] at (-.65, -1.4) {\scalebox{.7}{$\Sigma_5$}};
\node[right] at (2,0) {\scalebox{.7}{$\left(\begin{array}{cc}
1 & p \\
0 & 1
\end{array}\right)$}};
\node[left] at (-2,0) {\scalebox{.7}{$\left(\begin{array}{cc}
1 & 0 \\
p & 1
\end{array}\right)$}};
\node[above right] at (.8, 1.4) {\scalebox{.7}{$\left(\begin{array}{cc}
1 & 0 \\
q & 1
\end{array}\right)$}};
\node[below right] at (.8, -1.4) {\scalebox{.7}{$\left(\begin{array}{cc}
1 & 0 \\
r & 1
\end{array}\right)$}};
\node[above left] at (-.8, 1.4) {\scalebox{.7}{$\left(\begin{array}{cc}
1 & r \\
0 & 1
\end{array}\right)$}};
\node[below left] at (-.8, -1.4) {\scalebox{.7}{$\left(\begin{array}{cc}
1 & q \\
0 & 1
\end{array}\right)$}};
\end{tikzpicture}
\caption{\label{fig:4}  Six rays oriented outwards. } 
\end{figure}

For \tit{fixed} $x$, let $m_x(z)$ be the $2 \times 2$ matrix solution of the
normalized RHP $(\Sg, v_x)$.  Then
\beqs
u(x) = 2i(m_1(x))_{12}
\eeqs
is a solution of the PII equation where
\beqs
m_x(z) = I + \frac{m_1 (x)}{z} + O\lf(\frac{1}{z^2}\rt)
\eeqs
as $z\to\infty$.  (This result is due to Jimbo and Miwa \cite{Jimbo}, and
independently to Flaschka and Newell \cite{Flaschka}.)
The asymptotic behavior of $u(x)$ as $x\to \infty$ is then obtained from the
RHP $(\Sg, v_x)$ by the nonlinear steepest descent method.

In the classical steepest descent method for integrals such as \eqref{10} above, the contour $\Sigma$ is deformed so that the integral passes through a stationary phase point where the integrand is maximal and the main contribution to the integral then comes from a neighborhood of this point.   The nonlinear (or non-commutative) steepest descent method for RHPs involves the same basic ideas as in the classical scalar case in that one deforms the RHP, $\Sigma \to \Sigma'$, in such a way that the exponential terms (see e.g. $e^{2 i \theta}$ above) in the RHP have maximal modulus at points of the deformed contour $\Sigma'$.  The situation is far more complicated than the scalar integral case, however, as the problem involves matrices that do not commute.  In addition, terms of the form $e^{-2 i \theta}$ also appear in the problem and must be separated algebraically from terms involving $e^{2 i \theta}$, so that in the end the terms involving $e^{2 i \theta}$ and $e^{-2i \theta}$ both have maximal modulus along $\Sigma'$ (see \cite{deiftzhoumkdv,deiftzhounls,Deift1995a}).  A~simple example of the nonlinear steepest descent method is given at the end of Lecture 4.

One finds, in particular, (\cite{Deift1995a}, and also \cite{Its1994,FokasPainleve}) the following:

Let $-1 < q <1, \; p=-q, \; r=0$.  Then as $x\to -\infty$,
\begin{equation}\label{30} 
u(x) = \frac{\sqrt{2\nu}}{(-x)^{1/4}} \;\cos \lf(\frac{2}{3} \lf(-x\rt)^{3/2}
-\frac{3}{2} \; \nu \;\log (-x) + \phi\rt)
+ O \lf(\frac{\log (-x)}{(-x)^{5/4}}\rt) 
\end{equation}
where
\beq
\nu= \nu(q) = -\frac{1}{2\pi} \;\log \lf(1-q^2\rt)\label{31}
\eeq
and
\beq
\phi = -3 \nu \; \log 2 + \arg \Ga(i\nu) + \frac{\pi}{2}\; \sgn (q)
- \frac{\pi}{4}.
\label{32}
\eeq
As $x\to + \infty$
\beq
u(x) = q \;Ai(x) + O\lf(\frac{e^{-4/3 \;x^{3/2}}}{x^{1/4}} \rt).
\label{33}
\eeq

These asymptotics should be compared with \eqref{11}, \eqref{12} for the Airy function.
Note from \eqref{11} that as $x\to + \infty$
\[
Ai(x) \sim x^{-1/4} e^{-2/3\; x^{3/2}}.
\]
Also observe that PII
\[
u''(x) = x\; u(x) + 2\, u^3(x)
\]
is a clearly a nonlinearization of the Airy equation
\[
u''(x) = x\; u(x)
\]
and so we expect similar solutions when the nonlinear
term $2\,u^3(x)$ is small.

Also note that \eqref{30} and \eqref{31} solve the connection problem for PII.  If
we know the behavior of the solutions $u(x)$ of PII as $x\to +\infty$,
then we certainly know $q$ from \eqref{33}.  But then we know $\nu=\nu(q)$
and $\phi=\phi(q)$ in \eqref{31} and \eqref{32} and hence we know the asymptotics
of $u(x)$ as $x\to -\infty$ from \eqref{30}.  Conversely, if we know the
asymptotics of $u(x)$ as $x\to -\infty$, we certainly know $\nu >0$ from
\eqref{30} and hence we know $q^2$ from \eqref{31}, $q^2=1- e^{-2\pi\,\nu}$.  But
then again from \eqref{30}, we know $\phi$, and hence $\sgn(q)$ from \eqref{32}.  Thus
we know $q$, and hence the asymptotics of the solution $u(x)$ as $x\to +\infty$
from \eqref{33}.  Finally note the similarity of the multiplier
\beq
e^{x\, z- \frac{1}{3}\;z^3}
\label{34}
\eeq
for the Airy equation with the multiplier
\beq
e^{i\theta}= e^{i\lf(x\;z + \frac{4}{3}\;z^3\rt)}
\label{35}
\eeq
in the RHP for PII.  Setting $z\to i\;z$ in \eqref{34}
\[
e^{x\;z - \frac{1}{3} \;z^3} \to e^{i\lf(x\;z + \frac{1}{3}\;z^3\rt)}
\]
which agrees with \eqref{35} up to appropriate scalings.

Also note from \eqref{25} that PII is parameterized by parameters lying on a
2-dim variety:  this corresponds to the fact that PII is second order.

The fortunate and remarkable fact is that the class of problems in physics, mathematics, and engineering expressible in terms of a RHP is very broad and growing.  Here
is one more, with more to come!

The RHP for the MKdV equation \eqref{13} is as follows (see e.g., \cite{deiftzhoumkdv}):  Let $\Sg=\RR$, oriented from $-\infty$ to $+\infty$.  For
fixed $x, \; t \in \RR$ let
\beq
v_{x,t}(z) = \begin{pmatrix}
1-|r(z)|^2 &-\ovl{r(z)} \;e^{-2i\tau}\\
r(z) \;e^{2i\tau}  &1
\end{pmatrix}, \qquad z\in \RR
\label{36}
\eeq
where $\tau=\tau_{x,t}(z) = xz +4 t z^3$ and $r=r(z)$ is a given function
in $L^\infty(\RR) \cap L^2(\RR)$ with
\beqs
\|r\|_\infty <1
\eeqs
and
\beqs
r(z) =- \ovl{r(-z)}, \quad z\in \RR. 
\eeqs
There is a bijection from the initial data $u(x,t=0)=u_0(x)$ for
MKdV onto such functions $r(z)$ --- see later.  The function $r(z)$ is
called the \tit{reflection coefficient} for $u_0$, see (\ref{eq:reflect}).

Let $m=m_{x,t}(z)$ be the solution of the normalized RHP
$\lf(\Sg, v_{x,t}\rt)$.  Then
\beq
u(x,t) = 2\lf(m_1 (x,t)\rt)_{12} \ ,
\label{39}
\eeq
is the solution of MKdV with initial condition $u(x,t=0)=u_0(x)$ corresponding
to $r(z)$.
Here
\[
m_{x,t}(z) = I + \frac{m_1(x,t)}{z} + O\lf(\frac{1}{z^2}\rt)
\]
as $z\to \infty$.

\begin{figure}[H]
\centering
\begin{tikzpicture}[scale=0.6]
\fill (0,0) circle (.06cm);
\draw[line width =1, directed] (-2, 0) to (0,0);
\draw[line width =1, directed] (0, 0) to (2,0);
\node[below] at (0,0){$\Sigma = \mathbb{R}$};

\node at (4,0) {\scalebox{1.5}{$\to$}};

\draw[line width =1, directed] (6, 0) to (8,0);
\draw[line width =1, directed] (8, 0) to (10,0);
\draw[line width =1, directed] (7, -1.5) to (8,0);
\draw[line width =1, directed] (8, 0) to (9,1.5);

\node at (4,-4) {\scalebox{1.5}{$\to$}};

\draw[line width =1, directed] (6, -4) to (8,-4);
\draw[line width =1, directed] (8, -4) to (10,-4);
\draw[line width =1, directed] (7, -5.5) to (8,-4);
\draw[line width =1, directed] (8, -4) to (9,-2.5);
\draw[line width =1, dashed] (7, -2.5) to (9,-5.5);
 
\node[right] at (9, -5.5) {\scalebox{.7}{$\left(\begin{array}{cc}
1 & 0\\
r=0 & 1
\end{array}\right)$}};

\node[left] at (7, -2.5) {\scalebox{.7}{$\left(\begin{array}{cc}
1 & r=0\\
0 & 1
\end{array}\right)$}};
\end{tikzpicture}
\caption{\label{fig:5} Obtaining six rays. } 
\end{figure}
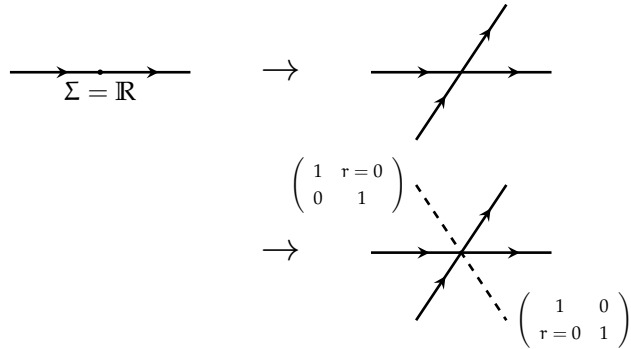

The asymptotic result \eqref{14} is obtained by applying the nonlinear steepest
descent method to the RHP $\lf(\Sg, v_{x,t}\rt)$  in the region $|x| \le c
\;t^{1/3}$.  In this case PII emerges as the RHP $\lf(\Sg, v_{x,t}\rt)$ is
``deformed'' into the RHP $\lf(\Sg, v_x\rt)$ in Figure~\ref{fig:4}.

As we will see, RHPs are useful not only for asymptotics, but also they can
be used to determine symmetries and formulae/identities/equations, and also
for analytical purposes.

\section*{Lecture 2}
\setcounter{equation}{0}\setcounter{section}{2}\addcontentsline{toc}{section}{Lecture 2}

We now consider some of technical issues that arise for RHPs, which are
listed  with bullet points above.

A key role in RH theory is played by the \tit{Cauchy operator}.  We first
consider the case when $\Sg=\RR$.  Here the Cauchy operator $C= C_{\RR}$ is
given by
\beqs
C\;f(z) = \int_{\RR} \frac{f(s)}{s-z}\;\dts, \qquad z \in \CC/\RR,\,
\quad \dts= \frac{ds}{2\pi i}
\eeqs
for suitable functions $f$ on $\RR$ (General refs for the case $\Sg=\RR$, and
also when $\Sg = \{|z|=1\}$, are \cite{Duren} and \cite{Garnett2007}.)  Assume first that
$f\in \mS (\RR)$, the Schwartz space of functions on $\RR$.
Let $z=x+i \, \ep, \; x\in \RR, \; \ep >0$.
Then
\begin{align*}
C\, f(x+i\, \ep) &= \int_\RR \frac{f(s)}{s-x-i\,\ep}\; \dts\\
&= \int_\RR f(s) \frac{i\,\ep}{(s-x)^2 + \ep^2} \;\dts +
\int_\RR f(s) \;\frac{s-x}{(s-x)^2+ \ep^2} \;\dts\\
&=\half \int_\RR f(s) \frac{1}{\pi} \;\frac{\ep}{(s-x)^2+ \ep^2}\;ds
+ \frac{1}{2\pi i} \int_\RR f(s) \; \frac{s-x}{(s-x)^2+ \ep^2}\;ds
\\
&:= I + II.
\end{align*}
Now
\[
I= I_\ep = \half \int_{\RR} \;\frac{f(x+\ep \,u)}{\pi (u^2+1)}\;du.
\]
Then, by dominated convergence,
\beqs
\lim_{\ep \downarrow 0} I_\ep = f(x) \frac{1}{2\pi} \int_{\RR}
\frac{du}{u^2+1} = \half \; f(x).
\eeqs
Write
\[
II = II_\ep = II_{< \ep} + II_{> \ep}
\]
where
\beqs
II_{< \ep} = \frac{1}{2\pi i}\;\int_{|x-s|< \ep } f(s)\;
\frac{s-x}{(s-x)^2 + \ep^2} \;ds
\eeqs
and
\beqs
II_{>\ep} = \frac{1}{2\pi i} \int_{|x-s|> \ep}  f(s) \; \frac{s-x}{(s-x)^2
+ \ep^2} \; ds.
\eeqs
As 
\[
\frac{s-x}{(s-x)^2+ \ep^2}
\]
is an odd function about $s=x$, $II_{<\ep}$
can be written as
\[
II_{< \ep} = \frac{1}{2\pi i} \int_{|s-x| < \ep} \;\frac{s-x}{(s-x)^2 + \ep^2}
\lf(f(s) - f(x) \rt) ds
\]
and so
\begin{align}
\lf| II_{< \ep}\rt| &\le \frac{1}{2\pi} \;\|f'\|_{L^\infty} \int_{|s-x|
< \ep} \: \frac{(s-x)^2}{(s-x)^2+ \ep^2} \;ds \notag \\
&\le\; \|f'\|_{L^\infty} \;\frac{2\ep}{2\pi}  \notag
\end{align}
which goes to $0$ as $\ep \downarrow 0$.  Finally
\begin{align*}
II_{> \ep } &= \frac{1}{2\pi i} \int_{|s-x|> \ep} \lf( \frac{s-x}{(s-x)^2
+ \ep^2} - \frac{1}{s-x} \rt) f(s)\, ds \\
&+ \frac{1}{2\pi i} \int_{|s-x|>\ep} \;\frac{1}{s-x}\;f(s)\, ds \\
&= III_\ep  + IV_\ep.
\end{align*}
We have
\begin{align*}
\lf| III_\ep\rt| &= \frac{1}{2\pi} \lf| \int_{|s-x|>\ep} \;
\frac{\ep^2}{(s-x)^2+\ep^2} \: \frac{f(s)}{s-x} \; ds\, \rt| \\
&= \frac{1}{2\pi} \lf| \int_{|u|>1} \;\frac{1}{u^2+1} \;
\frac{f(x+\ep \,u)}{u} \; du \rt|
\end{align*}
and so as $\ep \downarrow 0$, again by dominated convergence,
\[
| III_\ep | \to \frac{1}{2\pi} \; \lf|f(x)\rt| \lf| \int_{|u|>1}
\;\frac{du}{u(u^2+1)} \rt| =0
\]
as the final integrand is odd.

Thus we see that for $\Sg = \RR$ and $f\in \mS(\RR)$
\beqs
C^+  f(x) \equiv \lim_{\ep \downarrow 0} C f \lf(x+i\,\ep\rt) = \half
\;f(x) + \frac{i}{2} Hf(x)
\eeqs
where
\beqs
Hf(x) = \lim_{\ep \downarrow 0} \frac{1}{\pi} \int_{|s-x|> \ep}
\;\frac{f(s)}{x-s} \;ds.
\eeqs
$H f(x)$ is called the \tit{Hilbert transform} of $f$.
Note that
\begin{align*}
\frac{1}{\pi} \int_{|s-x|> \ep} \; \frac{f(s)}{x-s}\; ds &=
\frac{1}{\pi} \int_{|s-x|>1} \; \frac{f(s)}{x-s}\; ds \\
&+ \frac{1}{\pi} \int_{\ep < |s-x|<1} \;\frac{f(s) - f(x)}{x-s}\;ds
\end{align*}
which converges to
\[
\frac{1}{\pi} \int_{|s-x|>1} \;\frac{f(s)}{x-s}\;ds + \frac{1}{\pi}
\int_{|s-x|<1} \; \frac{f(s)- f(x)}{x-s}\;ds
\]
as $\ep \downarrow 0$, so that $\lim_{\ep \downarrow 0} \; \frac{1}{\pi}
\int_{|s-x|> \ep }\; \frac{f(s)}{x-s}\;ds$ indeed exists pointwise for
$f\in \mS$.

Similarly one finds
\beqs
C^- f(x) \equiv \lim_{\ep \downarrow 0} C f(x-i\ep) = -\half
\; f(x) + \frac{i}{2} \; Hf(x), \qquad x\in \RR
\eeqs
and we obtain the fundamental relations for $f\in \mS$
\beq
C^+ f - C^- f = f \label{48}
\eeq
and
\beqs
C^+ f+ C^- f = i Hf.
\eeqs

\begin{exer}
Show that the limits $C^\pm f(x)= \lim_{\ep \downarrow 0}
Cf(x\pm i\ep)$ are in fact \tit{non-tangential limits} i.e. $C^+ f(x) =
\lim_{z' \to x} \; C f(z')$ where $z'$ lies in a cone of arbitrary
opening angle $\al < \pi$ (see Figure~\ref{fig:6}), and similarly for $C^- f(x)$ (see refs. \cite{Ontheline,Bottcher,ClanceyGohberg}).

\begin{figure}[H]
\centering
\begin{tikzpicture}[scale=1.8]

\filldraw[draw=none, fill = {rgb:black,1;white,6}]  (0,0) -- (1.5,2.4) arc(58:122.5:2.8cm) -- cycle;

\fill (0,0) circle (.06cm);
\node[below] at (0,0){$x$};

\fill (.7, 2.2) circle (.06cm);
\node[right] at (.7,2.2){$z'$};

\draw (-2,0) to (2,0);
\draw (0,0) to (0,3);
\draw(0,0) to (1.5,2.4);
\draw(0,0) to (-1.5,2.4);

\draw[line width = 1, domain=58:122]  plot ({ cos(\x) }, { sin(\x)});
\node[above] at (-.4, 1){$\alpha$};

\end{tikzpicture}
\caption{\label{fig:6} A cone of opening angle $\alpha$.} 
\end{figure}
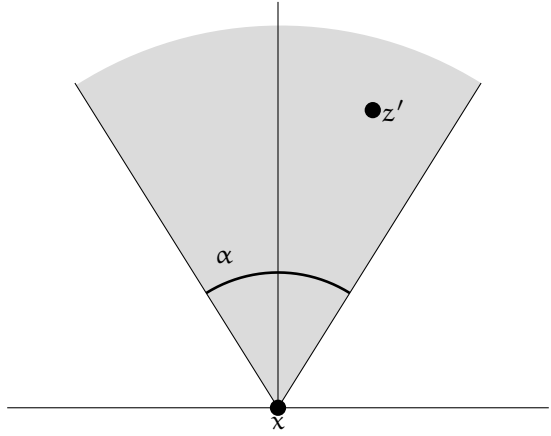

\end{exer}
A critical property of the singular integral operator $H$, and hence the
operators $C^\pm$, is that, as we now show, $H$ is a bounded operator
from $L^p (\RR) \to L^p(\RR)$ for all $1<p < \infty$.
To prove the result for $L^2$, recall that the Fourier transform
\beqs
\hf(z) = \msF\,f(z) \equiv \lim_{R\to\infty} \int^R_{-R} \;e^{-izt}\; f(t)\;
\frac{dt}{\sqrt{2\pi}}, \qquad z\in \RR
\eeqs
and the inverse Fourier transform
\beqs
\chf(x) = \msF^{-1} \, f(x) \equiv \lim_{R\to \infty} \int^R_{-R} e^{izt}\;
f(t) \;\frac{dt}{\sqrt{2\pi}}, \qquad x\in \RR
\eeqs
are unitary maps
\beqs
\|f\|_{L^2} = \|\hf\|_{L^2} = \| \chf\|_{L^2}
\eeqs
from $L^2$ onto $L^2$.  Moreover
\beqs
\msF\; \text{ and }\; \msF^{-1} \quad\text{are indeed inverse to
each other, } 
\eeqs
\[
\lf( \hf \rt)^{\!\!\scriptscriptstyle\vee} = f = \lf(\chf\rt)^{\!\scriptscriptstyle\wedge}\ ,
\qquad f\in L^2(\RR).
\]
For $f\in \mS(\RR)$, fix $\ep > 0$, and set
\beq
\lf(C_\ep \, f\rt) (x) \equiv Cf(x+i \ep ).
\notag 
\eeq
Then
\begin{equation}
	\begin{split}\label{55}
		\lf(\msF\,C_\ep\,f\rt)(z) &= \lim_{R\to\infty} \int^R_{-R} \lf(\int_\RR
		\frac{f(s)}{s-x- i\ep}\;\dts\rt) e^{-ixz}\;\frac{dx}{\sqrt{2\pi}}
		\\
		&= \lim_{R\to\infty} \int_{\RR} \frac{f(s)}{\sqrt{2\pi}}
		\lf( \int^R_{-R} \frac{e^{-ixz}}{s-x-i\ep}\;\dtx\rt) ds,
	\end{split}
\end{equation}
by Fubini's Theorem. 

Now for $s$ fixed and $R$ large, and $z>0$
\begin{align*}
&\int^R_{-R} \frac{e^{-ixz}}{s-x-i\ep}\;\dtx =- \int^R_{-R} \frac{e^{-ixz}}{
x-(s-i\ep)} \; \dtx\\
&\qquad= \int\displaylimits_{\begin{tikzpicture}[scale=.5]
\draw[domain=180:360, directed] plot ({cos(\x)}, {sin(\x)});
\draw[directed] (1,0) to (-1,0);
\node[right] at (1,0) {\scalebox{.7}{$R$}};
\node[left] at (-1,0) {\scalebox{.7}{$-R$}};
\fill (.6, -.3) circle (.06cm);
\node[right] at (1.4, -1) {\scalebox{.7}{$s-i\ep$}};
\draw[->] (1.4, -1) to [out=180, in=270] (.2, -.4) to (.5,  -.3);
\end{tikzpicture}}  \;\frac{e^{-ixz}}{x-(s-i\ep)} \;\dtx - \int \displaylimits_{\begin{tikzpicture}[scale=.5]
\fill (0,0) circle(.06cm);
\draw[domain=180:360, directed] plot ({cos(\x)}, {sin(\x)});
\node[right] at (1,0) {\scalebox{.7}{$R$}};
\node[left] at (-1,0) {\scalebox{.7}{$-R$}};
\end{tikzpicture}}\;
\frac{e^{-ixz}}{x-(s-i\ep)}\;\dtx\\
&\qquad =e^{-i(s-i\ep)z} - \int\displaylimits_{\begin{tikzpicture}[scale=.5]
\fill (0,0) circle(.06cm);
\draw[domain=180:360, directed] plot ({cos(\x)}, {sin(\x)});
\node[right] at (1,0) {\scalebox{.7}{$R$}};
\node[left] at (-1,0) {\scalebox{.7}{$-R$}};
\end{tikzpicture}} \;\frac{e^{-ixz}}{x-(s-i\ep)}\;\dtx.
\end{align*}
\begin{exer} Show that, for $z>0$, we have
\beqs
\lim_{R\to\infty}
\int\displaylimits_{\begin{tikzpicture}[scale=.5]
\fill (0,0) circle(.06cm);

\draw[domain=180:360, directed] plot ({cos(\x)}, {sin(\x)});
\node[right] at (1,0) {\scalebox{.7}{$R$}};
\node[left] at (-1,0) {\scalebox{.7}{$-R$}};
\end{tikzpicture}} \;\frac{e^{-ixz}}{x-(s-i\ep)}\; \dtx =0\,.
\eeqs
\end{exer}
Hence for $s$ fixed and $z>0$
\beqs
\lim_{R\to\infty} \int_{-R}^R \;
\frac{e^{-ixz}}{s-x-i\ep}\; \dtx =e^{-isz}\;e^{-\ep z}.
\eeqs
But we also have

\begin{exer} For $ z>0, $
\[
\int_{-R}^R \;\frac{e^{-ixz}}{s-x-i\ep}\;\dtx
\]
is bounded in $s$ uniformly for $R>0$.
\end{exer}

It follows that we may take the limit $R\to\infty$ in \eqref{55} in the
$s$-integral and so for $z>0$
\beqs
\msF\,C_\ep\,f(z)= \int_{\RR} \,f(s)\, e^{-isz}\;e^{-\ep z} \;
\frac{ds}{\sqrt{2\pi}} = e^{-\ep z} \;\msF f(z).
\eeqs

\begin{exer}
	Show, by a similar argument, that
\beqs
\msF\, C_\ep\, f(z) =0 \quad\text{for} \quad z<0
\eeqs
\end{exer}
\noindent Thus
\beqs
C_\ep \, f = \msF^{-1} \lf(\chi_{>0}(\cdot) \, e^{-\ep\,\cdot}\rt) \msF f
\eeqs
\text{where}
\begin{equation*}
\chi_{>0}(z) \,e^{-\ep z}   = \begin{cases} e^{-\ep z}
    & \text{for~}  z>0; \\
0 & \text{for~} z<0.\end{cases}
\end{equation*}

Now as $\mS(\RR)$ is dense in $L^2$, and as $\msF^{-1} (\chi_{>0}(\cdot) \,
e^{-\ep\cdot})\,\msF$ is clearly bounded in $L^2$ it follows that
\beqs
C_\ep\,f \quad\text{extends to a bounded operator on $L^2$}.
\eeqs
Moreover
\beqs
\hC\, f(x)\equiv \msF^{-1} \, \chi_{>0}(\cdot)\, \msF\,f
\eeqs
is clearly also a bounded operator in $L^2$ and for $f\in L^2$
\begin{align*}
\|C_{\ep}\,f-\hC\,f \|_{L^2} &= \|\msF^{-1}\, \chi_{>0}(\cdot)
\lf(e^{-\ep \,\cdot} -1\rt) \msF\,f\|_{L^2} \\
&= \| \chi_{>0}(\cdot) \lf(e^{-\ep\, \cdot}-1\rt) \msF\, f\|_{L^2}
\end{align*}
which converges to $0$ as $\ep \downarrow 0$, again by dominated convergence.
In other words, for $f\in L^2$,
\beqs \notag
C\,f(x+i\ep) = \int_{\RR} \; \frac{f(s)}{s-x-i\ep} \;\dts \to \hC\, f(x)
\quad\text{in}\quad L^2(dx).
\eeqs
In particular, it follows by general measure theory, that for some sequence
$\ep_n \downarrow 0$
\begin{align}
C\,f(x+i\,\ep_n) \to \hC\,f(x)
\label{eq:ptwse}
\end{align}
pointwise a.e.  In particular \eqref{eq:ptwse} holds for $f\in \mS(\RR)$.  But then by
our previous calculations, $C\,f(x+i\,\ep_n)$ converges pointwise for all
$x$, and we conclude that for $f\in \mS$ and a.e.\ $x$
\beqs
C^+\, f(x) = \half \;f(x) + \frac{i}{2} \; H\, f(x) = \hC\, f(x)
\qquad\qquad \text{a.e.}\; x.
\eeqs
Thus $C^+\,f$ and, hence $H\,f$, extend to  bounded operators on $L^2(\RR)$
and
\begin{align*}
\half\;\hf+ \frac{i}{2} \; \msF\,H\, f&= \msF\,\hC\,f= \chi_{>0}\,\hf \\[-5mm]
\intertext{and so}
\msF\,H\,f(z) &= \frac{2}{i} \lf(\chi_{>0}(z)-\half\rt)
\hf(z)\\ 
&= -i\, \sgn \, (z) \;\hf(z)
\end{align*}
where $\sgn (z) =+ 1$ if $z>0$ and $\sgn(z) =-1$ if $z<0$.

We have shown the following: For $f\in L^2$,
\beqs
C^+\,f= \frac{f}{2} + \frac{i}{2}\;H\,f = \frac{f}{2} + \half
\lf(\hf \; \sgn(\cdot)\rt)^{\vee}
\eeqs
and similarly
\beqs
C^-\,f = - \frac{f}{2} + \frac{i}{2}\;H\,f =- \frac{f}{2} + \half
\lf(\hf \, \sgn (\cdot)\rt)^{\vee}.
\eeqs

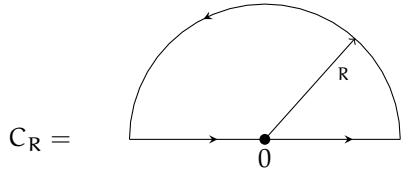
\begin{figure}[H]
\centering
\begin{tikzpicture}[scale=1.2]
\node at (0,0) {$C_R =$};
\fill (2.5,0) circle (.06cm);
\node[below] at (2.5,0) {$0$};
\draw[directed]  (1,0) to (2.5,0);
\draw[directed] (2.5, 0) to (4,0);
\draw[->] (2.5,0) to (3.5, 1.12);
\draw[domain=0:180,directed] plot ({2.5 +1.5*cos(\x)}, {1.5*sin(\x)});
\node[below right] at (3.2, .9) {\scalebox{.7}{$R$}};
\end{tikzpicture}
\caption{A semi-circle in the upper-half plane.}\label{7} 
\end{figure}

The following argument of Riesz shows that in fact $C^\pm$, and hence $H$,
are bounded in $L^p(\RR)$, for all $1< p < \infty$.  Consider first the case $p =4$. Suppose
$f\in \msC^\infty_0 (\RR)$,  the infinitely differentiable functions with
compact support.
Then as $z\to \infty$,
\[
C\,f(z) = \int_{\RR} \frac{f(s)}{s-z}\;\dts = O\lf(\frac{1}{z}\rt)
\]
and $C\,f(z)$ is continuous down to the axis.  By Cauchy's theorem
\[
\int_{C_R}  \lf(C\,f(z)\rt)^4 \, dz =0
\]
where $C_R$ is given in Figure \ref{7},
and as 
\[\int\limits_{\kern-3pt\begin{tikzpicture}[scale = .3]
\draw [domain=0:180,directed] plot ({-0.25+cos(\x)}, {-0.3+sin(\x)});
\node[right] at (0.75,-0.3){\scalebox{.5}{$R$}};
\node[left] at (-1.25,-0.3){\scalebox{.5}{$-R$}};
\end{tikzpicture}}\;\kern-11pt \lf(C\,f(z)\rt)^4 \,dz\to 0
\text{~~as~~} R\to \infty,\] 
we conclude that
\[
\int_{\RR} \lf(C^+\,f(x)\rt)^4 \, dx =0.
\]
But then as $C^+\,f= \frac{f}{2} + \frac{i}{2} \;H\, f$ we obtain
\beq
0= \int_{\RR} \lf(f^4 + 4 f^3 (Hf)\,i + 6f^2 (Hf)^2\,i^2 + 4f(Hf)^3\,i^3
+(Hf)^4\,i^4\rt) dx.
\label{65}
\eeq
Now suppose that $f$ is real.  Then $Hf$ is real and the real part
of \eqref{65} yields
\[
0= \int_{\RR} \lf(f^4-6 f^2 (Hf)^2 + (Hf)^4\rt)dx,
\]
hence
\begin{align}\notag
	\int_{\RR}(Hf)^4\,dx  &= 6 \int_{\RR} f^2(Hf)^2dx - \int_{\RR} f^4 dx\notag\\
&\le 6 \lf(\int \frac{c}{2} \; f^4 dx + \int_{\RR} \frac{1}{2c}\; (Hf)^4 dx\rt)
-\int_{\RR} f^4 dx\notag
\end{align}
for any $c>0$.  Take $c=6$.  Then
\begin{align*}
\half \int_{\RR} (Hf)^4 dx&\le (18-1) \int_{\RR} f^4dx\\
\intertext{or}
\int_{\RR} (Hf)^4 dx&\le 34 \int_{\RR} f^4dx.
\end{align*}
The case when $f$ is complex valued is handled by taking real and imaginary
parts.  Thus, by density, $H$ maps $L^4$ boundedly to $L^4$.

\begin{exer}
Show that $H$ maps $L^p\to L^p$ for all $1< p < \infty$.
Hints:
\begin{enumerate}
\item Show that the above argument works for
all even integers $p$.
\item Show that the result follows for all $p\ge 2$ by
interpolation.
\item Show that the result for $1<p<2$ now follows by duality.
\end{enumerate}
\end{exer}

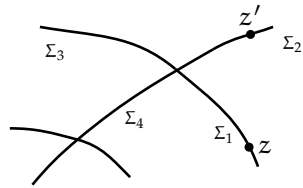
\begin{figure}[H]
\centering
\begin{tikzpicture}[scale =1.]
\fill (2.28,-.1) circle (0.06cm);
\node[right] at (2.28, -.1) {$z$};

\fill (2.3, 1.4) circle (0.06cm);
\node[above] at (2.3,1.4) {$z'$};

\draw[line width = 1] (0,0) to [out=220, in=50](-.6,-.6);
\draw[line width = 1] (0,0) to [out=165, in=0](-.9,.15);
\draw[line width = 1] (0,0) to [out=-15, in=135](.7,-.5);
\draw[line width = 1] (1.8,1.2) to [out=210, in = 40] (0,0);
\draw[line width = 1] (1.3,.94) to [out=140, in = -10] (-.5,1.5);
\draw[line width = 1] (1.3,.94) to [out=-40, in = 110] (2.4,-.36);
\draw[line width = 1] (1.8,1.2) to [out=30, in = 200] (2.6,1.5);

\node[below left] at (2.2, .3) {\scalebox{.7}{$\Sigma_1$}};
\node[below right] at (2.6, 1.5) {\scalebox{.7}{$\Sigma_2$}};
\node[below] at (-.3, 1.4) {\scalebox{.7}{$\Sigma_3$}};
\node[below right] at (.5, .5) {\scalebox{.7}{$\Sigma_4$}};
\end{tikzpicture}
\caption{\label{fig:8} Contours that self-intersect. } 
\end{figure}

\begin{exer} Show that $H$ is \tit{not} bounded from
$L^1 \to L^1$.
(However $H$ maps $L^1 \to \text{ weak } L^1$.)
As indicated in Lecture 1, RHPs take place on contours which self-intersect (see Figure~\ref{fig:8}).
\end{exer}

We will need to know, for example, that if $f$ is supported on $\Sg_1$, say,
and we consider
\[
Cf(z') = \int_{\Sg_1}\;\frac{f(z)}{z-z'}\;\dtz
\]
for $z'\in \Sg_2$, say, then $Cf(z') \in L^2(\Sg_2)$
if $f\in L^2(\Sg_1)$. Here is a prototype result which one can prove using the
Mellin transform, which we recall is the Fourier transform for the
multiplicative group $\{x>0\}$.  We have \cite[p.~88]{Ontheline}  the following:

For $f\in L^2(0, \infty)$ and $r>0$, set
\beqs
C_\theta\, f(r) = \int^\infty_0 \,\frac{f(s)}{s-\hz\,r} \;\dts, \qquad
\hz=e^{i\theta} 
\eeqs
where $0<\theta <2\pi$.  Then
\beq
\|C_\theta\,f\|_{L^2(dr)} \le c_\theta \|f\|_{L^2(ds)}
\label{68}
\eeq
where
\beqs
c_\theta = \ga^\ga\; (1-\ga)^{1-\ga}, \qquad \ga= \frac{\theta}{2\pi}.
\eeqs

One can also show that for any $1<p < \infty$
\beqs
\|C_\theta\, f\|_{L^p(dr)} \le C_{\theta, p} \,\|f\|_{L^p (ds)}
\eeqs
for some $c_{\theta, p} < \infty$.

Results such as \eqref{68} are useful in many ways.  For example, we have the
following result.

\begin{theorem}\label{thm2}
Suppose $f\in H^1(\RR)= \{ f\in L^2(\RR): f'\in  L^2(\RR)\}$.  Then $C\,f(z)$
is uniformly H\"{o}lder-$\half$ in $\ovl{\CC}_+$ and in $\ovl{\CC}_-$.
In particular, $Cf$ is continuous down to the axis in $\CC_+$ and in $\CC_-$.
\end{theorem}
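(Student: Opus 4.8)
The plan is to reduce the H\"older-$\half$ bound to an explicit $L^2$-estimate on a logarithm by integrating by parts, exploiting that only $f'\in L^2$ matters. I first prove the estimate on the open half-plane $\CC_+$ and then extend it to $\ovl{\CC}_+$ by uniform continuity; the case of $\ovl{\CC}_-$ is word-for-word the same with the half-planes interchanged. Fix $z_1,z_2\in\CC_+$; then
\[
Cf(z_1)-Cf(z_2)=\frac{1}{2\pi i}\int_\RR f(s)\lf(\frac{1}{s-z_1}-\frac{1}{s-z_2}\rt)ds .
\]
Since $f\in H^1(\RR)$, it has a continuous, bounded representative with $f(s)\to 0$ as $|s|\to\infty$, while $g(s)=\log(s-z_1)-\log(s-z_2)$ --- with the branch of $\log$ analytic on $\ovl{\CC}_-\setminus\{0\}$, which contains $s-z_j$ for all real $s$ since $z_j\in\CC_+$ --- is holomorphic in a neighbourhood of $\RR$, has $g'(s)=\frac{1}{s-z_1}-\frac{1}{s-z_2}$, and satisfies $g(s)=O(1/|s|)$ as $|s|\to\infty$. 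Hence the boundary terms vanish and integration by parts gives $Cf(z_1)-Cf(z_2)=-\frac{1}{2\pi i}\int_\RR f'(s)\,g(s)\,ds$, so by Cauchy--Schwarz it suffices to prove $\|g\|_{L^2(\RR)}^2\le C\,|z_1-z_2|$ with $C$ an absolute constant.

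This estimate is translation invariant and scales correctly under $s\mapsto\la s$ (which multiplies $\|g\|_{L^2}^2$ by $\la$), so after translating and rescaling it reduces to the claim: for $a,b\in\ovl{\CC}_+$ with $|a-b|=1$, the $L^2(\RR)$ norm of $g_{a,b}(s)=\log(s-a)-\log(s-b)$ is bounded by an absolute constant. I split $\RR$ into the far set $\{|s-b|\ge 2\}$ and the near set $\{|s-b|<2\}$. On the far set $\big|\tfrac{b-a}{s-b}\big|\le\tfrac12$, so $g_{a,b}(s)=\log\!\big(1+\tfrac{b-a}{s-b}\big)$ and $|g_{a,b}(s)|\le 2/|s-b|$; since $|s-b|^2=(s-\Re b)^2+(\Im b)^2\ge 4$ there, one has $|s-b|^2\ge\tfrac12\big((s-\Re b)^2+1\big)$, whence $\int_{\{|s-b|\ge2\}}|g_{a,b}|^2\le 4\int_\RR\frac{2\,ds}{(s-\Re b)^2+1}=8\pi$. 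On the near set --- an interval of length at most $4$, contained in $\{|s-a|<3\}$ since $|a-b|=1$ --- we bound $|g_{a,b}(s)|\le \big|\ln|s-a|\big|+\big|\ln|s-b|\big|+\pi$ (the imaginary part is bounded because $s-a,\,s-b\in\ovl{\CC}_-$), hence $|g_{a,b}|^2\le 3\big((\ln|s-a|)^2+(\ln|s-b|)^2+\pi^2\big)$. The point is that for every $w\in\ovl{\CC}_+$ and every $R>0$, $\int_{\{|s-w|<R\}}(\ln|s-w|)^2\,ds\le C_R$ with $C_R$ independent of $w$: writing $w=\al+i\beta$, the integral is at most $\tfrac14\int_{|u|<R}\big(\ln(u^2+\beta^2)\big)^2\,du$, and $\big(\ln(u^2+\beta^2)\big)^2\le(\ln 2R^2)^2+4(\ln|u|)^2\,\mathbf{1}_{|u|<1}$, which is integrable over $\{|u|<R\}$ with bound independent of $\beta\ge0$. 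Applying this with $R=3$ and $R=2$ bounds the near-set integral by an absolute constant, and combining the two pieces gives $\|g_{a,b}\|_{L^2}^2\le C$ as desired.

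Undoing the rescaling yields $\|g\|_{L^2(\RR)}^2\le C\,|z_1-z_2|$, so $|Cf(z_1)-Cf(z_2)|\le\frac{\sqrt{C}}{2\pi}\|f'\|_{L^2}\,|z_1-z_2|^{1/2}$ for all $z_1,z_2\in\CC_+$; by uniform continuity $Cf$ extends to $\ovl{\CC}_+$ with the same H\"older-$\half$ bound, and in particular is continuous down to $\RR$ from above. The same argument with $\ovl{\CC}_-$ in place of $\ovl{\CC}_+$ (so that $\log$ is now chosen analytic on $\ovl{\CC}_+\setminus\{0\}$) handles $\CC_-$. The main obstacle is precisely the uniform $L^2$ bound on $g$ as $z_1,z_2$ approach the real axis, where $g$ develops logarithmic singularities; the translation-and-scaling reduction together with splitting $\RR$ at a length scale comparable to $|z_1-z_2|$ is what keeps all constants absolute. (Alternatively one can run the same estimate through the Fourier representation $Cf(z)=\frac{1}{\sqrt{2\pi}}\int_0^\infty e^{iz\xi}\hf(\xi)\,d\xi$ established earlier --- valid on $\ovl{\CC}_+$ because $(1+|\xi|)\hf\in L^2$ forces $\hf\in L^1$ --- using $|e^{iz_1\xi}-e^{iz_2\xi}|\le\min(2,|z_1-z_2|\xi)$, splitting the $\xi$-integral at a point of order $|z_1-z_2|^{-1}$, and Cauchy--Schwarz together with $\int_0^\infty\xi\,|\hf(\xi)|^2\,d\xi\le\|f\|_{L^2}^2+\|f'\|_{L^2}^2<\infty$.)
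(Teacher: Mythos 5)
Your argument is correct and arrives at the same H\"older-$\half$ bound, but by a genuinely different route from the paper. The paper also begins by moving the derivative onto $f$, via $\frac{d}{dz}\,Cf(z)=Cf'(z)$, but then invokes the ray-to-ray estimate \eqref{68} for the Cauchy operator (proved via the Mellin transform) along the straight line $L$ through the two points $z',z''$: this yields $\frac{d}{dz}\,Cf\in L^2(L)$ with norm $\le c\|f'\|_{L^2}$, and Cauchy--Schwarz along the segment from $z'$ to $z''$ gives the $|z''-z'|^{1/2}$ gain. You instead integrate by parts against the explicit antiderivative $g(s)=\log(s-z_1)-\log(s-z_2)$ and prove the key bound $\|g\|_{L^2(\RR)}^2\le C|z_1-z_2|$ by hand, normalizing $|z_1-z_2|=1$ by translation and scaling and splitting $\RR$ at distance comparable to $1$ from the singularities. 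What your route buys is complete self-containedness and an explicit absolute constant --- nothing is used beyond Cauchy--Schwarz and the local square-integrability of $\log$ --- whereas the paper's proof is shorter but leans on the nontrivial estimate \eqref{68}; your version also sidesteps the degenerate case in the paper's argument where the line through $z'$ and $z''$ is parallel to $\RR$. Two small points. First, in the lemma $\int_{\{|s-w|<R\}}(\ln|s-w|)^2\,ds\le C_R$, the pointwise bound $(\ln(u^2+\beta^2))^2\le(\ln 2R^2)^2+4(\ln|u|)^2\,\mathbf{1}_{|u|<1}$ requires $\beta\le R$; this holds in your application because the domain of integration is empty otherwise, but it should be said. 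Second, the parenthetical Fourier alternative does not close as written: the quantity $\int_0^\infty\xi\,|\hf(\xi)|^2\,d\xi$ is too weak to control the tail $\int_{\xi>1/\de}|\hf|\,d\xi$ (the corresponding Cauchy--Schwarz weight $\int_{\xi>1/\de}\xi^{-1}\,d\xi$ diverges), and in the low-frequency piece it only produces a constant rather than $\de^{1/2}$; both pieces should instead be estimated against $\int_0^\infty\xi^2|\hf|^2\,d\xi=\|f'\|_{L^2}^2$, after which the sketch does work. Since that is only an aside, it does not affect the validity of your proof.
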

\begin{proof}
For $z\in \CC \setminus \RR$
\begin{align*}
\frac{d}{dz} \;Cf(z) &= \int_{\RR} \frac{f(s)}{(s-z)^2} \;\dts =- \int_{\RR}
\lf( \frac{d}{ds} \lf(\frac{1}{s-z}\rt)\rt) f(s) \;\dts\\
&= \int_{\RR} \frac{f'(s)}{s-z} \; \dts.
\end{align*}
Now suppose $z',\, z'' \in \CC_+$, and the straight line $L$ through $z',z''$ intersects the line $\RR$ at $x$
at an angle $\theta$ as in Figure~\ref{fig:9}.
\begin{figure}[H]
\centering
\begin{tikzpicture}[scale=1.]
\fill (0,0) circle (.06cm);
\node[below] at (0,0) {$x$};
\fill (.6,.6) circle (.06cm);
\node[right] at (.6,.6) {$z'$};
\fill (1.2,1.2) circle (.06cm);
\node[right] at (1.2,1.2) {$z''$};

\node[above right] at (.2, -0.03) {\scalebox{.8}{$\theta$}};

\node[right] at (2.2, 2) { $L$};

\draw (-2,0) to (2,0);
\draw (0,0) to (2,2);

\end{tikzpicture}
\caption{\label{fig:9} A line intersecting $\mathbb R$.} 
\end{figure}
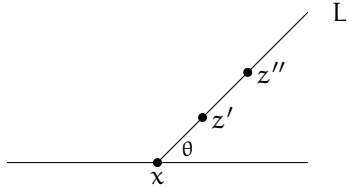
Then as $\int_\RR \frac{f'(s)}{s-z}\;\dts
= \int^\infty_x \;\frac{f'(s)}{s-z}\;\dts + \int^x_{-\infty} \;
\frac{f'(s)}{s -z}\;\dts$, and as $f'\in L^2(-\infty, x)
\oplus L^2(x,\infty)$  it follows from \eqref{68} that
\[
\int_0^\infty \lf|\frac{d}{dz} \;Cf\lf(re^{i\theta}\rt) \rt|^2\,dr=
\int^\infty_0 \lf|C\,f' (re^{i\theta})\rt|^2\, dr \le c \|f'\|_{L^2}.
\]
But
\begin{align*}
&\lf| Cf(z'') - Cf(z') \rt| = \lf|\int_{z'\to z'' \text{ in } L}
\,\frac{d}{dz}\;Cf (z) \, dz \rt|\\
&\quad \le \lf|z''-z' \rt|^{\half} \| \frac{d}{dz} \;Cf\|_{L^2(0,\infty)}
\le c|z''-z'|^{\half}\;\|f'\|_{L^2(\RR)}.\qedhere
\end{align*}
\end{proof}

We now consider general contours $\Sg \subset \ovl{\CC}= \CC \cup
\{\infty\}$, which are \tit{composed curves}:  By definition a composed curve $\Sg$ is a finite union of arcs
$\{\Sg_i\}_{i=1}^n$ which can intersect only at their end points.  Each
arc $\Sg_i$ is  homeomorphic to an interval $[a_i, b_i] \subset \RR$:
\[
\lf\{
\begin{array}{lll}
\!\!\!\!\!\!&\varphi_i : [a_i, b_i]\!\!
&\to \Sg_i \subset \ovl{\CC},\\
\!\!\!\!\!\!&[a_i, b_i] \ni t  \!\!
&\to \varphi_i(t) \in \Sg_i, \, \varphi_i(a_i)
\neq \varphi_i(b_i).
\end{array} \rt.
\]
Here $\ovl{\CC}$ has the natural topology generated by the sets $\{|z|< R_1\},
\;\;\{|z| > R_2\}$ where $R_1, \, R_2>0$.  A loop, in particular the unit circle
$T=\{|z|=1\}$, is a composed curve on the understanding that it is a union
of (at least) two arcs.

Although it is possible, and sometimes useful, to consider other function
spaces (e.g.\ H\"{o}lder continuous functions), we will only consider RHPs
in the sense of $L^p(\Sg)$ for $1< p < \infty$.  

So the first question is
``What is $L^p(\Sg)$?''.  The natural measure theory for each arc
$\Sg_i$ is generated by arc length measure $\mu$ as follows.
If $z_0= \varphi(t_0)$ and $z_n = \varphi(t_n)$ are the end-points of
some arc $\Sg \subset \CC$, and $z_0,\, z_1, \dots, \,z_n$ is any
partition of $[z_0, z_n]= \{\varphi(t): t_0 \le t \le t_n\}$
(we assume $z_{i+1}$ succeeds $z_i$ in the ordering induced on $\Sg$
by $\varphi$, symbolically $z_i< z_{i+1},\, \text{etc.})$
then
\beqs
L=L_{[z_0, z_n]} \equiv \sup_{\text{all partitions $\{z_i\}$}} \quad
\sum_{i=0}^{n-1} \lf|z_{i+1}-z_i\rt|.
\eeqs
If $L<\infty$ we say that the arc $\Sg=\lf[z_0, z_n\rt]$ is \tit{rectifiable}
and $L_{[z_0, z_n]}$ is its \tit{arc length}.  We will only consider composed
curves $\Sg$ that are \tit{locally rectifiable} i.e.\ for any $R>0$,
$\Sg \cap \{|z| < R\}$ is rectifiable (note that the latter set is an at most
countable union of simple arcs and rectifiability of the set means that the
sum of the arc lengths of these arcs is finite.  In particular, the unit circle
$\TT$ as a union of $2$ rectifiable subarcs, is rectifiable,
and $\RR$ is locally rectifiable.)
For any interval $[\al, \beta)$ on $\Sg_i \subset \CC$ (the case where
$\Sg_i$ passes through $\infty$, must be treated separately --- exercise!)
define
\[
\mu_i \lf([ \al, \beta)\rt) = \text{ arc length } \al \to \beta.
\]
Now the sets $\{ [\al, \beta): \al < \beta \text{ on } \Sg_i\}$ form a
semi-algebra (see \cite{Royden}) and hence $\mu_i$ can be extended
to a complete measure on a $\sg$-algebra  $\msA$ containing the Borel sets
on $\Sg_i$. The restriction of the measure to the Borel sets is unique.  For $1 \le p < \infty$, we
can define $L^p \lf(\Sg_i, d\mu_i\rt)$ to be the set of $f$ measurable with respect to $\msA$
on $\Sg$ for which,
\begin{equation*}
 \int_{\Sg_i} \lf|f(z)\rt|^p \, d\mu_i(z) < \infty ,
\end{equation*}
and then all the ``usual'' properties go through.  One usually writes $d\mu=|dz|$.
For $\Sg=\bigcup^n_{i=1}\, \Sg_i,\;L^p(\Sg, d\mu)$ is simply the
\tit{direct sum} of $L^p\lf(\Sg_i, d\mu_i\rt)^n_{i=1}$.

\begin{exer} $|dz|$ is also  equal to Hausdorff-1 measure on
$\Sg_1$.
\end{exer}

Note that if $\Sg_1=\RR$ and $\Sg_2 = \lf\{\lf(x, x^3 \,\sin \, \frac{1}{x}\rt)
: x\in \RR\rt\}$ then $\Sg= \Sg_1 \cup \Sg_2$ is not a composed curve,
although $\Sg_1$ and $\Sg_2$ are both locally rectifiable.

For $\Sg$ as above we  define the Cauchy operator for $h\in L^p
\lf(\Sg, |dz|\rt)$, $1 \le p < \infty$, by
\begin{equation}\label{72}
Cf(z) = C_\Sg\,f(z) = \int_\Sg \:\frac{f(\zeta)}{\zeta-z} \;\ddbar \zeta,
\qquad z\in \CC\backslash \Sg.
\end{equation}
Given the homeomorphisms $\varphi_i: \lf[a_i, b_i\rt] \to \Sg_i$, the contour
$\Sg$ carries a natural orientation, and the integral here is a line integral
following the orientation; if we parametrize the arcs $\Sg_i$ in $\Sg$ by
arc length $s$,
\[
0 \le s \le s_i,\qquad \zeta=\zeta(s), \qquad\text{then}\qquad
\lf|\frac{d\zeta(s)}{ds} (s) \rt|=1 \quad \text{(why?)}
\]
and \eqref{72} is a sum over its subarcs $\Sg_i$ of integrals of the form
\[
\int_0^{s_i} \frac{f\lf(\zeta(s)\rt)}{\zeta (s) -z} \;\frac{d\zeta(s)}{ds}
\;\dts, \qquad z\in \CC\backslash \Sg
\]
for each $i$, the integrand (clearly) lies in $L^p \lf(ds: [0, s_i)\rt)$.

Now the fact of the matter is that many of the properties that were true
for $C_\Sg$ when $\Sg=\RR$, go through for $C_\Sg$ in the general situation.
(See, in particular, \cite{Goluzin1969}.)
In particular for $f\in L^p (\Sg, d\mu)$, the non-tangential limits
\beq
C^\pm_\Sg\, f(z) = \lim_{z'\to z^\pm} \; C_\Sg \, f(z')
\label{73}
\eeq
exist pointwise a.e.\ on $\Sg$.  Figure~\ref{fig:12} demonstrates non-tangential limits.
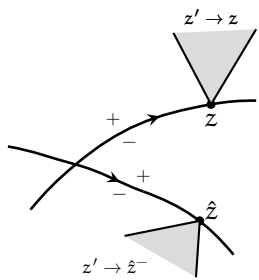
\begin{figure}[H]

\centering
\begin{tikzpicture}[scale =1]
\draw[line width = 1] (0,0) to [out=220, in=50](-.6,-.6);
\draw[line width = 1] (0,0) to [out=165, in=-10](-.9,.24);
\draw[line width = 1, directed] (0,0) to [out=-15, in=160](.9,-.36);
\node[above] at (.9,-.36) {\scalebox{.7}{$+$}};
\node[left] at (.8,-.4) {\scalebox{.7}{$-$}};

\draw[line width = 1](.9,-.36) to [out=-20, in= 130](2.1,-1.2);
\fill (1.65, -.75) circle (.06cm);
\node[above right] at (1.6, -.85) {$\hat{z}$};

\draw[line width = 1, directed] (0,0) to [out=40, in = 190] (1.8, .8);
\node[above] at (.5,.4) {\scalebox{.7}{$+$}};
\node[below right] at (.5,.5) {\scalebox{.7}{$-$}};

\draw[line width = 1] (1.8, .8) to [out=10, in = 180] (2.4, .85);
\fill (1.8, .8) circle (.06cm);
\node[below] at (1.8, .8) {$z$};

\filldraw[draw=none, fill = {rgb:black,1;white,6}]  (1.8, .8) -- (2.4, 1.8) -- (1.3, 1.75)-- cycle;
\draw[thick] (1.8, .8) to (2.4, 1.8);
\draw[thick] (1.8, .8) to (1.3, 1.75);

\node[above] at (1.8, 1.75) {\scalebox{.7}{$z' \to  z$}};

\filldraw[draw=none, fill = {rgb:black,1;white,6}]  (1.65, -.75) -- (.65, -.95) -- (1.6, -1.5)-- cycle;
\draw[thick] (1.65, -.75) to (.65, -.95);
\draw[thick] (1.65, -.75) to (1.6, -1.5);

\node[below left] at (1.1, .-1.12) {\scalebox{.7}{$z' \to \hat{z}^{-}$}};

\node[above] at (1.8, 1.75) {\scalebox{.7}{$z' \to  z$}};

\end{tikzpicture} 
\caption{\label{fig:12}  Non-tangential limits.} 
\end{figure}
Note that as $\Sg_i$ is locally rectifiable, the tangent vector to the arc
$\frac{d\zeta}{ds}$ exists at a.e.\ point $\zeta=\zeta(s)$: the normal to
$\frac{d\zeta}{ds}$ bisects the cone.
\begin{figure}[H]
\centering
\begin{tikzpicture}[scale=0.6]
\draw[line width=1]  (-1,2) to (1,-2);

\draw[line width=1] ({-3.5/sqrt(5)}, {2.5/sqrt(5)}) to (0,0);
\draw[line width=1]  (0,2) to (0,0);

\node[left] at (-1,2){\scalebox{1}{The normal}};
\node[] at (-0.1,-0.5) {$\xi$};

\coordinate (B) at (0,0);
\coordinate (C) at (-2,-1.3);
\coordinate (A) at (2,0.7);
\arcThroughThreePoints{A}{B}{C};
\end{tikzpicture} 
\caption{\label{fig:13}  A contour and its normal.  } 
\end{figure}
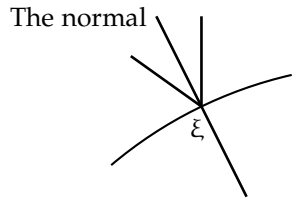
Moreover,
\beqs
C^\pm_\Sg\, f(z) = \pm \half \;f(z) + \frac{i}{2}\:Hf(z)
\eeqs
where the Hilbert transform is now given by
\beq
Hf(z) = \frac{1}{\pi} \,\lim_{\ep \downarrow 0}\; \int_{\begin{subarray}{l}
|s-z|>\ep \\
s\in \Sg
\end{subarray} }
\; \frac{f(s)}{z-s} \; ds, \qquad z\in \Sg
\label{75}
\eeq
and the points $z\in \Sg$ for which the non-tangential limits \eqref{73} exists
are precisely the points for which the limit in \eqref{75} exists.

Again, for $f\in L^p(\Sg, d\mu)$ with $1 \le p < \infty$,
\begin{align}
C^+\,f (z) - C^-\,f(z) &= f(z) \notag \\
\intertext{and}
C^+\,f(z) + C^-\,f(z) &= i\,Hh(z). \notag
\end{align}
The following issue is crucial for the analysis of RHPs:
\begin{ques} 
For which locally rectifiable contours $\Sg$ are the operators $C^\pm$ and $H$
bounded in $L^p, \;1<p<\infty$?
\end{ques}

Quite remarkably, it turns out that there are \tit{necessary} and
\tit{sufficient} conditions on a simple rectifiable curve for $C^\pm,\,H$
to be bounded in $L^p(\Sg), \;1<p<\infty$.  The result is due to many
authors, starting with Calder\'{o}n \cite{C}, and then Coifman, Meyer and McIntosh \cite{CMM},
with Guy David \cite{D} (see \cite{Bottcher} for details and historical references) making the final decisive contribution.

Let $\Sg$ be a simple, rectifiable curve in $\CC$.  For any $z\in \Sg$, and
any $r>0$, let
\[
\ell_r (z) = \text{ arc length of }\lf(\Sg \cap D_r(z)\rt)
\]
where $D_r(z)$ is the ball of the radius $r$ centered at $z$, see Figure~\ref{fig:14}.
\begin{figure}[H]
\centering
\begin{tikzpicture}[scale=0.7]
\fill (0,.05) circle (.06cm);
\node[below right] at (0,0.2) {$z$};
\draw[line width=1] (-2,-2) to [in=90] (.1, .2);
\draw[directed, line width=1] (.1,.2) to [out=270, in=90] (-.4,-.3) to [out=270, in=180] (.15, -.3) to [out=0, in=90] (.2, -.7) to [out = 0, in = 270] (.8, -.6)
   to [out = 90, in = 225] (.45, .2) to [out = 45] (3, 1);

\draw[line width = 1, domain = 0:360] plot ({.7* cos(\x)}, {.7 * sin(\x)});
\node [above left] at (0,.7) {$D_r(z)$};
\node[left] at (-1.5, -.9) {$\Sigma$};
\end{tikzpicture} 
\caption{\label{fig:14} A ball $D_r(z)$ of radius $r$ centered at $z$.} 
\end{figure}
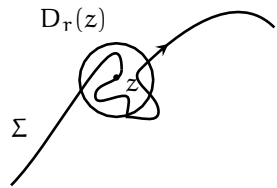
Set
\beqs
\la= \la_\Sg = \sup_{z\in \Sg, \;r>0}\;\frac{\ell_r(z)}{r}.
\eeqs
\begin{theorem}
Suppose $\la_\Sg < \infty$.  Then for any $1<p<\infty$, the limit in \eqref{75}
exists for a.e.\ \, $z\in \Sg$ and defines a bounded operator for any
$1<p<\infty$
\beq
\|H\,f\|_{L^p} \le c_p \|f\|_{L^p}, \quad f\in L^p, \quad c_p< \infty.
\label{79}
\eeq
Conversely if the limit in \eqref{75} exists a.e.\ and defines a bounded operator
$H$ in $L^p(\Sg)$ for some $1<p<\infty$, then $H$ gives rise to a bounded
operator for all $p$, $1<p<\infty$, and $\la_\Sg < \infty$.
\end{theorem}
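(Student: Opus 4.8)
The statement has a deep sufficiency half and a much softer necessity half, and the necessity half will be reduced to the sufficiency half once the geometric condition $\la_\Sg<\infty$ has been recovered; so essentially all the weight lies on the assertion \emph{if $\la_\Sg<\infty$ then the principal-value singular integral \eqref{75} extends to a bounded operator on $L^2(\Sg,|dz|)$}, after which Calder\'{o}n--Zygmund theory does the rest. As a preliminary one records that $\la_\Sg<\infty$ turns $(\Sg,|dz|)$ into a space of homogeneous type: $\ell_r(z)\le\la_\Sg\,r$ is the hypothesis, while the matching lower bound $\ell_r(z)\ge c\,r$ is automatic, since a connected arc through $z$ that exits $D_r(z)$ must contain a sub-arc from $z$ to $\p D_r(z)$ of length at least $r$. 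Relative to this measure the kernel $1/(w-z)$ is a standard Calder\'{o}n--Zygmund kernel ($|K(z,w)|\le C/|z-w|$, $|\nabla K(z,w)|\le C/|z-w|^2$), and $H$ coincides, up to a constant factor and the harmless $\pm\half f$ term in $C^\pm$, with the principal-value Cauchy operator, so it is enough to bound the latter.

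For the $L^2$ bound --- the genuine difficulty --- the plan is to invoke the $Tb$ theorem of McIntosh--Meyer / David--Journ\'{e}--Semmes rather than reprove it, taking as test function the a.e.-defined unit tangent $b(z)=\zeta'(s)$ of $\Sg$ in its arc-length parametrization $z=\zeta(s)$. Then $b\in L^\infty$ with $|b|\equiv1$, multiplication by $b$ is unitary on $L^2(\Sg,|dz|)$, and $b$ is para-accretive on $\Sg$ (itself a consequence of $\la_\Sg<\infty$, via David's geometric analysis of such curves); the point that makes the Cauchy kernel so amenable is that its ``$Tb$'' is computed almost for free from the telescoping identity
\[
\int_\Sg\frac{\zeta'(s)}{\zeta(s)-z}\,ds=\int_\Sg\frac{d}{ds}\,\log\bigl(\zeta(s)-z\bigr)\,ds ,
\]
whose right-hand side is a sum of log-differences over the constituent arcs of $\Sg$, hence lies in $BMO(\Sg)$ (modulo only the elementary bookkeeping at the finitely many arc endpoints and at $\infty$); the same computation for the transposed kernel gives the adjoint testing condition, and the weak boundedness property is an elementary bump-function estimate. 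The $Tb$ theorem then yields boundedness on $L^2(\Sg)$. Equivalently one may run the classical route: expand the Cauchy kernel on a Lipschitz graph $\{x+iA(x)\}$ as $\sum_{k\ge0}(-i)^k\!\int\!\frac{(A(x)-A(y))^k}{(x-y)^{k+1}}f(y)\,dy$ and sum the Calder\'{o}n commutator bounds (Calder\'{o}n for small $\|A'\|_\infty$, Coifman--McIntosh--Meyer for all Lipschitz constants), then transfer from Lipschitz graphs to a general $\Sg$ with $\la_\Sg<\infty$ using David's theorem that such a curve carries ``big pieces of Lipschitz graphs'' at every scale, glued by a good-$\la$/stopping-time decomposition.

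Granting the $L^2$ bound, the rest of the sufficiency half is routine. Since the principal-value Cauchy operator is a Calder\'{o}n--Zygmund operator on the homogeneous-type space $(\Sg,|dz|)$ and is bounded on $L^2$, the Calder\'{o}n--Zygmund decomposition gives weak type $(1,1)$, Marcinkiewicz interpolation gives $L^p$ for $1<p\le2$, and duality gives $2\le p<\infty$; this is \eqref{79} for $H$, hence for $C^\pm$. For the a.e.\ existence of the limit in \eqref{75}, Cotlar's inequality (available because the kernel is standard and the operator is $L^2$-bounded) makes the maximal truncation $H_*f(z)=\sup_{\ep>0}|H_\ep f(z)|$ of weak type $(1,1)$ and bounded on $L^p$; since $\lim_{\ep\downarrow0}H_\ep f(z)$ exists at every $z$ for $f$ Lipschitz with compact support on $\Sg$ (a direct estimate, exactly as in the $\Sg=\RR$ computation carried out earlier in these notes), the usual density argument upgrades this to a.e.\ existence for all $f\in L^p(\Sg)$; by the remark after \eqref{75} the non-tangential limits \eqref{73} then also exist a.e.

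For the converse, suppose $H$ is bounded on $L^p(\Sg)$ for one fixed $p\in(1,\infty)$; the goal is $\la_\Sg<\infty$, after which the sufficiency half already proved supplies boundedness for all $p$ and the a.e.\ statement. The lever is scale covariance: the $L^p(\Sg)\to L^p(\Sg)$ operator norm of $H$ is unchanged under $z\mapsto az+c$ with $a\neq0$, because $1/(w-z)$ and arc-length measure transform compatibly. If $\la_\Sg=\infty$, choose balls $D_{r_n}(z_n)$ with $\ell_{r_n}(z_n)/r_n=L_n\to\infty$; the maps $z\mapsto(z-z_n)/r_n$ produce curves $\Sg_n$, all with the same $H$-norm as $\Sg$, carrying arc length $L_n$ inside the unit ball. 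One then constructs on each $\Sg_n$ test functions whose $H$-ratio grows like a positive power of $L_n$ --- this is the standard principle that boundedness of the Cauchy (Hilbert) singular integral on $L^p$ of a measure forces that measure to have linear growth, familiar from the theory of analytic capacity (see \cite{Bottcher} and the references there) --- contradicting the uniform $H$-norm, so $\la_\Sg<\infty$. I expect the only real obstacle in the whole scheme to be the $L^2$ bound on a general $\la_\Sg<\infty$ curve, i.e.\ the $Tb$ theorem together with David's geometric reduction; everything downstream of it is soft.
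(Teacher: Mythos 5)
The paper does not actually prove this theorem. It is the Calder\'{o}n--Coifman--McIntosh--Meyer--David theorem: the text states it, attributes it to \cite{C}, \cite{CMM} and \cite{D}, and refers the reader to \cite{Bottcher} for a proof. So there is no in-paper argument to measure your proposal against; what can be judged is whether your outline is a faithful roadmap of the literature proof, and in its broad architecture it is. The reduction of everything to the $L^2$ bound, the passage to all $1<p<\infty$ by Calder\'{o}n--Zygmund theory on the space of homogeneous type $(\Sg,|dz|)$, the a.e.\ existence of the principal value via Cotlar's inequality plus a dense class, and the converse via the ``boundedness forces linear growth'' principle are all the standard steps, correctly ordered.

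That said, several places in your sketch are assertions where the actual content of the theorem lives, and you should be aware of what they cost. (i) The para-accretivity of $b=\zeta'$ is exactly where $\la_\Sg<\infty$ enters the $Tb$ argument, and it deserves its one-line proof: any sub-arc $J$ lies in a disc of radius $\mathrm{diam}(J)$ about a point of $J$, so $\ell(J)\le\la_\Sg\,\mathrm{diam}(J)$; hence $J$ contains a sub-arc of length $\ge\ell(J)/\la_\Sg$ whose chord is comparable to its arc length, and the average of $\zeta'$ over that sub-arc is precisely chord over arc. (ii) The claim that $Tb\in BMO$ is not pure bookkeeping: the real part $\log|\zeta(\cdot)-z|$ is classical, but the imaginary part is a winding number whose $BMO$ bound again uses $\la_\Sg<\infty$. (iii) The ``automatic'' lower bound $\ell_r(z)\ge c\,r$ fails for $z$ near an endpoint of the composed curve and for $r$ beyond the diameter, so the doubling property needed for the weak-$(1,1)$ machinery must be formulated in the correct range of scales. (iv) In the converse, the rescaling gives you curves with mass $L_n$ in the unit ball, but to get $\|H\|\gtrsim L_n^{(p-1)/p}$ you still need the quantitative estimate: for $f=\chi_{D_r(z)\cap\Sg}$ and $w$ at distance $d\in[2r,R]$ from $z$ the arguments of $u-w$, $u\in D_r(z)$, lie in a small cone, so $|Hf(w)|\gtrsim \ell_r(z)/d$ with no cancellation, and the annuli $\{2r\le|w-z|\le R\}$ carry mass $\gtrsim d$ by the lower bound in (iii); the case where the whole curve sits inside $D_{2r}(z)$ needs separate treatment. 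Finally, the entire weight of the sufficiency half rests on the $Tb$ theorem (or on Coifman--McIntosh--Meyer together with David's big-pieces-of-Lipschitz-graphs reduction), which you invoke as a black box; that is unavoidable at this level and is precisely why the lectures cite \cite{Bottcher} rather than prove the result.
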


An excellent reference for the above Theorem, and more, is \cite{Bottcher}.



\begin{remarks}Additional remarks:
\begin{enumerate}
	\item Locally rectifiable curves $\Sg$ for which
$\la=\la_\Sg < \infty$ are called Carleson curves,
	\item the constant  $c_p$ in \eqref{79} has the form $c_p= \phi_p (\la_\Sg)$
for some continuous, increasing function, $\phi_p(t)\ge 0$,
\tit{independent} of $\Sg$, such that $\phi_p(0)=0$.
\end{enumerate}
The fact that $\phi_p$ is independent of $\Sg$, is very important for the
nonlinear steepest descent method, where one deforms curves in a similar
way to the classical steepest descent method for integrals.
\end{remarks}

Carleson curves are sometimes called AD-regular curves: the A and D denote Ahlfors and David. To get some sense of the subtlety of the above result, consider the following
curve $\Sg$ with a cusp at the origin (see Figure~\ref{fig:16}):
\[
\Sg= \{0 \le x \le 1,\;\; y=0\} \cup \{ (x, x^2): \;0 \le x \le 1\}.
\]
\begin{figure}[H]
\centering
\begin{tikzpicture}
\draw[line width=1, directed] (3,0) to (0,0);
\draw[line width =1, domain = 0:1, directed] plot (3*\x, {2.5*(\x^2)});
\node[right] at (3,0) {$(1,0)$};
\node[left] at (0,0) {$0$};
\node[right] at (3,2.5) {$(1, 1)$};
\node at (1, 2) {\scalebox{1.2}{$\Sigma$}};
\end{tikzpicture} 
\caption{\label{fig:16} A cusp at the origin.} 
\end{figure}
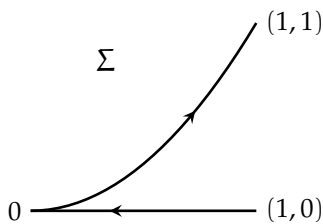
Clearly $\la_\Sg < \infty$ so that the Hilbert transform $H_\Sg$
is bounded in $L^p$, $1< p< \infty$.

\begin{exer} For $\Sigma$ in Figure~\ref{fig:16}, prove directly that $H_\Sg$ is bounded in $L^2$.  The presence of the cusp makes the proof surprisingly difficult.
\end{exer}

\section*{Lecture 3}
\setcounter{equation}{0}\setcounter{section}{3}\addcontentsline{toc}{section}{Lecture 3}

We now make the notion of a RHP precise (see \cite{ClanceyGohberg,deiftzhounls,Litvinchuk1987}).
Let $\Sg$ be a composite, oriented Carleson contour in $\CC$ and let
$v:\Sg \to \textup{GL} (n, \CC)$ be a jump matrix on $\Sg$, with $v, \, v^{-1}
\in L^\infty(\Sg)$.
Let $Ch(z)= C_\Sg h(z), \;C^\pm_\Sg\, h,\; H_\Sg\, h$ be the associated
Cauchy and Hilbert operators.

We say that a pair of $L^p(\Sg)$ function $f_\pm \in \p C(L^p)$ if
there exists a (unique) function $h\in L^p(\Sg)$ such that
\beqs
f_\pm (z) =(C^\pm\,h)(z), \qquad z\in \Sg.
\eeqs
In turn we call $f(z) \equiv Ch(z)$, $\;z\in \CC\backslash \Sg$, the
\tit{extension} of $f_\pm = C^\pm \,h \in \p C(L^p)\;$ \tit{ off }~$\Sg$.

\begin{define}\label{def1}
Fix $1< p < \infty$.  Given $\Sg,\; v$ and a measurable function $f$ on $\Sigma$, we say
that $m_\pm \in f + \p C(L^p)$ solves an \tit{inhomogeneous RHP
of the first kind} ($IRHP1_p$) if
\beqs
m_+(z) =m_-(z) \, v(z), \qquad z\in \Sg.
\eeqs
\end{define}

\begin{define}\label{def2}
Fix $1<p<\infty$.  Given $\Sg,\;v$ and a function $F\in L^p(\Sg)$, we say
that $M_\pm \in \p \,C(L^p)$ solves an \tit{inhomogeneous RHP of the
second kind} $(IRHP2_p)$ if
\beqs
M_+(z) = M_-(z) \;v(z) + F(z), \qquad z\in \Sg.
\eeqs
\end{define}

Recall that $m$ solves the normalized RHP $(\Sg, v)$ if, at least formally,
\begin{align}\label{84}
\begin{aligned}
&\bullet \quad m(z) \quad\text{is a $n\times n$ analytic function in } \CC
\backslash \Sg,\\
&\bullet \quad m_+(z) = m_-(z) \; v(z), \qquad z\in  \Sg,\\
&\bullet \quad m(z) \to  I\quad \text{ as }\quad z\to \infty.
\end{aligned}
\end{align}

More precisely, we make the following definition.
\begin{define}\label{def3}
Fix $1<p < \infty$.  We say that $m_\pm$ solves the normalized RHP
$(\Sg, v)_p$ if $m_\pm$ solves the $IRHP1_p$ with $f\equiv I$.
\end{define}

In the above definition, if $m_\pm -I \in C^\pm\,h$, then clearly the
extension
\[
m(z) = I + Ch(z), \qquad z\in \CC\backslash \Sg,
\]
off $\Sg$ solves the normalized RHP in the formal sense of \eqref{84}.

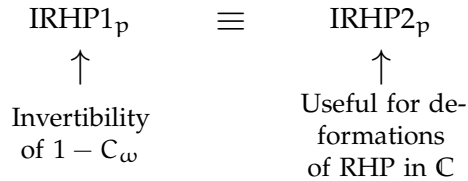
\begin{figure}[H]
\centering
\begin{tikzpicture}[scale = 1]
\node at (0,0) {\scalebox{1.5}{$\equiv$}};
\node at (2,0) {\scalebox{1.1}{IRHP$2_p$}};
\node at (-2,0) {\scalebox{1.1}{IRHP$1_p$}};
\node at (-2, -0.6) {\scalebox{1.5}{$\uparrow$}};
\node at (2, -0.6) {\scalebox{1.5}{$\uparrow$}};
\node[text width = 3cm, align=center] at (-2,-1.5){Invertibility of $1 - C_\om$};
\node[text width = 3cm, align=center] at (2,-1.5){Useful for deformations of RHP in $\mathbb{C}$};
\end{tikzpicture}
\caption{\label{fig:18}  The uses of $IRHP1_p$ and $IRHP2_p$. } 
\end{figure}

Let
\beqs
\begin{aligned}
v ={}& \lf(v^-\rt)^{-1} \, v^+ = \lf(I-\om^-\rt)^{-1} \lf(I+ \om^+\rt) \\
&\quad \om^+ \equiv v^+ -I, \qquad \om^- \equiv I-v^-,
\end{aligned}
\eeqs
be a \tit{pointwise a.e.} factorization of $v$, i.e., $v(x) =  \lf(v^-\rt(x))^{-1} \, v^+(x)$ for a.e. $x$, with $v^\pm, \;(v^\pm)^{-1}
\in L^\infty$, and let $\om = (\om^-, \; \om^+)$.  Let $C_\om$ denote
the basic associated operator
\beqs
C_\om \, h \equiv C^+(h\,\om^-) + C^-(h\, \om^+)
\eeqs
acting on $L^p(\Sg)-n \times n$ matrix valued functions $h$.
As $\om^\pm \in L^\infty$, $\; C_\om \in \mL(L^p)$, the
bounded operators on $L^p$, for all $1<p<\infty$.
The utility of $IRHP1_p$ and $IRHP2_p$ will soon become clear, see Figure~\ref{fig:18}.

\begin{theorem}\label{thm4}
If $f$ and $v$ are such that $f(v-I)\in L^p(\Sg)$ for some $<p< \infty$, then
\beqs
m_\pm = M_\pm + f
\eeqs
solves $IRHP1_p$ if $M_\pm$ solves $IRPH2_p$ with $F=f(v-I)$.  Conversely
if $F\in L^p(\Sg)$, then
\beqs
M_+ = m_+ + F, \qquad M_- = m_-
\eeqs
solves $IRHP2_p$ if $m_\pm$ solves $IRHP1_p$ with $f= C^-\, F$.
\end{theorem}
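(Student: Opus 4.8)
The plan is to prove both implications by pure algebra on the jump relations, the only analytic input being the fundamental identity $C^+h - C^-h = h$ for $h\in L^p(\Sg)$ (recorded earlier, first for $\Sg=\RR$ and then noted to survive on Carleson contours), together with the trivial facts that $C^\pm$ are bounded on $L^p(\Sg)$ and that multiplication by an $L^\infty$ matrix maps $L^p$ into $L^p$. The real content of the statement is the bookkeeping dictated by the definitions, so the proof amounts to unwinding Definitions~\ref{def1} and~\ref{def2} carefully.

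\emph{Forward direction.} Assume $M_\pm$ solves $IRHP2_p$ with $F = f(v-I)$; the hypothesis $f(v-I)\in L^p(\Sg)$ is exactly what makes this an admissible problem. By Definition~\ref{def2} there is $h\in L^p(\Sg)$ with $M_\pm = C^\pm h$ and $M_+ = M_- v + f(v-I)$. Set $m_\pm := M_\pm + f$. Then $m_\pm - f = M_\pm = C^\pm h$, so $m_\pm \in f + \p C(L^p)$ by definition, and a one-line computation gives the jump:
\[
m_+ = M_+ + f = M_- v + f(v-I) + f = M_- v + fv = (M_-+f)\,v = m_-\,v,
\]
i.e.\ $m_\pm$ solves $IRHP1_p$.

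\emph{Converse direction.} Assume $m_\pm$ solves $IRHP1_p$ with $f = C^-F$, so there is $h\in L^p(\Sg)$ with $m_\pm = C^-F + C^\pm h$ and $m_+ = m_- v$. Put $M_+ := m_+ + F$ and $M_- := m_-$. The key move is to exhibit both as the $\pm$ boundary values of the \emph{same} density $F+h\in L^p(\Sg)$: one has $M_- = m_- = C^-F + C^- h = C^-(F+h)$, while the identity $C^+F - C^-F = F$ gives
\[
M_+ = m_+ + F = C^-F + C^+h + F = C^+F + C^+h = C^+(F+h),
\]
so $M_\pm\in \p C(L^p)$. The jump relation is then immediate: $M_+ = m_+ + F = m_- v + F = M_- v + F$, i.e.\ $M_\pm$ solves $IRHP2_p$.

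The step to watch — the only place anything could go wrong — is this bookkeeping in $\p C(L^p)$: in each direction one must check that the ``$+$'' and ``$-$'' boundary values genuinely arise from one common $L^p$ density, as the definition of $\p C(L^p)$ demands, and that the pointwise-a.e.\ jump identities make sense given only $v,v^{-1}\in L^\infty$. Both points are settled by $C^+ - C^- = \mathrm{id}$ on $L^p(\Sg)$ together with the $L^\infty$-multiplier observation; I do not anticipate any genuine analytic obstacle, since the theorem is in essence a translation dictionary between the two kinds of inhomogeneous RHP.
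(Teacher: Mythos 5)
Your proof is correct. The forward direction is exactly the computation the paper gives: $M_+=M_-v+f(v-I)=(M_-+f)v-f$, hence $m_+=m_-v$ with $m_\pm=M_\pm+f\in f+\p C(L^p)$. The converse is the part the paper calls ``more subtle'' and relegates to Exercise~\ref{exer10}; your solution is right, and you have correctly isolated the only non-trivial point, namely that $M_+=m_++F$ and $M_-=m_-$ must be the boundary values of a \emph{single} $L^p$ density. Your use of $C^+F-C^-F=F$ to rewrite $M_+=C^-F+C^+h+F=C^+(F+h)$ and $M_-=C^-(F+h)$, exhibiting the common density $F+h$, is precisely the intended resolution.
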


The first part of this result is straightforward:
Suppose $M_\pm \in \p C(L^p)$ solves $
M_+ = M_- \, v+ F $ on $\Sg$
with $F= f(v-I) \in L^p$.  Then
\[
M_+ = M_-v + f(v-I) 
= (M_- + f) v-f
\]
or
$m_+ = m_- v$
with
$m_\pm = f + M_\pm \in f + \p C(L^p)$.
The converse is more subtle and is left as an exercise:

\begin{exer}\label{exer10}
Show $IRHP1_p \Rightarrow IRHP2_p$.
\end{exer}

We now show that the RHPs $IRHP1_p$ and $IRHP2_p$, and, in particular, the
normalized RHP $(\Sg, v)_p$ are intimately connected with the singular
integral operator $1-C_\om$.

Let $f\in L^p(\Sg)$ and let $m_\pm = f+C^\pm \, h$ for some $h\in L^p(\Sg)$.
Also suppose $m_+ = m_-\, v = m_-(v^{-})^{-1}\, v^+$.  Set
\[
\mu= m_- \lf(v^{-}\rt)^{-1} = m_+ \lf(v^+\rt)^{-1} \in L^p(\Sigma)
\]
and define
\[
H(z) = \lf(C \lf(\mu \lf(\om^+ + \om^{-}\rt)\rt)\rt)(z), \qquad z\in \CC
\backslash \Sg.
\]
Then we have on $\Sg$, using $C^+ - C^- = 1$
\begin{align*}
H_+ &= C^+ \lf(\mu \lf(\om^+ + \om^-\rt) \rt) = C^+ \mu\, \om^+ +
C^+\mu\,\om^-\\
&= C^+\mu\,\om^- + C^- \mu\, \om^+ + \mu \,\om^+
\\
&= C_\om\, \mu + \mu\,\om^+ =
\lf( C_\om-1\rt)\mu+ \mu \lf(I + \om^+\rt) \\
&= \lf(C_\om - 1\rt) \mu + \mu\, v^+ = \lf(C_\om-1\rt)\mu +  m_+.\\
\text{i.e. }\; H_+ &= \lf(C_\om -I\rt) \mu + m_+.
\end{align*}
Similarly
\[
H_- = \lf(C_\om - 1\rt) \mu + m_-.
\]
Thus
\beq
m_\pm - f - H_\pm = \lf( 1-C_\om \rt) \mu-f.
\label{89}
\eeq
But $m_\pm -f - H_\pm \in \p C\, (L^p)$; i.e.
$ m_\pm - f - H_\pm = C^\pm \, h$ for some $h\in L^p$.
However, from \eqref{89}
\[C^+\,h = C^-\,h \qquad
\Rightarrow \qquad h= C^+\,h - C^-\, h =0.
\]
We conclude that $(1-C_\om)\,\mu=f, \mu\in L^p$.

\noi
Conversely, if $\mu \in L^p(\Sg)$ solves $(1-C_\om)\,\mu=f, \;$ then
the above calculations show that $H\equiv C \lf(\mu \lf(\om^+ + \om ^-\rt)\rt)$
satisfies
\[
H_\pm =-f + \mu \,v^\pm.
\]
Thus setting $m_\pm = \mu\, v^\pm$, we see that $m_+ = m_-\, v$ and
$m_\pm -f \in \p \,C(L^p)$.  In particular $\mu \in L^p$ solves
$(1- C_\om)\,\mu=0$ iff $ m_\pm = \mu\,v^\pm$ solves the
homogeneous~RHP.\\[-\baselineskip]
\begin{equation}\label{90}
m_+ = m_-\, v, \qquad m_\pm \in \p C (L^p).
\end{equation}
We summarize the above calculations as follows:

\begin{prop}\label{prop1}
Let $1<p< \infty$.  Then
\begin{align*}
(1- C_\om) &\quad\text{ is a bijection in } \;L^p(\Sg)\\
&\Longleftrightarrow\\
IRHP1_p &\quad\text{ has a unique solution for all}
\quad f\in L^p(\Sg)\\
&\Longleftrightarrow\\
IRHP2_p &\quad\text{ has a unique solution for all}
\quad F\in L^p(\Sg).
\end{align*}
Moreover, if one, and hence all three of the above conditions, is satisfied,
then for all $f\in L^p(\Sg)$
\beq\label{91}
\begin{aligned}
(1-C_\om)^{-1} \, f &= m_+ \,(v^+)^{-1} = m_- (v^-)^{-1} \\
&= (M_+ + f)(v^+)^{-1} = (M_- +f) (v^-)^{-1}
\end{aligned}
\eeq
where $m_\pm$ solves $IRHP1_p$ with the given $f$  and $M_\pm$ solves
$IRHP2_p$ with $F=f(v-I)$ ($\in L^p$!), and
if $M_\pm$ solves $IRHP2_p$ with $F\in L^p (\Sg)$, then
\beqs
\begin{aligned}
M_+ &= \lf( (1-C_\om)^{-1} (C^-\, F)\rt) v^+ + F \quad\text{ and } \\
M_- &= \lf( (1-C_\om)^{-1} \, (C^- \,F)\rt) v^-.
\end{aligned}
\eeqs
\end{prop}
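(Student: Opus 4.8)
The plan is to assemble the Proposition from two facts already in hand. The first is the pointwise correspondence worked out in the lines leading to \eqref{89}--\eqref{90}: for a fixed $f\in L^p(\Sg)$ it matches solutions $\mu$ of $(1-C_\om)\mu=f$ with solutions $m_\pm$ of $IRHP1_p$ via $\mu\leftrightarrow m_\pm=\mu v^\pm$. The second is the equivalence $IRHP1_p\Leftrightarrow IRHP2_p$ of Theorem \ref{thm4}, together with Exercise \ref{exer10}. Throughout one uses that $\Sg$ is Carleson, so $C^\pm_\Sg$ and hence $C_\om$ are bounded on $L^p$, and that $v^\pm,(v^\pm)^{-1}\in L^\infty$, so multiplication by $v^\pm$ or its inverse preserves $L^p(\Sg)$; both are needed to keep the objects constructed below inside $\p C(L^p)$.

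\textbf{Step 1: $1-C_\om$ bijective $\Leftrightarrow$ $IRHP1_p$ uniquely solvable for every $f\in L^p(\Sg)$.} Fix $f$. The computation through \eqref{89} shows: if $m_\pm\in f+\p C(L^p)$ and $m_+=m_-v$, then $\mu:=m_-(v^-)^{-1}=m_+(v^+)^{-1}\in L^p(\Sg)$ solves $(1-C_\om)\mu=f$; conversely, if $\mu\in L^p(\Sg)$ solves $(1-C_\om)\mu=f$, then $m_\pm:=\mu v^\pm$ solves $IRHP1_p$. These assignments are mutually inverse, so for each $f$ the solution set of $(1-C_\om)\mu=f$ and the solution set of $IRHP1_p$ with that $f$ are in bijection. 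Hence $1-C_\om$ is surjective iff $IRHP1_p$ is solvable for every $f\in L^p(\Sg)$; and since $\p C(L^p)$ is linear and the difference of two $IRHP1_p$-solutions with the same $f$ satisfies \eqref{90}, $1-C_\om$ is injective iff $0$ is the only solution of \eqref{90} iff $IRHP1_p$ has at most one solution for every $f$. Combining gives the equivalence.

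\textbf{Step 2: $IRHP1_p$ uniquely solvable for all $f\in L^p(\Sg)$ $\Leftrightarrow$ $IRHP2_p$ uniquely solvable for all $F\in L^p(\Sg)$.} Existence in both directions is exactly Theorem \ref{thm4}: given $F\in L^p(\Sg)$ set $f=C^-F\in L^p(\Sg)$, solve $IRHP1_p$, and read off the solution $(m_++F,m_-)$ of $IRHP2_p$; given $f\in L^p(\Sg)$ set $F=f(v-I)\in L^p(\Sg)$ (using $v-I\in L^\infty$), solve $IRHP2_p$, and read off $M_\pm+f$. Uniqueness is routed through the homogeneous problems. A difference of two $IRHP2_p$-solutions with the same $F$ is a pair $N_\pm\in\p C(L^p)$ with $N_+=N_-v$, i.e.\ a solution of \eqref{90}, hence $0$ if $IRHP1_p$ is uniquely solvable. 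Symmetrically, a difference of two $IRHP1_p$-solutions with the same $f$ is a pair $N_\pm\in\p C(L^p)$ with $N_+=N_-v$, i.e.\ an $IRHP1_p$-solution with $f=C^-(0)$; by the converse half of Theorem \ref{thm4} (with $F=0$) the pair $(N_+,N_-)$ then solves the homogeneous $IRHP2_p$, hence is $0$ if $IRHP2_p$ is uniquely solvable. Steps 1--2 together give all three equivalences.

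\textbf{Step 3: the explicit formulas, and the main difficulty.} When the equivalent conditions hold, Step 1 reads $(1-C_\om)^{-1}f=m_-(v^-)^{-1}=m_+(v^+)^{-1}$ off the $IRHP1_p$-solution $m_\pm$, and Theorem \ref{thm4} writes $m_\pm=M_\pm+f$ with $M_\pm$ the $IRHP2_p$-solution for $F=f(v-I)$; this is \eqref{91}. For the last display, start from an $IRHP2_p$-solution $M_\pm$, write $M_\pm=C^\pm g$, and use $C^+-C^-=1$ to check that $m_-:=M_-$, $m_+:=M_+-F$ lie in $C^-F+\p C(L^p)$ and satisfy $m_+=m_-v$, i.e.\ $m_\pm$ solves $IRHP1_p$ with $f=C^-F$; then $(1-C_\om)^{-1}(C^-F)=m_\pm(v^\pm)^{-1}$ rearranges to $M_-=\bigl((1-C_\om)^{-1}(C^-F)\bigr)v^-$ and $M_+=\bigl((1-C_\om)^{-1}(C^-F)\bigr)v^++F$. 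There is no new analytic content: everything rests on \eqref{89}--\eqref{90} and Theorem \ref{thm4}. The only real hazard is circular reasoning, so one must keep existence and uniqueness strictly separate --- which is why the uniqueness of $IRHP1_p$ is obtained via the homogeneous $IRHP2_p$ and the converse direction of Theorem \ref{thm4} --- and one must check at each step that the objects built by multiplying by $v^\pm$ or adding $\pm F$ really land in $\p C(L^p)$, which is precisely where $C^+-C^-=1$ and the $L^\infty$-bounds on $v^\pm$ get used; these checks are routine.
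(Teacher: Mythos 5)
Your proof is correct and follows essentially the same route as the paper, which simply summarizes the computations surrounding \eqref{89}--\eqref{90} (the correspondence $\mu\leftrightarrow m_\pm=\mu v^\pm$) and invokes Theorem \ref{thm4} for the passage to $IRHP2_p$. Your only additions are to make explicit the uniqueness bookkeeping and the reverse direction of Theorem \ref{thm4} needed for the final display, both of which the paper leaves implicit.
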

Finally, if $f\in L^\infty(\Sg)$ and $v^\pm - I  \in L^p(\Sg)$, then
\eqref{91} remains valid provided we interpret
\beq
(1-C_\om)^{-1}\, f \equiv f + (1-C_\om)^{-1} \;C_\om \, f.
\label{93}
\eeq
This is true, in particular, for the normalized RHP $(\Sg, v)_p$ where
$f \equiv I$.

\begin{remark}
If $1-C_\om$ is invertible, for one choice of $v^\pm$,  then (\tbf{exercise})
it is invertible for all choices of $v^\pm$ such that
\[
v= (v^-)^{-1} \, v^+, \qquad v^\pm, \quad (v^\pm)^{-1} \in L^\infty(\Sigma).
\]
\end{remark}

Note that if we take $v^+ = v, \quad v^-=I$, in particular, then
\beqs
C_\om\,h = C^- \lf(h(v-I)\rt).
\eeqs
The above Proposition implies, in particular, that if $\mu \in I + L^p$
solves
\beq
(1-C_\om)\,\mu =I
\label{95}
\eeq
in the sense of \eqref{93} i.e.\ $\mu=I+\nu$, $\; \nu\in L^p$
\beq
(1-C_\om)\, \nu = C_\om \, I = C^+\,\om_- + C^- \, \om_+
\in L^p
\label{96}
\eeq
then $m_\pm = \mu\,v^\pm$ solves the normalized RHP $(\Sg, v)_p$.
\tit{It is in this precise sense that the solution of the normalized RHP is equivalent
to the solution of a singular integral equation \eqref{95}, \eqref{96} on $\Sg$}.

One  very important consequence of the proof of Proposition \ref{prop1} is given by the following
\begin{corol}\label{cor1}
Let $f\in L^p(\Sg)$.

Let $m_\pm$ solve $IRHP1_p$ with the given $f$ and let $M_\pm$ solve $IRHP2_p$
with $F=f(v-I)$. Then
\begin{align}
\|(1-C_\om)^{-1}\, f\|_p &\le c\, \| m_\pm\|_p \label{97}\\
\intertext{and}
\| (1-C_\om)^{-1} \,f\|_p &\le c' \lf( \|M_\pm\|_p + \|f\|_p\rt)
\label{98}
\end{align}
for some constants $c= c_p,\; c'=c'_p$.  In particular if we know, or can
show, that $\|m_\pm \|_p \le \ndconst \|f\|_p$, or $\|M_\pm\|_p \le
\ndconst \|f\|_p$, then we can conclude from \eqref{97} or \eqref{98} that $(1-C_\om)^{-1}$
is bounded in $L^p$ with a corresponding bound.  Conversely if we know that
$(1-C_\om)^{-1}$ exists, then the above calculations show that $\|m_\pm\|_p
\le \tc \|f\|_p$ and $\|M_\pm \|_p \le \ttc \|f\|_p$ for corresponding constants
$\tc, \; \ttc$.
\end{corol}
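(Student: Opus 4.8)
The plan is to extract everything from equation~\eqref{91} of Proposition~\ref{prop1}, which already records the dictionary between the solutions of $IRHP1_p$, of $IRHP2_p$, and the resolvent $(1-C_\om)^{-1}$; the only fresh ingredient is the elementary fact that, since $v^\pm,(v^\pm)^{-1}\in L^\infty(\Sg)$, multiplication by $v^\pm$ and by $(v^\pm)^{-1}$ defines a bounded operator on $L^p(\Sg)$ with norm at most $\|v^\pm\|_{L^\infty}$, resp.\ $\|(v^\pm)^{-1}\|_{L^\infty}$.

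For the first estimate, let $m_\pm$ solve $IRHP1_p$ with data $f$. The computation preceding Proposition~\ref{prop1} shows that $\mu:=m_\pm(v^\pm)^{-1}$ solves $(1-C_\om)\mu=f$, so by \eqref{91} we have $(1-C_\om)^{-1}f=m_+(v^+)^{-1}=m_-(v^-)^{-1}$; hence
\[
\|(1-C_\om)^{-1}f\|_p=\|m_\pm(v^\pm)^{-1}\|_p\le \|(v^\pm)^{-1}\|_{L^\infty}\,\|m_\pm\|_p,
\]
which is \eqref{97} with $c=c_p=\min_\pm\|(v^\pm)^{-1}\|_{L^\infty}<\infty$. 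If instead $M_\pm$ solves $IRHP2_p$ with $F=f(v-I)$, then by the easy half of Theorem~\ref{thm4} the functions $f+M_\pm$ solve $IRHP1_p$, and the same line of \eqref{91} reads $(1-C_\om)^{-1}f=(M_\pm+f)(v^\pm)^{-1}$; the triangle inequality then gives
\[
\|(1-C_\om)^{-1}f\|_p\le \|(v^\pm)^{-1}\|_{L^\infty}\lf(\|M_\pm\|_p+\|f\|_p\rt),
\]
which is \eqref{98} with $c'=c'_p=\min_\pm\|(v^\pm)^{-1}\|_{L^\infty}$. The two stated consequences are now immediate: substituting $\|m_\pm\|_p\le\ndconst\|f\|_p$ into \eqref{97}, or $\|M_\pm\|_p\le\ndconst\|f\|_p$ into \eqref{98}, shows that $(1-C_\om)^{-1}$ is bounded on $L^p$, with the corresponding bound.

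For the converse, assume $(1-C_\om)^{-1}\in\mL(L^p)$ and run the second half of the proof of Proposition~\ref{prop1} in reverse: set $\mu:=(1-C_\om)^{-1}f\in L^p(\Sg)$, so that $(1-C_\om)\mu=f$; then $H:=C\lf(\mu(\om^++\om^-)\rt)$ has boundary values $H_\pm=\mu v^\pm-f$, so $m_\pm:=\mu v^\pm$ solves $IRHP1_p$ with data $f$, while $M_\pm:=H_\pm=m_\pm-f\in\p C(L^p)$ solves $IRHP2_p$ with $F=f(v-I)$; by Proposition~\ref{prop1} these are the \emph{unique} such solutions. Therefore
\[
\|m_\pm\|_p\le\|v^\pm\|_{L^\infty}\|\mu\|_p\le\|v^\pm\|_{L^\infty}\,\|(1-C_\om)^{-1}\|_{\mL(L^p)}\,\|f\|_p=:\tc\,\|f\|_p,
\]
and $\|M_\pm\|_p\le\|m_\pm\|_p+\|f\|_p\le(\tc+1)\|f\|_p=:\ttc\,\|f\|_p$, which are the asserted reverse estimates.

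I do not expect a genuine obstacle: the corollary is a repackaging of \eqref{91} together with the boundedness of the multiplication operators $v^\pm$ and $(v^\pm)^{-1}$. The only care needed is bookkeeping --- confirming that, once $1-C_\om$ is invertible, the solutions $m_\pm$, $M_\pm$ named in the statement are uniquely determined (this is exactly Proposition~\ref{prop1}, the uniqueness flowing from $C^+h=C^-h\Rightarrow h=0$), and checking that every constant produced, $c_p$, $c'_p$, $\tc$, $\ttc$, depends only on $p$, on $\|v^\pm\|_{L^\infty}$ and $\|(v^\pm)^{-1}\|_{L^\infty}$, and, for the reverse bounds, on $\|(1-C_\om)^{-1}\|_{\mL(L^p)}$, but never on $f$.
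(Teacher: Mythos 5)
Your argument is correct and is exactly the route the paper intends: the corollary is stated as an immediate consequence of the identities in \eqref{91} from the proof of Proposition~\ref{prop1}, combined with the observation that multiplication by $v^\pm$ and $(v^\pm)^{-1}$ is bounded on $L^p(\Sg)$ since $v^\pm,(v^\pm)^{-1}\in L^\infty(\Sg)$. Your bookkeeping of the constants and the converse direction (via $m_\pm=\mu v^\pm$, $M_\pm=m_\pm-f$ with $\mu=(1-C_\om)^{-1}f$) matches the paper's calculations, so there is nothing to add.
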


Finally we consider uniqueness for the solution of the normalized RHP
$(\Sg, v)_p$ as given in Definition \ref{def3}.
Observe first that if $F(z) = (Cf)(z)$ for $f\in L^p(\Sg)$ and $G(z) =
(Cg)(z)$ for $g\in L^q(\Sg)$, $\; \frac{1}{r} = \frac{1}{p}+ \frac{1}{q}
\le 1$, $\; 1<p,\,q < \infty$, then a simple computation shows that
\beq
FG(z) = Ch(z)
\label{99}
\eeq
where
\beq
h(s) = - \frac{1}{2i}\; \lf(g(s) (Hf)(s) + f(s) (Hg)(s)\rt)
\label{100}
\eeq
where again $H\,f(s)= \text{ Hilbert transform } = \lim_{\ep \downarrow 0}
\int_{|s'-s| > \ep}\; \frac{f(s')}{s-s'}\; \frac{ds'}{\pi}$, and similarly
for $Hg(s)$.  As $h$ clearly lies in $L^r(\Sg),\; r \ge 1$, it follows
that
\beqs
F_+ \,G_+(z) - F_-\, G_-(z) = h(z) \qquad \text{for a.e. } \quad z\in \Sg.
\eeqs
(\tbf{Note:} $C^+\,h(z)- C^-\,h(z)=h(z)$ even if $h$ is in $L^1$, even though
$C^\pm$ is not bounded in $L^1$.)

\begin{theorem}\label{thm5}
Fix $1< p<\infty$.  Suppose $m_\pm$ solves the normalized RHP $(\Sg, v)_p$.
Suppose that $m^{-1}_\pm$ exists a.e.\ on $\Sg$ and $m^{-1}_\pm \in I + \p \,
C(L^q)$, $1<q < \infty$, $\frac{1}{r} = \frac{1}{p}+ \frac{1}{q} \le 1$.
Then the solution of the normalized RHP $(\Sg, v)_p$ is unique.
\end{theorem}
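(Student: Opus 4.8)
The approach I would take is the classical ``the quotient of two solutions has no jump'' argument, executed carefully in the $L^p$ category with the help of the Cauchy product formula \eqref{99}--\eqref{100}. Let $\tilde m_\pm$ be any solution of the normalized RHP $(\Sg,v)_p$; by Definition \ref{def3} there are $\psi,h\in L^p(\Sg)$ with $m_\pm-I=C^\pm\psi$ and $\tilde m_\pm-I=C^\pm h$, and by hypothesis there is $g\in L^q(\Sg)$ with $m_\pm^{-1}-I=C^\pm g$. Since $m_\pm$ is invertible a.e.\ on $\Sg$, the product $R_\pm:=\tilde m_\pm\,m_\pm^{-1}$ is defined a.e.\ on $\Sg$, and using $m_+=m_-v$ (hence $m_+^{-1}=v^{-1}m_-^{-1}$) together with $\tilde m_+=\tilde m_-v$ one finds
\[
R_+=\tilde m_+\,m_+^{-1}=\tilde m_-\,v\,v^{-1}\,m_-^{-1}=\tilde m_-\,m_-^{-1}=R_-
\qquad \text{a.e.\ on } \Sg.
\]

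The heart of the argument is to upgrade ``$R$ has no jump'' to ``$R_\pm\equiv I$''. Expanding, $R_\pm=I+C^\pm h+C^\pm g+(C^\pm h)(C^\pm g)$, and the product term is controlled entrywise by \eqref{99}--\eqref{100}: writing $F=Ch$ and $G=Cg$ on $\CC\setminus\Sg$, one has $FG=C\phi$ with $\phi=-\tfrac1{2i}\bigl(g\,Hh+h\,Hg\bigr)\in L^r(\Sg)$, using that $H$ maps $L^p\to L^p$ and $L^q\to L^q$ and that $\tfrac1r=\tfrac1p+\tfrac1q\le1$ forces $r\ge1$. Hence $(C^\pm h)(C^\pm g)=(FG)_\pm=C^\pm\phi$, so that $R_\pm-I=C^\pm h+C^\pm g+C^\pm\phi$. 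Subtracting and using $C^+-C^-=1$ on each $L^s$ with $s\ge1$, the identity $R_+-R_-=0$ becomes $h+g+\phi=0$ a.e.\ on $\Sg$; feeding this back, linearity of the Cauchy integral gives $R_\pm-I=C^\pm(h+g+\phi)=0$. Therefore $R_\pm=I$, i.e.\ $\tilde m_\pm=m_\pm$ a.e.\ on $\Sg$; then $C^\pm(h-\psi)=0$, whence $h-\psi=C^+(h-\psi)-C^-(h-\psi)=0$, so the extensions $I+Ch$ and $I+C\psi$ also coincide and the solution is unique.

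The step I expect to be the main obstacle is the product step: one must know that the pointwise product of the boundary limits $C^\pm h$ and $C^\pm g$ --- a priori only the product of an $L^p$ function with an $L^q$ function --- is again the $C^\pm$-boundary value of a single density lying in $L^r$, and one must be careful in the borderline case $r=1$, where $C^\pm$ is not bounded on $L^r$ but $C^+-C^-=1$ and linearity of the Cauchy integral still hold. This is exactly what the Hilbert-transform product formula \eqref{99}--\eqref{100} supplies, and it is the reason the constraint $\tfrac1r=\tfrac1p+\tfrac1q\le1$ appears in the hypothesis; the remaining steps are purely formal manipulations of the jump relations.
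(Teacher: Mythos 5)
Your argument is correct and is essentially the paper's own proof: form $R_\pm=\tilde m_\pm m_\pm^{-1}$, use the Cauchy product formula \eqref{99}--\eqref{100} to exhibit $R_\pm-I$ as $C^\pm$ of a single density in $L^p+L^q+L^r$, and then use the jump relation $C^+-C^-=1$ (valid even for $L^1$ densities) together with $R_+=R_-$ to force that density, and hence $R_\pm-I$, to vanish. Your extra care with the borderline case $r=1$ and with identifying the extensions at the end only makes explicit what the paper leaves implicit.
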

\begin{proof}
Suppose $\hm_\pm = I + C^\pm \, \hh, \;\;\hh\in L^p(\Sg)$ is a 2$^{nd}$
solution of the normalized RHP.  We have, by assumption, $m^{-1}_\pm =
I+C^\pm\,k$ for some $k\in L^q(\Sg)$.  (It is an \tbf{Exercise} to show
that $I+(Ck)(z)$, the extension of $m^{-1}_\pm$ to $\CC \backslash \Sg$,
is in fact $m(z)^{-1}$.).

Then arguing as above
\begin{align*}
\hm_\pm \, m^{-1}_\pm -I &= \lf(\hm_\pm -I\rt) \lf(m^{-1}_\pm - I\rt)
+ \lf(\hm_\pm -I\rt) + \lf(m^{-1}_\pm -I\rt)\\
&= C^\pm \, h
\end{align*}
for some $h\in L^r(\Sg) + L^p(\Sg) + L^q(\Sg)$.

\noi
Hence
\[
\hm_+ \;m^{-1}_+ - \hm_- \; m^{-1}_- = h.
\]
But
\[
\hm_+\, m^{-1}_+ = \lf(\hm_-\, v\rt) \lf(m_-\, v\rt)^{-1} = \hm_-\, m^{-1}_-
\]
and so $h=0$.  Thus $\hm_\pm\, m^{-1}_\pm-I=0$ or $\hm_\pm = m_\pm$.
\end{proof}

\begin{theorem}\label{thm6}
If $n=2,\; p=2$ and $\det v(z) =1 \text{ a.e. on } \;\Sg$, then the solution of
the normalized RHP $(\Sg, v)_2$ is unique.
\end{theorem}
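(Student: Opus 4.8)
The plan is to deduce the statement from Theorem \ref{thm5}. That theorem gives uniqueness of the normalized RHP $(\Sg,v)_p$ as soon as one knows that $m_\pm^{-1}$ exists a.e.\ on $\Sg$ and that $m_\pm^{-1}\in I+\p C(L^q)$ for some $1<q<\infty$ with $\frac1r=\frac1p+\frac1q\le1$. Taking $p=2$, it therefore suffices to exhibit such a $q$, and the natural and only choice here is $q=2$, for which $\frac1r=1\le1$. So the whole problem reduces to showing: if $m_\pm=I+C^\pm h$ with $h\in L^2(\Sg)$ solves $(\Sg,v)_2$, then its inverse is again of the form $I+C^\pm\tilde h$ with $\tilde h\in L^2(\Sg)$.

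The key intermediate step is to prove that the extension $m(z)=I+Ch(z)$ satisfies $\det m(z)\equiv1$ on $\CC\setminus\Sg$. Since $n=2$ we have $\det(I+B)=1+\mathrm{tr}\,B+\det B$, so $\det m(z)-1=\mathrm{tr}\,(Ch(z))+\det(Ch(z))$. Here $\mathrm{tr}\,(Ch)=C(\mathrm{tr}\,h)$ with $\mathrm{tr}\,h\in L^2(\Sg)$, while $\det(Ch)$ is a difference of two products of scalar Cauchy transforms of $L^2$ functions, each of which is, by \eqref{99}--\eqref{100}, the Cauchy transform of an $L^1(\Sg)$ function. Hence $\det m(z)-1=Cg_0(z)$ for some $g_0\in L^1(\Sg)+L^2(\Sg)$, so that $g_0=(\det m-1)_+-(\det m-1)_-$ a.e.\ on $\Sg$ (the relation $C^+-C^-=1$ being valid even on $L^1$, as noted after \eqref{100}). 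But $\det m_+=\det(m_-v)=\det m_-\cdot\det v=\det m_-$ a.e., because $\det v=1$ a.e.; therefore $g_0=0$ a.e., and $\det m(z)-1=C(0)=0$.

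Granting $\det m\equiv1$, the $2\times2$ inverse is simply the adjugate: $m(z)^{-1}=\mathrm{adj}(m(z))$ for $z\in\CC\setminus\Sg$. Writing $m=I+Ch$ and using that $\mathrm{adj}$ merely interchanges the two diagonal entries and negates the two off-diagonal entries, one reads off $m(z)^{-1}=I+C\tilde h(z)$, where $\tilde h$ is obtained from $h$ by the same swap-and-negate and so lies in $L^2(\Sg)$. Passing to non-tangential limits, $m_\pm^{-1}=(m^{-1})_\pm=I+C^\pm\tilde h\in I+\p C(L^2)$, and $m_\pm$ is invertible a.e.\ on $\Sg$ since $\det m_\pm=1$ a.e. Now Theorem \ref{thm5} applies with $p=q=2$, $r=1$, and yields uniqueness of the solution of $(\Sg,v)_2$.

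The one step requiring genuine care is the identity $\det m\equiv1$: one must handle the fact that products of $L^2$ Cauchy transforms land only in $L^1$, invoke the multiplication formula \eqref{99}--\eqref{100}, and use that $C^+-C^-=1$ persists on $L^1(\Sg)$. Everything else is routine $2\times2$ linear algebra together with a direct citation of Theorem \ref{thm5}. I do not expect the hypothesis that $m(z)\to I$ at $\infty$ to be needed beyond what is already built into the definition of $(\Sg,v)_2$, since the representation $\det m-1=Cg_0$ lets us conclude $\det m\equiv1$ without appealing to Liouville's theorem.
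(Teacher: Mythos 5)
Your argument is correct and is essentially the paper's own proof: both use the multiplication formula \eqref{99}--\eqref{100} together with $\det v=1$ and the jump relation $C^+-C^-=1$ on $L^1+L^2$ to conclude $\det m\equiv 1$, then express $m_\pm^{-1}$ as the adjugate (a swap-and-negate of the entries of $m_\pm$) to see that $m_\pm^{-1}\in I+\partial C(L^2)$, and finally invoke Theorem \ref{thm5} with $p=q=2$, $r=1$. Your write-up merely spells out more explicitly the expansion $\det(I+B)=1+\operatorname{tr}B+\det B$ and the observation that Liouville's theorem is not needed, both of which are implicit in the paper's version.
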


\begin{proof}
Because $n=2$ and $p=2$, \eqref{99}, \eqref{100} $\Longrightarrow (\det m(z))_\pm = 1
+ C^\pm\,h$, where $h\in L^1(\Sg)+ L^2 (\Sg)$ and so $(\det m)_+ - (\det m)_-
= h(z)$ a.e.  But $\det m_+= \det m_-$ as $\det v=1$, and so $h\equiv 0$.
But then $\det m(z)_\pm =1$.  Hence, if
\begin{align*}
m_\pm &= \begin{pmatrix}
m_{11 \, \pm} & m_{12\, \pm} \\
m_{21 \, \pm} & m_{22\, \pm}\end{pmatrix} \\  
\intertext{we have}
m^{-1}_\pm &= \begin{pmatrix}
m_{22 \, \pm} &-m_{12\, \pm} \\
-m_{21 \, \pm} & m_{11\, \pm}\end{pmatrix}
\end{align*}
and so clearly $m^{-1}_\pm \in I + \p \,C(L^2)$.  The result now follows from Theorem \ref{thm5}.
\end{proof}

These results immediately imply that the normalized RHP $\lf(\Sg=\RR, \;
v_{x,t}\rt)$ for MKdV with $v_{x,t}$ given by \eqref{36}
has a unique solution in $L^2(\RR)$.
%
Indeed, factorize
\beqs
\begin{aligned}
v_{x,t}(z) = (v^-_{x,t})^{-1}\; \; v^+_{x,t} &= \lf(I-w^-_{x,t}\rt)^{-1}
\lf(I+ w_{x,t}^+\rt)
\\
&= \begin{pmatrix}
1 & -\bar{r}\,e^{-2i\tau} \\
0 & 1
\end{pmatrix} \;
\begin{pmatrix}
1 & 0 \\
re^{2i\tau} & 1
\end{pmatrix}
\end{aligned}
\eeqs
so that
\beqs
w_{x,t} = \lf(w^-_{x,t},\; w^+_{x,t}\rt) = \lf(
\begin{pmatrix}
0 & -\bar{r}e^{-2i\,\tau}\\
0&0
\end{pmatrix}, \;
\begin{pmatrix}
0 & 0 \\
re^{2i\,\tau} &0
\end{pmatrix}
\rt)
\eeqs
But for $\Sg=\RR$, we have
\begin{exer}\label{exer11}
Both $C^+$ and $-C^-$ are orthogonal projections in $L^2(\RR)$
and so $\|C^\pm \|_{L^2}=1$.
\end{exer}

Using the Hilbert-Schmidt matrix norm $\|M\|=
\lf(\sum_{i, j} \;|M_{ij}|^2 \rt)^{\frac{1}{2}}$, we have
\begin{align*}
\|C_{\om_{x,t}}\,h\|^2_{L^2} &= \lf\|C^+
\begin{pmatrix}
h_{11} & h_{12}\\
h_{21} & h_{22}
\end{pmatrix}
\begin{pmatrix}
0 & -\bar{r}\,e^{-2i\tau}\\
0&0
\end{pmatrix}\rt.
\\
&\qquad \qquad + C^- \lf(
\begin{pmatrix}
h_{11} & h_{12} \\
h_{21} & h_{22}
\end{pmatrix}
\begin{pmatrix}
0 & 0\\
re^{2i\tau} &0
\end{pmatrix}
\rt\|^2_{L^2} \\
&=\lf\| \begin{pmatrix}
0 & C^+\,h_{11} (-\bar{r})\, e^{2i\tau}\\
0&C^+\,h_{21} (-\bar{r})\, e^{-2i\tau}
\end{pmatrix} \rt. \\
&\qquad\qquad\qquad\lf. + \begin{pmatrix}
C^- \, h_{12}\,r e^{2i\tau} &0 \\
C^-\,h_{22} \, r e^{2i\tau} &0
\end{pmatrix} \rt\|^2
\\
&= \|C^-\,h_{12}\, r e^{2i\tau}\|^2_{L^2} + \|C^-\, h_{22} \,
re^{2i\tau}\|^2_{L^2} \\
& \quad + \|C^+\,h_{11} (-\bar{r})\,e^{-2i\tau}\|^2_{L^2} +
\|C^+\, h_{21} (-\bar{r})\, e^{-2i\tau} \|^2_{L^2} \\[3pt]
&\le \|r\|^2_\infty \lf( \|h_{12}\|^2_{L^2} + \|h_{22}\|^2_{L^2}
+\|h_{11}\|^2_{L^2} + \|h_{21}\|^2_{L^2} \rt)
\\[3pt]
&= \|r_\infty\|^2 \;\|h\|^2_{L^2}
\end{align*}
and so, as $\|r\|_\infty <1$,
\beqs
\|C_{\om_{x,t}}\|<1. 
\eeqs

It follows that for each $x, t\in \RR$, $\lf(1-C_{\om_{x,t}} \rt)^{-1}$
exists in $L^2(\RR)$ and
\beqs
\lf\| \lf( 1-C_{\om_{x,t}}\rt)^{-1} \rt\|_{L^2} \le \frac{1}{1-\|r\|_\infty}
< \infty 
\eeqs
and the proof of the existence and uniqueness for $\lf(\Sg, v_{x,t}\rt)$
follows from Proposition \ref{prop1}.  On the other hand, just uniqueness alone
follows from Theorem \ref{thm6} as $\det v(z) \equiv 1$ on $\RR$.

Now it turns out that a key role in the theory of RHPs is played by
\tit{Fredholm operators}.  Recall that a bounded linear operator $T$ from a
Banach space  $X$ to a Banach space $Y$ is \tit{Fredholm} if
\beqs
\dim \ker T < \infty
\eeqs
and
\beqs
\dim  \text{ coker }\; T < \infty
\qquad\text{i.e. $Y$/ran $T$ is a finite dimensional space.}
\eeqs
\beqs
\text{If $T$ is Fredholm,\quad we define index
$T \equiv \dim \ker T - \dim$ coker $T$.}
\eeqs
\begin{exer}\label{exer12}
If $T:X \to Y$ is Fredholm, then ran $T$ is closed in $Y$.
\end{exer}
\begin{exer}\label{exer13}
$T:X\to Y$ is Fredholm iff it has a \tit{pseudo-inverse}
$S\in \mL(Y, X)$ such that $ST = 1_X + K$ and $TS = 1_Y +L$
where $K$ is a compact operator in $\mL(X)$ and 
$L$ is a compact operator in $\mL(Y)$.
\end{exer}

We know that a normalized RHP $(\Sg, v)_p$, say, has a (unique) solution
if $(1-C_\om)^{-1}$ exists.  The situation where we know, for example,
that $\|C_\om\|_{L^2} <1$, as in the example $\lf(\Sg= \RR, v_{x,t}\rt)$
above so that $(1-C_\om)^{-1}$ exists,
is very rare.  For example, for the KdV equation on $\RR$
\begin{equation*}
	\begin{split}
		& u_t + 6 uu_x + u_{xxx} =0,\\
		& u(x,0) = u_0(x) \to 0 \qquad\text{as} \qquad |x|\to \infty,
	\end{split}
\end{equation*}
the associated RHP is exactly the same as $(\RR, v_{x,t})$ for MKdV, except
that now, generically,
\beq
|r(z)| <1 \qquad\text{for}\qquad |z| >0
\label{110}
\eeq
but
\beq
|r(0)| = 1.
\label{111}
\eeq
Thus
\(
\|r\|_\infty =1
\)
and the above proof of the existence and uniqueness for the RHP breaks down.
A more general approach to proving the existence and uniqueness of
solutions to normalized RHPs, is to attempt the following:
\def\ind{\mathrm{ind}}
\begin{itemize}
\item Prove $1-C_\om$ is Fredholm.
\item  Prove $\ind \lf(1-C_\om \rt)  =0$.
\item Prove $ \dim \ker \lf(1-C_\om\rt)  =0 $.
\end{itemize}
Then it follows that $1-C_\om$ is a bijection, and hence the normalized RHP
$(\Sg, v)$ has a unique solution.

Let's see how this goes for KdV with normalized RHP $\lf(\Sg=\RR,
v_{x,t}\rt)$,  but now $r$ satisfies \eqref{110}, \eqref{111}.
By our previous comments (see \tbf{Remark} above), it is enough to consider the
special case $v^+=v, \;\; v^-=I\;\;$ so that $\om^+=v-I\;$ and $\;\om^-=0$.
Thus
\beqs
C_\om\,h= C^-\,h\lf(v-I\rt).
\eeqs
We assume $r(z)$ is continuous and $r(z) \to 0$ as $|z| \to \infty$.
Let $S$ be the operator
\beqs
Sh= C^-\,h\lf(v^{-1}-I\rt).
\eeqs
Then
\begin{align*}
C_\om\, Sh &= C^- \lf(Sh(v-I)\rt) \\
&= C^- \lf[ \lf( C^- h \lf(v^{-1}-I\rt) \rt)(v-I)\rt]\\
&= C^- \lf[\lf( C^+ h \lf(v^{-1}-I\rt) \rt)(v-I) \rt] \\
&\quad -C^- \lf[ h  \lf(v^{-1}-I\rt) (v-I) \rt]
\end{align*}
as $C^+ - C^- =1$.  But $h\lf(v^{-1}-I\rt) (v-I)=h\lf(2I-v-v^{-1}\rt)
= h(I-v)+h\lf(I-v^{-1}\rt)$.  

Thus
\begin{align*}
C_\om \, Sh &= C^- \lf[ \lf(
C^+\, h \lf(v^{-1}-I\rt) \rt)(v-I) \rt] \\
& + C^- \lf(h \lf(v^{-1}-I\rt)\rt) + C^- \lf(h \lf(v-I\rt)\rt) \\
&= C^- \lf[ \lf(C^+ \,h\lf(v^{-1}-I\rt) \rt) (v-I)\rt]+ C_\om\,h + Sh
\end{align*}
and we see that
\[
(1-C_\om) (1-S)h =h +C^- \lf[ \lf( C^+\,h \lf(v^{-1}-I\rt) \rt) (v-I)\rt].
\]
But
\begin{exer}\label{exer14}
$\quad K\,h= C^- \lf[ \lf(C^+\,h \lf(v^{-1}-I\rt) \rt) (v-I) \rt] \quad
\text{is compact in $L^2(\RR).$}$\
\end{exer}
Hint: $v-I$ is a continuous function which $\to 0$ as $|z| \to \infty$ and
hence can be approximated in $L^\infty(\RR)$ by finite linear combinations of
functions of the form $a/(z-z')$ for suitable constants $a$ and points $z' \in
\CC \backslash \RR$.
Then use the following fact:
\begin{exer}
If $T_n, \; n \ge 1$ are compact operators in $\mL (X, Y)$ and $\|T_n-T\|
\to 0$ as $n\to \infty$ for some operator $T\in \mL (X,Y)$, then $T$
is compact.
\end{exer}

Similarly
\[
(1-S) (1-C_\om) =1 + L, \qquad L \text{ compact.}
\]
Thus $(1-C_\om)$ is Fredholm.

Now we use the following fact:
\begin{exer}
Suppose that for $\ga \in [0,1]$, $T(\ga)$ is a norm-continuous family of
Fredholm operators.  Then for $\ga \in [0,1]$,
\beqs
\ind T(\ga) = \ndconst = \ind T(0) = \ind T(1).
\eeqs
\end{exer}
Apply this fact to $C_{\om(\ga)}$, where we replace $r$ by $\ga r$ in $v_{x,t}$,
\begin{align*}
v_{x,t,\ga}  = \begin{pmatrix}
1- \ga^2|r|^2 &-\ga\,\bar{r} e^{-2i\tau} \\
\ga \,r e^{2i\tau} &1
\end{pmatrix}.\end{align*}
The proof above shows that $C_{\om(\ga)}$ is a norm continuous
family of Fredholm operators and so $\ind (1-C_\om) = \ind
\lf(1-C_{\om(\ga=1)} \rt) = \ind \lf(1-C_{\om(\ga=0)}\rt)=0$ as
$C_{\om(\ga=0)}=0$ and the index of the identity operator is clearly $0$.

Finally suppose
\[
(1-C_\om) \, \mu=0.
\]
Then using \eqref{90}, $m_+ = \mu v$ and $ m_-=\mu$ solve
$m_+ = m_-\,v, m_\pm \in \p C(L^2)$.

Consider $P(z) = m(z) \lf(m(\bar{z})\rt)^*$ for $z\in \CC^+$ where $m(z)$ is
the extension of $m_\pm\;$ off $\;\RR$ i.e. if $m_\pm = C^\pm\,h$,
$\;h\in L^2$, then $m(z)= (Ch)(z)$.  Then for a contour $\Ga_{R,\, \ep}$, pictured in Figure~\ref{fig:19},
$\int_{\Ga_{R, \ep}} \,P(z)\,dz=0$ as $P(z)$ is analytic.

\begin{figure}[H]
\centering
\begin{tikzpicture}[scale=.7]
\draw[line width = 1] (-2,0) to (2,0);
\node [below] at (0,0) {$0$};
\draw[line width = 1, directed] (-1.95, .4) to (1.98,.4);
\draw[->] (0,0) to (1.2,1.6);
\node [left] at (.8, 1.05) {\scalebox{.8}{$R$}};

\node[]  at(-1,.19) {$\epsilon$};
\node[above]  at(-1,.24) {$\downarrow$};
\node[below]  at(-1,0.15) {$\uparrow$};
\draw[line width = 1, directed, domain = 12:168] plot ({2*cos(\x)}, {2*sin(\x)});
\end{tikzpicture}
\caption{\label{fig:19} A semi-circle $\epsilon$ above the real axis.} 
\end{figure}
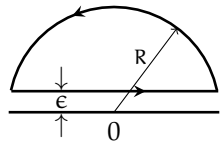

Letting $\ep \downarrow 0$ and $R\to\infty$, we obtain (\tbf{exercise})
$\int^\infty_{-\infty} P_+(z)\, dz=0$; i.e.
\[
0= \int_\RR m_+(z) \; m_-(z)^*\,dz = \int_\RR
m_-(z) \, v(z) \, m_-(z)^*\, dz.
\]
Taking adjoints and adding, we find
\[
0 =  \int_\RR m_-(z) \lf(v+v^*\rt)(z) \; m_-(z)^*\, dz.
\]
But a direct calculation shows that $v+ v^*$ is diagonal and
\[
\lf(v+v^*\rt)(z) = 2 \begin{pmatrix}
1-|r(z)|^2 & 0 \\
0 & 1
\end{pmatrix} \ .
\]
Now since $|r(z)|<1$ a.e. (in fact everywhere except $z=0$), we conclude that
$m_-(z)=0$. But $\mu =m_-$ and so we see that $\ker \lf(1-C_\om\rt)= \{0\}$.

The result of the above chain of arguments is that the solution of the normalized RHP
$\lf(\Sg, v_{x, t}\rt)$ for KdV exists and is unique.  Such Fredholm
arguments have wide applicability in Riemann--Hilbert Theory \cite{FokasPainleve}.

One last general remark.  The scalar case $n=1$ is special.  This is because
the RHP can be solved explicitly by formula.  Indeed, if $m_+=m_-\,v$,
then it follows that $(\log m)_+ = (\log m_-) + \log v$ and hence $\log m(z)$ is given by
Plemelj's formula, which provides the general solution of \tit{additive} RHPs,
via
\[
\log m = C\lf(\log v\rt)(z) = \int_\Sg \frac{\log v(s)}{s-z}\; \dts
\]
and so
\beq
m(z) = \exp\lf( \int_\Sg \;\frac{\log v(s)}{s-z}\; \dts\rt)
\label{116}
\eeq
a formula which is easily checked directly.  However, there is a hidden
subtlety in the business:  On  $\RR$, say, although $v(s)$ may go rapidly
to $0$ as $s\to \pm \infty$, $v(s)$ may wind around $0$ and so $\log v(s)$
may not be integrable at both $\pm \infty$.  Thus there is a \tit{topological}
obstacle to the existence of a solution of the RHP.  If $n>1$, there are many
more such ``hidden'' obstacles.

\section*{Lecture 4}
\setcounter{equation}{0}\setcounter{section}{4}\addcontentsline{toc}{section}{Lecture 4}

RHP's arise in many difference ways.
For example, consider orthogonal polynomials: we are given a measure $\mu$
on $\RR$ with finite moments,
\beqs
\int_\RR |x|^m \, d\mu(x) < \infty \qquad \text{ for } \quad m=0, 1, 2, \dots
\eeqs
Performing Gram-Schmidt on $1, x, x^2, \dots$ with respect to $d\mu(x)$,
we obtain (monic) orthogonal polynomials
\beqs
\pi_n(x) = x^n + \dots, \qquad , \;n\ge 0
\eeqs
such that
\beqs
\int_\RR \pi_n(x) \, \pi_m(x) \, d\mu u(x) =0, \qquad n\neq m, \qquad n, m \ge 0.
\eeqs
(Here we assume that $d\mu$ has infinite support: otherwise there are only a finite
number of such polynomials.)

Associated with the $\pi_n$'s are the orthonormal polynomials
\beq
P_n(x) = \ga_n \,\pi_n(x), \qquad \ga_n>0, \qquad n \ge 0
\label{120}
\eeq
such that
\beqs
\int_{\RR} P_n(x) \, P_m(x) \, d\mu(x) = \de_{n,m}, \qquad n, m\ge 0.
\eeqs
Orthogonal polynomials are of great historical and continuing importance in
many different areas of mathematics, from algebra, through combinatorics, to
analysis.  The classical orthogonal polynomials, such as the Hermite
polynomials, the Legendre polynomials, the Krawchouk polynomials, are well
known and much is known about their properties.  In view of our earlier
comments it should come as no surprise that much of this knowledge,
particularly  asymptotic properties, follows from the fact that these
polynomials have integral representations analogous to the integral
representation for the Airy function in the first lecture.  For example,
for the Hermite polynomials
\[
\int_{\RR} H_n(x) \, H_m(x) \, e^{-x^2} \, dx =0 \qquad n \neq m, \qquad n, m \ge 0
\]
one has the integral representation
\beqs
H_n(x) = n! \int_{\msC} \om^{-n-1} \, e^{2x\om-\om^2}\, d\om
\eeqs
where $\msC$ is a (small) circle enclosing the origin,
(Note: the $H_n$'s are not monic, but are proportional to the $\pi_n$'s,
$H_n(x) = c_n\, \pi_n(x)$ where the $c_n$'s are explicit)
and the asymptotic behavior of the $H_n$'s follow from the classical
steepest descent method.  For general weights, however, no such integral
representations are known.  

The Hermite polynomials play a key role in
random matrix theory in the so-called Gaussian Unitary, Orthogonal and
Symplectic Ensembles.  However it was long surmised that local properties
of random matrix ensembles were \tit{universal},  i.e., independent of the
underlying weights.  In other words if one considers general weights such as
$e^{-x^4}\, dx, \; e^{-(x^6+ x^4)}\, dx,$ etc., instead of the weight
$e^{-x^2}\,dx$
for the Hermite polynomials, the local properties of the
random matrices, at the technical level, boil down to analyzing the asymptotics
of the polynomials orthogonal with respect to the weights $e^{-x^4}\, dx$,
$e^{-(x^6+ x^4)}\, dx$, etc., for which no integral representations are known.
What to do?

It turns out however, that orthogonal polynomials with respect to
an \tit{arbitrary} weight can be expressed in terms of a RHP.
Suppose $d\mu(x)= \om(x)\,dx$, for some $\om(x)\ge 0$ such that
\beqs
\int_\RR |x|^m \, \om (x)\, dx < \infty, \qquad m=0,1,2, \dots.
\eeqs
and suppose for simplicity that
\beq
\om \in H^1(\RR) = \{ f \in L^2: f' \in L^2 \}.
\label{124}
\eeq

Fix $n \ge 0 $ and let $Y^{(n)}= \{ Y^{(n)}_{ij} (z) \}_{1 \le i, \, j \le 2}$
solve the RHP $\lf(\Sg= \RR, v=\lf( \begin{smallmatrix}
1 &\om \\
0 & 1
\end{smallmatrix}\rt) \rt)$ normalized so that
\beqs
Y^{(n)}(z)
\begin{pmatrix}
z^{-n} &0 \\
0 & z^n
\end{pmatrix} \to I \qquad \text{as }\qquad z\to \infty.
\eeqs
\begin{exer}\label{exer15} Show that we then have (see e.g.\ \cite{DeiftOrthogonalPolynomials})
\beqs
Y^{(n)}(z) = \begin{pmatrix}
\pi_n(z) & C(\pi_n\, \om) \\
-2\pi i\, \ga_{n-1}^2 \, \pi_{n-1} (z) & C\lf(-2\pi\, i\ga^2_{n-1}\;
\pi_{n-1}\, \om \rt)
\end{pmatrix}
\eeqs
where $C= C_\RR$ is the Cauchy operator on $\RR$, $\pi_n, \;\pi_{n-1}$ are the
monic orthogonal polynomials with respect to $\om(x)\, dx$ and
$\ga_{n-1}$ is the normalization coefficient for $\pi_{n-1}$ as in \eqref{120}.
(Note that by \eqref{124} and Theorem \ref{thm2}, $Y^{(n)} (z)$ is continuous down to the
axis for all $z$.) This discovery is due to Fokas, Its and Kitaev \cite{FIK}.
Moreover this is just exactly the kind of problem to which
the nonlinear steepest descent method can be applied to obtain
(\cite{Deift1999,DeiftWeights4}) the
asymptotics of the $\pi_n$'s with comparable precision to the classical cases,
Hermite, Legendre, $\dots$, and so prove universality for unitary ensembles
(and later, Deift and Gioev, Shcherbina, for Orthogonal \& Symplectic Ensembles
of random matrices, see \cite{DeiftGioev} and the references therein).

As mentioned earlier, RHPs are useful not only for asymptotic analysis, but
also to analyze analytical and algebraic issues.  Here we show how RHPs give
rise to difference equations, or differential equations, in other situations.
\end{exer}

Consider the solution $Y^{(n)}$ for the orthogonal polynomial RHP $
\lf(\RR, \,v=\lf( \begin{smallmatrix} 1 &\om \\ 0 & 1
\end{smallmatrix}\rt) \rt)$.  The \tit{key fact} is
that the jump matrix $\lf(\begin{smallmatrix} 1 &\om \\ 0 & 1
\end{smallmatrix}\rt)$ is
\tit{independent} of $n$: the dependence on $n$ is only in the boundary
condition
\[
Y^{(n)} \begin{pmatrix}
z^{-n} &0 \\
0 & z^{+n}
\end{pmatrix} \to I \ .
\]
So we have $Y^{(n+1)}_+ = Y^{(n+1)}_- \, v$ and $Y^{(n)}_+ = Y^{(n)}_-\, v$.

Let $ R(z) = Y^{(n+1)}(z) \lf(Y^{(n)} (z) \rt)^{-1},  z\in \CC
\backslash \RR$.
Then
\begin{align*}
R_+(z) &= Y^{(n+1)}_+(z) \lf( Y^{(n)}_+ (z) \rt)^{-1}\\
&= \lf( Y^{(n+1)}_- (z) \; v(z) \rt) \lf( Y^{(n)}_- (z) \; v(z) \rt)^{-1} \\
&= Y^{(n+1)}_-(z) \lf(v(z) \; v(z)^{-1} \rt) \lf( Y^{(n)}_- (z) \rt)^{-1}\\
&= R_-(z).
\end{align*}
Hence $R(z)$ has no jump across $\RR$ and so, by an application of Morera's
Theorem, $R(z)$ is in fact entire.  But as $z\to \infty$
\begin{align*}
R(z) &= \lf[ Y^{(n+1)}(z) \begin{pmatrix}
z^{-n-1} &0 \\
0 & z^{n+1}
\end{pmatrix} \rt]
\begin{pmatrix}
z& 0 \\
0 & z^{-1}
\end{pmatrix} \lf[ Y^{(n)} (z)
\begin{pmatrix}
z^{-n} &0 \\
0 & z^n
\end{pmatrix} \rt]^{-1}\\
&= \lf( I +O \lf(\frac{1}{z} \rt) \rt)
\begin{pmatrix}
z & 0 \\
0 & z^{-1}
\end{pmatrix} \lf( I + O\lf(\frac{1}{z}\rt) \rt)
\\
&= O(z).
\end{align*}
Thus $R(z)$ must be a polynomial of order 1,
\begin{equation*}
Y^{(n+1)}(z) \lf(Y^{(n)}(z) \rt)^{-1} = R(z) = Az + B
\end{equation*}
for suitable $A$ and $B$, or,
\begin{equation}
Y^{(n+1)}(z) = \lf(Az + B\rt) Y^{(n)}(z) \label{127}
\end{equation}
which is  a difference equation for orthogonal polynomials with respect to a
fixed weight.
\begin{exer}\label{exer16}
Make the argument leading to \eqref{127} rigorous (why does
$\lf(Y^{(n)}\rt)^{-1}$ exist, etc.)
\end{exer}
\begin{exer}\label{exer17}
Show that \eqref{127} implies the familiar three term recurrence relation for
orthogonal polynomials $p_n(z)$
\[
b_n\;p_{n+1}(z) + \lf(a_n-z\rt) p_n(z) + b_{n-1}\, p_{n-1} =0, \qquad n \ge 0
\]
$a_n \in \RR, \quad b_n >0; \quad b_{-1} \equiv 0$.
\end{exer}

Whereas the RHP for orthogonal polynomials comes ``out of the blue'',
there are some systematic methods to produce RHP representations for
certain problems of interest.  This is true in particular for RHPs
associated with ordinary differential equations.  For example,
consider the ZS--AKNS equation (Zakharov-Shabat, Ablowitz-Kaup-Newell-Segur)
\beq
\lf[\p_x - \lf(i z\sg + \begin{pmatrix}
0 & q(x) \\
\bar{q}(x) &0
\end{pmatrix} \rt) \rt] \psi =0,\qquad -\infty < x < \infty
\label{128}
\eeq
(see e.g.\ \cite{deiftzhounls}).
Here $z\in \CC, \quad \sg =  \half \lf( \begin{smallmatrix}
1 & 0 \\
0 &-1
\end{smallmatrix}\rt)\;$ and $q(x) \to 0 $ at some sufficiently fast rate as
$|x| \to \infty $.  Equation \eqref{128} is intimately connected with the
defocusing Nonlinear Schr\"{o}dinger Equation (NLS) by virtue of the fact
that the operator
\beq
L= (i \sg)^{-1} \lf( \p_x - \begin{pmatrix}
0 & q \\
\bar{q} &0
\end{pmatrix} \rt)
\label{129}
\eeq
undergoes an \tit{isospectral deformation} if $q=q(t) = q(x,t)$ solves NLS
\beq\label{130}
\begin{aligned}
iq_t + q_{xx} -2|q|^2\, q=0 \\
q(x, t=0) = q_0(x).
\end{aligned}
\eeq
In other words, if $q=q(t)$ solves NLS then the \tit{spectrum} of
\[L(t)= (i\sg)^{-1} \lf( \p_x - \lf(\begin{smallmatrix}
0 & q(x,t) \\
\ovl{q(x,t)} &0
\end{smallmatrix} \rt)
\rt)\] is \tit{constant:} Thus the spectrum of $L(t)$ provides constants
of the motion for \eqref{130}, and so NLS is ``integrable''.  The key fact is
that there is a RHP naturally associated with $L$ which expresses the
integrability  of NLS in a form that is useful for analysis.  Here we
follow Beals and Coifman, see \cite{BC}.  Let $q(x)$ in \eqref{129} be given with $q(x)\to 0$ as $|x|\to
\infty$ sufficiently rapidly.  Then for any $z\in \CC \backslash \RR$,
\begin{exer}\label{exer18}
The equation $(L-z)\,\psi=0$ has a \tit{unique} solution $\psi$ such that
$\psi \,(x,z)\, e^{-ixz \sg}\ \to I$ as $x\to -\infty$ and is bounded $x \to \infty$. Such 
$\psi\,(x,z)$ are called \tit{Beals-Coifman} solutions.
\end{exer}

\begin{rem}\label{rmkprops}
	These solutions have the following properties:
\begin{enumerate}
\item For fixed $x$, $\psi(x,z)$ is analytic in 
$\CC \backslash \RR$,
and is continuous down to the axis. That is $\psi_\pm (x,z) = \lim_{\ep \downarrow 0} \psi\lf(x, z\pm i\,\ep\rt)$
exist for all $x,z\in \RR$.
\item For fixed $x$,
$\psi(x,z) e^{-ixz\sg}\to I$ as $ z\to\infty$,
\beq
\psi (x,z)\, e^{-ixz\sg} = I + \frac{m_1(x)}{z} + O \lf(\frac{1}{z^2}\rt),
\qquad\text{as}\qquad z\to \infty
\label{135}
\eeq
for some matrix residue term $m_1(x)$.
\end{enumerate}
\end{rem} 
Now clearly $\psi_\pm (x,z),\;\;z\in \RR$, are two fundamental solutions of
$(L-z)\,\psi=0$ and so for $z\in \RR$,
\beqs
\psi_+\lf(x,z\rt) = \psi_- \lf(x,z\rt) \; v(z)
\eeqs
for all $x\in \RR$, where $v(z)$ is \tit{independent} of $x$.
In other words, by (1) of Remark~\ref{rmkprops}, $\psi(x, \cdot)$ \tit{solves a RHP $\lf(\Sg=
\RR,\,v\rt)$, normalized as in \eqref{135}}.  In this way differential equations
give rise to RHPs in a systematic way.

One can calculate (\tbf{exercise}) the precise form of $v(z)$ and one finds
\beqs
v(z) = \begin{pmatrix}
1-|r(z)|^2 & r(z) \\
-\ovl{r(z)} &1
\end{pmatrix}, \qquad z\in \RR
\eeqs
where, again (cf. \eqref{36} for MKdV) we have for $r$, the \tit{reflection
coefficient},
\beqs
\|r\|_\infty < 1.
\eeqs
Now the map
\beq \label{eq:reflect}
q \mapsto  r = \mR(q)
\eeq
is a bijection between suitable spaces: $r=\mR(q)$, the direct map, is
constructed from $q$ via the solutions $\psi(x,z)$ as above.  The inverse map
$r\mapsto \mR^{-1}(r) =q$ is constructed by solving the RHP $(\Sg, v)$
normalized by \eqref{135} for any fixed $x$.  One obtains
\begin{align}
\psi(x,z) \, e^{-izx\sg} &= I + \frac{m_1(x; r)}{z} + O\lf(\frac{1}{z^2}\rt)
\qquad\text{as} \qquad z\to \infty
\notag\\
\intertext{and}
q(x) &= -i \lf(m_1(x, r)\rt)_{12} \notag
\end{align}
(cf \eqref{39} for MKdV).

Now if $q=q(t)= q(x,t)$ solves NLS then $r(t) = \mR\lf(q(t)\rt)$ evolves simply,
\beqs
r(t) = r(t,z) = r(t=0, z)\, e^{-itz^2}\ , \qquad z\in \RR
\eeqs
i.e.  $t\to q(t) \to r(t) \to \log \, r(t) = \log r(t=0)
-itz^2$
\tit{linearizes} NLS.  This leads to the following formula for the solution of
NLS with initial data $q_0$
\beq
q(t) = \mR^{-1} \lf(e^{-it(\cdot)^2}\; \mR(q_0)(\cdot)\rt).
\label{142}
\eeq
The effectiveness of this representation, which one should view as the RHP
analog of NLS of the integral representation \eqref{10} for the Airy equation,
depends on the effectiveness of the nonlinear steepest descent method
for RHPs.

\begin{ques}
Where in the representation \eqref{142} is the information encoded that $q(t)$
solves NLS?
\end{ques}

The \tit{answer} is as follows.  Let $\psi(x,z, t)$ be the solution of the
RHP with jump matrix
\[
v_t (z)= \begin{pmatrix}
1-|r|^2 &r e^{-itz^2}\\
-\bar{r} e^{itz^2} &1
\end{pmatrix}
\]
normalized as in \eqref{135}.  Set $H(x,z,t)=\psi(x,z,t)\, e^{-itz^2\,\sg}$
and observe that
\beq
H_+ = H_- \begin{pmatrix}
1- |r|^2 &r \\
-\bar{r} &1
\end{pmatrix} = H_-\,v
\label{143}
\eeq
for which the jump matrix is \tit{independent of $x$ and $t$.  This means
that we can differentiate \eqref{143} with respect to $x$ and $t$},\;
$H_{x+} = H_{x-}\,v,\;H_{t+}= H_{t-}\,v$ and conclude, as in the case
of orthogonal polynomials, that $H_x\, H^{-1}$ and $H_t\, H^{-1}$ are
entire, and evaluating these combinations as $z\to \infty$, we obtain two
equations
\beqs
H_x = D\,H \qquad, \quad H_t = E\,H
\eeqs
for suitable polynomials matrix functions $D$ and $E$.  These functions
constitute the famous Lax pair $(D,E)$ for NLS. Compatibility
of these two equations requires
\begin{align*}
&\p_t \,\p_x\,H = \p_x \, \p_t\, H\\
\Longrightarrow\qquad  & \p_t (D\,H) = \p_x \, (E\,H)\\
\Longrightarrow\qquad  & D_t\,H+ D\, E\,H = E_x\,H + E\,D\,H\\
\Longrightarrow\qquad  & D_t+ \lf[D,\, E\rt] = E_x
\end{align*}
which reduces directly to NLS.
In this way RHP's lead to difference and differential equations.

Another systematic way that RHP's arise is through the distinguished class
of so-called \tit{integrable operators}.  Let $\Sg$ be an oriented contour
in $\CC$ and let $f_1, \dots, f_n$ and $g_1, \dots, g_n$ be bounded measurable
functions on $\Sg$.  We say that an operator $K$ acting on $L^p(\Sg),\;\;
1<p<\infty$, is \tit{integrable} if it has a kernel of the form
\beqs
K(z,z') = \frac{ \Sg^n_{i=1} \;f_i(z) \;g_i(z')}{z-z'},
\qquad z,\, z' \in \Sg;\qquad z\neq z'
\eeqs
for such $L^\infty$ functions $f_i,\, g_j$,
\[
(K\,h)(z) = \int_\Sg K(z, z')\, h(z')\, dz'\ .
\]
Integrable operators were first singled out as a distinguished class of
operators by Sakhnovich \cite{sakh} in the late 1960's, and their theory was
developed fully by Its, Izergin, Korepin and Slavnov \cite{IIKS} in the early 1990's
(see \cite{Deift1999a} for a full discussion).
The famous sine kernel of random matrix theory
\beq \notag
K_x(z,z') = \frac{\sin x (z-z')}{\pi(z-z')} =
\frac{e^{ixz} \, e^{-ixz'}
+ \lf(-e^{ixz'}\rt)\cdot
e^{ixz}}{2i\,\pi(z-z')}
\eeq
is a prime example of such an operator, as is likewise the well-known
Airy kernel operator.

Integrable operators form an algebra, but their
\tit{most remarkable property} is that their
\tit{inverses can be expressed} in terms of the solution
of a naturally associated RHP.  Indeed, let $m(z)$ be the solution of the
normalized RHP $(\Sg, v)$ where
\begin{equation}\label{147}
	v(z)  = I -2\pi i f\,g^T,\quad f= \lf(f_1, \dots, f_n\rt)^T, g=\lf(g_1, \dots, g_n\rt)^T.
\end{equation}
(Here we assume for simplicity  that $\Sg^n_{i=1}\,f_i(z)\, g_i(z)=0$,
for all $z\in \Sg$ as in the sine-kernel: otherwise \eqref{147} must be
slightly modified).

Then $(1-K)^{-1}$ has the form $1+L$ where $L$ is an integrable operator
\beqs
L(z, z') = \frac{\Sg^n_{i=1}\;F_i(z)\;G_i(z')}{z-z'}, \qquad
z, z' \in \Sg, \qquad z\neq z'
\eeqs
and
\beq\label{149}
\lf\{
\begin{aligned} 
F &= \lf(F_1, \dots, F_n\rt)^T = m_\pm\, f \\
G &= \lf(G_1, \dots, G_n\rt)^T = (m_\pm^{-1})^T\, g.
\end{aligned}
\rt.
\eeq
This means that if, for example, $K$ depends on parameters, as in the
case of the sine kernel, asymptotic problems involving $K$ as the parameters
become large, are converted into asymptotic problems for a RHP, to which
the nonlinear steepest descent method can be applied.

As an example, we show how to use RHP methods to give a proof of Szeg\H{o}'s
celebrated Strong Limit Theorem.  Let $\TT$ be the unit circle.

\begin{theorem}[Szeg\H{o} Strong Limit Theorem]\label{thm7}
Let $\varphi(z) =e^{L(z)} \in L^1(\TT),\; \varphi (z)>0$, where
$\sum^\infty_{k=1} k|L_k|^2 < \infty$ and $\{L_k\}$ are Fourier coefficients
of $L(z)$.  Let $D_n$ be the Toeplitz determinant generated by $\varphi$,
$D_n (\varphi) = \det \; X(\varphi)$ where $X(\varphi)$ is the $(n+1)
\times (n+1)$ matrix with entries
$\{ \varphi_{i-j}\}_{0 \le i,\; j \le n}$,
and $\{\varphi_k\}$ are the Fourier coefficients of $\varphi$.  Then as
$n\to\infty$,
\beqs
D_n= e^{\displaystyle(n+1)L_0+ \Sg^\infty_{k=1} k|L_k|^2} \lf(1+o(1)\rt).
\eeqs
\end{theorem}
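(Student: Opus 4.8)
The plan is to express the Toeplitz determinant $D_n(\varphi)$ as a Fredholm determinant to which the Riemann--Hilbert and integrable-operator machinery of this lecture applies, and then to take $n\to\infty$ in that determinant. Throughout set $L=\log\varphi$, which is real valued on $\TT$ since $\varphi>0$, so $L_{-k}=\ovl{L_k}$; split $L=L_0+L_++L_-$ with $L_\pm=\sum_{\pm k>0}L_kz^k$ and put $\varphi_\pm=e^{L_\pm}$, so $\varphi_+$ extends to a bounded invertible analytic function on $|z|<1$ and $\varphi_-$ on $|z|>1$, with $\varphi_+(0)=\varphi_-(\infty)=1$ and $\varphi=e^{L_0}\varphi_+\varphi_-$. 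The hypothesis $\sum_{k\ge1}k|L_k|^2<\infty$ is the statement $L\in H^{1/2}(\TT)$; it enters only through the resulting Hilbert--Schmidt bounds on the Hankel operators built from $\varphi_+$, $\varphi_-$, $\varphi_+^{-1}$, $\varphi_-^{-1}$.

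First I would set up the bridge to a RHP. Write $D_n=\det(P_nM_\varphi P_n)$ on $L^2(\TT)$, $P_n$ the projection onto $\{1,z,\dots,z^n\}$ and $M_\varphi$ multiplication; the Fokas--Its--Kitaev RHP for orthogonal polynomials, now on $\Sigma=\TT$, produces for each $N$ the unique $Y^{(N)}$ analytic off $\TT$ with jump $\left(\begin{smallmatrix}1&z^{-N}\varphi(z)\\0&1\end{smallmatrix}\right)$ and $Y^{(N)}(z)\,\tdiag(z^{-N},z^{N})\to I$ as $z\to\infty$ --- existence and uniqueness as in Lectures~3--4 via control of $1-C_\om$. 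The ratios $D_N/D_{N-1}=\kappa_N^{-2}$ (with $\kappa_N$ the leading coefficient of the $N$-th orthonormal polynomial for $\varphi\,\tfrac{|dz|}{2\pi}$) are explicit entries of the residue matrix of $Y^{(N)}$, so $D_n$ is a RHP quantity; but it is far more efficient to use this setup to derive the Borodin--Okounkov--Case--Geronimo (BOCG) identity
\[
D_n(\varphi)=E(\varphi)\,e^{(n+1)L_0}\,\det\!\bigl(1-K_n\bigr),
\]
where $K_n$ is the compression to $\{z^j:j\ge n+1\}$ of the product of the two Hankel operators built from $\varphi_-\varphi_+^{-1}$ and $\varphi_+\varphi_-^{-1}$ (up to the standard symbol reflection), and $E(\varphi)$ is the corresponding fixed constant $\det(1-K_{-1})^{-1}$. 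The operator $K_n$ is exactly of the integrable type of this lecture --- on $\TT$ its kernel has the form $\sum f_i(z)g_i(z')/(z-z')$ --- and its resolvent is given by \eqref{149} in terms of the solution of the associated normalized RHP $(\TT,\,I-2\pi i\,fg^T)$.

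Granting the BOCG identity, the rest is soft. Since $Q_n:=I-P_n\to0$ strongly while the Hankel operators $H(\varphi_-\varphi_+^{-1})$ and $H(\varphi_+\varphi_-^{-1})$ are Hilbert--Schmidt --- this is the one place the hypothesis $\sum k|L_k|^2<\infty$ is used --- the truncations $Q_nH(\cdot)$ tend to $0$ in Hilbert--Schmidt norm, so $K_n$ is trace class with $\|K_n\|_1\le\|Q_nH(\varphi_-\varphi_+^{-1})\|_2\,\|H(\varphi_+\varphi_-^{-1})Q_n\|_2\to0$; hence $\det(1-K_n)\to1$. The constant is evaluated by Widom's identity for determinants of $1$ minus a product of Hankel operators, giving $E(\varphi)=\exp\bigl(\sum_{k\ge1}kL_kL_{-k}\bigr)=\exp\bigl(\sum_{k\ge1}k|L_k|^2\bigr)$ by $L_{-k}=\ovl{L_k}$. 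Putting these together yields $D_n(\varphi)=e^{(n+1)L_0+\sum_{k\ge1}k|L_k|^2}\,(1+o(1))$, which is the assertion.

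The step I expect to be the genuine obstacle is the BOCG identity itself at the stated sharp level of generality. In the RHP framework one proves it by factoring $M_\varphi=e^{L_0}M_{\varphi_+}M_{\varphi_-}$, observing that $M_{\varphi_+}$ and $M_{\varphi_-}$ are unitriangular in the monomial basis so that $\det(P_nM_{\varphi_\pm}P_n)=1$, and then identifying the remaining compression with $1-K_n$ by a Jacobi/Cramer manipulation --- equivalently by relating the Toeplitz and Hankel operators involved to the Cauchy projections $C^\pm$ on $\TT$ and the operator $1-C_\om$ of Proposition~\ref{prop1}, and tracking the Fredholm determinant through that correspondence. A point to be careful about is that $H^{1/2}(\TT)$ does not embed in $L^\infty$, so $\varphi$ need not be continuous and one cannot simply run a pointwise nonlinear steepest descent on $Y^{(N)}$; if instead one assumes $\varphi$ smooth (say analytic), there is an alternative argument purely in the spirit of these lectures: differentiate $\log D_n$ along $\varphi_s=e^{sL}$, express $\partial_s\log D_n$ as a contour integral over $\TT$ of $L$ paired bilinearly with $Y^{(N)}_\pm$ and $\partial_zY^{(N)}$, substitute the ``trivial'' steepest-descent approximation of $Y^{(N)}$ by the Szeg\H{o}-function global parametrix, and integrate over $s\in[0,1]$ from the trivial endpoint $D_n(1)=1$.
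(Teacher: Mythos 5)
Your proposal is correct in outline and reaches the theorem at its sharp level of generality, but it is a genuinely different route from the one in the lecture. The paper does not perform a Wiener--Hopf factorization of $\varphi$ at the outset: it identifies $X(\varphi)$ with $1-K_n$ for an explicit finite-rank \emph{integrable} operator $K_n$ on $\TT$ built directly from $\varphi$ (kernel \eqref{154}--\eqref{155}), writes $\log D_n=-\int_0^1\mathrm{tr}\,\bigl((1-tK_n)^{-1}K_n\bigr)\,dt$ along the linear path $\varphi_t=(1-t)+t\varphi$, expresses the resolvent through the solution $m_t$ of the normalized RHP $(\TT,v_t)$ via \eqref{149}, and then extracts the $n\to\infty$ behavior of $m_t$ by nonlinear steepest descent: the factorization \eqref{166}, deformation to the three circles $\Sg_\rho\cup\Sg\cup\Sg_{\rho^{-1}}$, and reduction to a scalar diagonal RHP solved by Plemelj's formula \eqref{116}. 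That last stage is carried out only for $\varphi$ analytic in an annulus, so the paper's argument is explicitly a sketch that does not reach the $H^{1/2}$ hypothesis; your BOCG route buys the sharp hypothesis essentially for free at the asymptotic stage ($Q_n\to 0$ strongly against Hilbert--Schmidt Hankel operators, plus Widom's evaluation of the constant $E(\varphi)$), at the cost of front-loading all the difficulty into the BOCG identity itself, which you rightly flag as the real obstacle. One caveat in your setup: for real-valued $L\in H^{1/2}(\TT)$ the analytic parts $L_\pm$ need not be bounded, so $\varphi_\pm=e^{L_\pm}$ need not be bounded invertible as you assert; the Hilbert--Schmidt bounds on $H(\varphi_-\varphi_+^{-1})$, $H(\varphi_+\varphi_-^{-1})$ and the trace-class sense of the BOCG determinant then require a separate approximation argument (e.g.\ prove the identity for trigonometric-polynomial $L$ and pass to the limit). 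Your closing ``alternative'' --- differentiating $\log D_n$ along $\varphi_s=e^{sL}$ and inserting the Szeg\H{o}-function parametrix --- is the variant closest in spirit to what the paper actually does, except that the paper deforms along $(1-t)+t\varphi$ and obtains the resolvent from the integrable-operator RHP rather than from $Y^{(N)}$ and its $z$-derivative.
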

\begin{proof}[Sketch of proof] Let $e_k, \;\;0\le k \le n $, be the standard basis
in $\CC^{n+1}$.  Then the map $U_n:e_k \to z^k$, $\;0\le k\le n$,
$\;z\in \TT$ takes $\CC^{n+1}$ onto the trigonometric polynomials $\mP_n
= \lf\{ \sum^n_{j=0}\,a_j\, z^j\rt\}$
of degree $n$ and induces a map
\[
\tau_n : \mP_n \to \mP_n
\]
which is conjugate to $X(\varphi)$.

We then calculate 
\begin{equation}\label{151}
\begin{split}
	\tau_n \, z^k
	&= U_n \;X\;U^{-1}_n\; z^k  \\
	&= U_n \;X\;e_k \\
	&= U_n \Bigl(\sum^n_{j=0} \varphi_{j-k}\, e_j\Bigr)  \\
	&= \sum^n_{j=0} \varphi_{j-k}\;z^j, \qquad 0 \le k \le n. 
\end{split}
\end{equation}
Now for any $p= \sum^n_{k=0} a_k\, z^k \in \mP_n$
\begin{align}
\lf(\tau_n\,p\rt)(z) &= \sum^n_{k=0} a_k \sum^n_{j=0} \varphi_{j-k}\,z^j \notag\\
&= \sum^n_{k=0} a_k \sum^n_{j=0} \lf( \int_{\Gamma} (z')^{k-j-1} \,
\varphi(z')\, \dtz'\rt) z^j \notag\\
&= \sum^n_{k=0} a_k \int_\Gamma (z')^{k-1}\, \varphi(z') \;
\frac{(z/z')^{n+1}-1}{(z/z')-1}\; \dtz' \notag\\
&= \int_\Ga \varphi(z')\, p(z')\; \frac{(z/z')^{n+1}-1}{(z-z')} \;\dtz'.
\notag
\end{align}
After some simple calculations (\tbf{Exercise}) one finds that
\beq
\tau_n \, p = \lf(1-K_n\rt)p, \qquad p\in \mP_n
\label{153}
\eeq
where $K_n$ is the integrable operator on $\TT$ with kernel of the form
\beq
K_n \lf(z,\, z'\rt) = \frac{f_1(z)\, g_1(z') + f_2(z) \, g_2(z')}{z-z'}
, \qquad z, \,z'\in \Ga
\label{154}
\eeq
where
\begin{equation}\label{155}
	\begin{split}
		f& = \lf(f_1, f_2\rt)^T = \lf(z^{n+1}, 1\rt)^T\\\
		g &= \lf(g_1, g_2\rt)^T = \lf(z^{-n-1}\;\frac{1-\varphi(z)}{2\pi i},
		- \frac{\lf(1-\varphi(z)\rt)}{2\pi i} \;\rt)^T.
	\end{split}
\end{equation}

We have, in particular, from \eqref{151} and \eqref{153}, for $0 \le k \le n$,
\beqs
\lf(1-K_n\rt)z^k = \sum^n_{j=0} \varphi_{j-k}\, z^j
\eeqs
and for $k<0$ and $k>n$ one easily shows that
\beqs
\lf(1-K_n\rt)z^k= z^k + \sum^n_{j=0} \varphi_{j-k}\, z^i.
\eeqs
Thus $K_n$ is finite rank, and hence trace class, and $\lf(1-K_n\rt)$ has
block form with respect to the orthonormal basis $\{z^k\}^\infty_{-\infty}$
for $L^2(\Ga)$ as given in Figure~\ref{fig:block}. And so
\beqs
D_n = \det \,\tau_n = \det X(\varphi) = \det \lf(1-K_n\rt)
\eeqs

\begin{figure}[H]
\centering
\begin{tikzpicture}[scale=0.55]
\node at (-1.5,1.5) {\scalebox{1.2}{$I$}};
\node at (0,1.5) {\scalebox{1.2}{$0$}};
\node at (1.5,1.5) {\scalebox{1.2}{$0$}};
\node at (-1.5,0) {\scalebox{1.2}{$\cdots$}};
\node at (0,0) {\scalebox{1.2}{$\tau_n$}};
\node at (1.5,0) {\scalebox{1.2}{$\cdots$}};
\node at (-1.5,-1.5) {\scalebox{1.2}{$0$}};
\node at (0,-1.5) {\scalebox{1.2}{$0$}};
\node at (1.5,-1.5) {\scalebox{1.2}{$I$}};

\draw[line width=1] (-2, .75) to (2,.75);
\draw[line width=1] (-2, -.75) to (2,-.75);
\draw[line width=1] (.75, -2) to (.75, 2);
\draw[line width=1] (-.75, -2) to (-.75, 2);
\end{tikzpicture}
\caption{\label{fig:block} The block structure of $1-K_n$ in the basis $\{z^k\}^\infty_{-\infty}$.  }
\end{figure}

Associated with the integrable operator $K_n$ we have the normalized RHP
$\lf(\Sg= \Ga, v\rt)$ where, by \eqref{147}, \eqref{155}
\begin{equation}\label{159}
v= I -2\pi i \, f g^T
 =\begin{pmatrix}
\varphi & - (\varphi-1)\, z^{n+1}\\
z^{-n-1} \, (\varphi-1) & 2-\varphi
\end{pmatrix}
\end{equation}
on $\TT$. Now
\begin{equation}\label{160}
\begin{split}
	\log \, D_n &= \log \, \det\lf(1-K_n\rt)  \\
	&= tr \; \log \lf(1-K_n\rt) \\
	&= \int^1_0 \frac{d}{dt}\; \text{ tr } \log \lf(1-t\, K_n\rt) dt \\
	&=- \int^1_0 \text{ tr }\lf( \frac{1}{1-t\,K_n}\;K_n \rt) dt.
\end{split}
\end{equation}
For $0\le t\le 1$, set
\beqs
\varphi_t(z) = (1-t) + t\, \varphi(z), \qquad z\in \mathbb T.
\eeqs
Clearly $\;\;\varphi_t (z)>0\;\;$ and $\;\;\varphi_0(z) =1,\;\;
\varphi_1(z) = \varphi(z)$.
Now $\;\;\varphi_t-1=t\lf(\varphi-1\rt)\;\;$ and so we have from \eqref{154}
\beqs
t\,K_n = K_{t, \, n} = \lf[ \lf( (z/z')^{n+1} - 1\rt)/ (z-z') \rt]
\lf[ \lf(1-\varphi_t(z')\rt)/ 2\pi i\rt]
\eeqs
and it follows that in \eqref{160}
\begin{align*}
\frac{1}{1-t\,K_n} \; t\,K_n &= \frac{1}{1-K_{t,\,n}} \; K_{t,\,n} \\
&= \frac{1}{1-K_{t,n}} - 1\\
&= R_{t,n}
\end{align*}
where
\beqs
R_{t,\,n} \lf(z, z'\rt) =
\frac{\sum^2_{j=1} F_{t, j}(z)\,G_{t,j}(z')}{z-z'}
\eeqs
where by \eqref{149}
\beq\label{164}
\lf\{
\begin{aligned}
F_t &= \lf(F_{t, 1},\, F_{t,2} \rt)^T = m_{t\,\pm}\,f_t, \\
G_t &= \lf(G_{t,1},\, G_{t,2}\rt)^T = \lf(m_{t \,\pm}^{-1} \rt)^T\, g_t.
\end{aligned}
\rt.
\eeq
Here $m_{t\,\pm}$ refers to the solution of the RHP $(\TT,\,v_t)$ where $v_t$
involves $\varphi_t$ rather than $\varphi$ in \eqref{159},
and similarly for $f_t,\, g_t$.

Hence (\tbf{Exercise})
\beq
\log\, D_n = -\int^1_0 \lf( \int_\TT
\lf(\sum^2_{j=1}\, F'_{t,j}(z) \,
G_{t,j}(z) \rt)
dz\rt) \frac{dt}{t}\ .
\label{165}
\eeq
So we see that in order to evaluate $D_n$ as $n\to\infty$
we must evaluate the asymptotics of the solution $m_t$ of the normalized
RHP $\lf(\TT, \,v_t\rt)$ as $n\to\infty$, for each $0\le t \le 1$, and
substitute this information into \eqref{165} using \eqref{164}. This is precisely
what can be accomplished \cite{Deift1999a} using the nonlinear steepest descent method.

Here we present the nonlinear steepest descent analysis in the case when $\varphi(z)$ is analytic in an annulus
\[
A_\ep = \{ z: 1-\ep < |z| < 1+ \ep\}, \qquad \ep >0
\]
around $\TT$.  The idea of the proof, which is a common feature of all
applications of the nonlinear steepest descent method, is to move  the
$z^{n+1}$ term (or its analog in the general situation) in $v_t$ into
$|z|<1$ and the $z^{-n-1}$ term into $|z|>1$: then as $n\to \infty$,
these terms are exponentially small, and can be neglected.

But first we must separate the $z^{n+1}$ and $z^{-n-1}$ terms of
$v_t$ algebraically.  This is done
using the lower-upper pointwise factorization of $v_t$
\beq \label{166}
v_t = \begin{pmatrix}
1 & 0 \\
z^{-n-1} \lf( 1-\varphi^{-1}_t\rt) &1
\end{pmatrix}
\begin{pmatrix}
\varphi_t &0 \\
0 & \varphi^{-1}_t
\end{pmatrix}
\begin{pmatrix}
1 & -\lf(1- \varphi^{-1}_t \rt) z^{n+1}\\
0 & 1
\end{pmatrix}
\eeq
which is easily verified.

Extend $\TT =\Sg \to \tilde{\Sg} = \lf\{ |z|=\rho\}
\cup \Sg \cup
\{|z|= \rho^{-1}\rt\} = \Sg_\rho \cup \Sg\cup \Sg_{\rho-1}$
where we choose $1-\ep < \rho <1 < \rho^{-1} < 1+ \ep$.  Now define a piecewise
analytic function $\tilde{m}$ by the definitions in Figure~\ref{fig:piece}.

\begin{figure}[H]
\centering
\begin{tikzpicture}[scale=0.78]
\node at (0,0) {$\tilde{m} = m$};
\draw[line width=1, directed, domain = 0:360, samples = 200] plot ({1*cos(\x)}, {1*sin(\x)});
\draw[line width=1, directed, domain = 0:360, samples = 200] plot ({2*cos(\x)}, {2*sin(\x)});
\draw[line width=1, directed, domain = 0:360, samples = 200] plot ({3*cos(\x)}, {3*sin(\x)});
\node[right] at (1,0){$\Sigma_p$};
\node[right] at (2,0){$\Sigma$};
\node[right] at (3,0){$\Sigma_{p-1}$};
\node at (3,3) {$\tilde{m} = m$};

\draw[->] (2.5,-2.5) to (1.2, -1.2);
\node[right] at (2.5,-2.7){\scalebox{.75}{$\tilde{m} = m\left(\begin{array}{cc}
1 & -(1-\phi_t) z^{n+1} \\
0 & 1
\end{array}\right)^{-1}$}};

\draw[->] (-3,3) to (-1.8, 1.8);
\node[left] at (-3,3){\scalebox{.75}{$\tilde{m} = m\left(\begin{array}{cc}
1 & 0 \\
z^{-n-1}(1-\phi_t^{-1}) & 1
\end{array}\right)$}};
\end{tikzpicture}
\caption{\label{fig:piece} A piecewise definition of $\tilde m$. }
\end{figure}
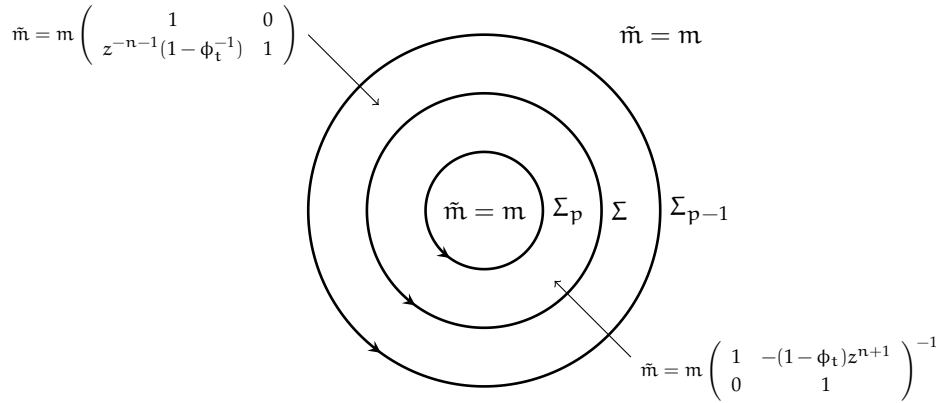

This definition is motivated by the fact that
\[
m_+ = m_- v_t = m_- ( \cdot ) ( \cdot ) ( \cdot )
\]
as in \eqref{166}. It follows that $\tilde{m}(z)$ solves the normalized RHP $\lf(\tilde{\Sg},
\tilde{v}\rt)$ where
\begin{alignat}{2}
\notag\tilde{v}(z) &= \begin{pmatrix} 1 & 0 \\
z^{-n-1} \lf(1-\varphi^{-1}_t\rt) &1
\end{pmatrix} &\qquad\text{on }\qquad \Sg_{\rho^{-1}},\\
\notag\tilde{v}(z) &= \begin{pmatrix} \varphi_t(z) & 0 \\
0 & \varphi_t(z)^{-1}
\end{pmatrix} &\qquad\text{on }\qquad \Sg,\\
\notag\tilde{v}(z) &= \begin{pmatrix} 1 &  - \lf(1-\varphi^{-1}_t\rt) z^{n+1} \\
0 &1
\end{pmatrix} &\qquad\text{on }\qquad \Sg_\rho\ .
\end{alignat}
Now as $n\to\infty$, $\;\tilde{v}(z) \to I$ on $\Sg_\rho$ and on
$\Sg_{\rho-1}$.  This means that $\tilde{m}\to m_\infty$ where $m_\infty$
solves the normalized RHP $\lf(\Sg, v_\infty\rt)$ where
\beqs
v_\infty = v\big|_\Sg = \begin{pmatrix}
\varphi_t &0 \\
0 & \varphi^{-1}_t
\end{pmatrix}.
\eeqs
But this RHP is a direct sum of \tit{scalar} RHP's and hence can be solved
explicitly, as noted earlier (cf. \eqref{116}).  In this way we obtain the
asymptotics of $m$ as $n\to\infty$ and hence the asymptotics of the Toeplitz
determinant $D_n$.
\end{proof}
Here is what, alas, I have not done and what I had hoped to do in these lectures
(see AMS open notes):
\begin{itemize}
\item Show that in addition to the usefulness of RHP's for algebraic
and asymptotic purposes, RHP's are also
useful for analytic purposes.  In particular, RHP's can be used to show
that the Painlev\'e equations indeed have the Painlev\'e property.
\item Show that in addition to RHP's arising ``out of the blue'' as in
the case of orthogonal polynomials and systematically in the case of
ODE's and also integrable operators, RHP's also arise in a systematic
fashion in Wiener--Hopf Theory.
\item Describe what happens to an RHP when the operator $1-C_\om$ is
Fredholm, but not bijective, and
\item Finally, I have not succeeded in showing you how the nonlinear
steepest descent method works in general.  All I have shown is one simple
case.
\end{itemize}

%
%
%
%
%
%
%

\bibspread

\begin{bibdiv}
\begin{biblist}

\bib{AbramowitzStegun}{book}{
      author={Abramowitz, M},
      author={Stegun, I~A},
       title={{Handbook of mathematical functions}},
   publisher={National Bureau of Standards},
     address={Washington D.C.},
        date={1970},
}

\bib{Baik1999b}{article}{
      author={Baik, J},
      author={Deift, P},
      author={Johansson, K},
       title={{On the distribution of the length of the longest increasing
  subsequence of random permutations}},
        date={1999oct},
        ISSN={08940347},
     journal={J. Am. Math. Soc.},
      volume={12},
      number={04},
       pages={1119\ndash 1179},
         url={http://www.ams.org/journal-getitem?pii=S0894-0347-99-00307-0},
}

\bib{Baik2017}{book}{
      author={Baik, J},
      author={Deift, P},
      author={Suidan},
       title={{Combinatorics and Random Matrix Theory}},
   publisher={Amer. Math. Soc.},
     address={Providence, RI},
        date={2017},
}

\bib{BC}{article}{
author={Beals, R},
author={Coifman, R~R},
title={Scattering and inverse scattering for first order systems},
journal={Comm. Pure Appl. Math.},
volume={37},
year={1984},
pages={39\ndash 90}
}

\bib{Ontheline}{book}{
      author={Beals, R},
      author={Deift, P},
      author={Tomei, C},
       title={{Direct and inverse scattering on the line}},
      series={Mathematical Surveys and Monographs},
   publisher={American Mathematical Society},
     address={Providence, RI},
        date={1988},
      volume={28},
        ISBN={0-8218-1530-X},
}
%

\bib{Bottcher}{book}{
	author={B\"ottcher, A},
    author={Karlovich, Y~I},
    title={Carleson Curves, Muckenhoupt Weights, and Toeplitz Operators},
    series={Progress in Mathematics},
    volume={154},
    publisher={Birkh\"auser Verlag},
    address={Basel},
    year={1997}
}

\bib{C}{article}{
	author={Calder\'on, A~P},
    title={Cauchy integrals on Lipschitz curves and related operators},
    journal={Proc. Nat. Acad. Sci.},
    volume={74},
    year={1977},
    pages={1324\ndash 1327}
}

\bib{ClanceyGohberg}{book}{
      author={Clancey, K~F},
      author={Gohberg, I},
       title={{Factorization of matrix functions and singular integral
  operators}},
      series={Operator Theory: Advances and Applications},
   publisher={Birkh{\"{a}}user Verlag},
     address={Basel},
        date={1981},
      volume={3},
        ISBN={3-7643-1297-1},
}

\bib{CMM}{article}{
	author={Coifman, R~R},
    author={McIntosh, A},
    author={Meyer, Y},
    title={L'integrale de Cauchy d\'efinit un op\'erateur born\'e sur $L^2$ pour les courbes Lipschitziennes},
    journal={Ann. of Math.},
    volume={116},
    year={1982},
    pages={361 \ndash 388}
}

\bib{D}{article}{
	author={David, G},
    title={L'integrale de Cauchy sur les courbes rectifiables},
    journal={Prepublication Univ. Paris-Sud, Dept. Math.},
    volume={82T05},
    year={1982}
}

\bib{Deift1999a}{article}{
      author={Deift, P},
       title={{Integrable operators}},
        date={1999},
     journal={Amer. Math. Soc. Transl.},
      volume={198},
      number={2},
       pages={69\ndash 84},
}

\bib{DeiftOrthogonalPolynomials}{book}{
      author={Deift, P},
       title={{Orthogonal Polynomials and Random Matrices: a Riemann--Hilbert
  Approach}},
   publisher={Amer. Math. Soc.},
     address={Providence, RI},
        date={2000},
}

\bib{DeiftGioev}{book}{
	author={Deift, P},
    author={Gioev, D},
    title={Random Matrix Theory: Invariant Ensembles and Universality},
    series={Courand Lecture Notes},
    volume={18},
    publisher={Amer. Math. Soc.},
    address={Providence, RI},
    year={2009}
}

\bib{DeiftWeights4}{article}{
      author={Deift, P},
      author={Kriecherbauer, T},
      author={McLaughlin, K T-R},
      author={Venakides, S},
      author={Zhou, X},
       title={{Asymptotics for polynomials orthogonal with respect to varying
  exponential weights}},
        date={1997},
        ISSN={1073-7928},
     journal={Internat. Math. Res. Not.},
      volume={16},
       pages={759\ndash 782},
         url={http://dx.doi.org/10.1155/S1073792897000500},
}

\bib{Deift1999}{article}{
      author={Deift, P},
      author={Kriecherbauer, T},
      author={McLaughlin, K T-R},
      author={Venakides, S},
      author={Zhou, X},
       title={{Uniform asymptotics for polynomials orthogonal with respect to
  varying exponential weights and applications to universality questions in
  random matrix theory}},
        date={1999nov},
        ISSN={0010-3640},
     journal={Commun. Pure Appl. Math.},
      volume={52},
      number={11},
       pages={1335\ndash 1425}, 
}

\bib{deiftzhoumkdv}{article}{
      author={Deift, P},
      author={Zhou, X},
       title={{A steepest descent method for oscillatory {R}iemann--{H}ilbert
  problems. {A}symptotics for the {MKdV} Equation}},
        date={1993},
     journal={Ann. Math.},
      volume={137},
      number={2},
       pages={295\ndash 368},
         url={http://www.jstor.org/stable/10.2307/2946540},
}

\bib{deiftzhounls}{article}{
      author={Deift, P},
      author={Zhou, X},
       title={{Long-time asymptotics for solutions of the NLS equation with
  initial data in a weighted Sobolev space}},
        date={2003aug},
        ISSN={0010-3640},
     journal={Comm. Pure Appl. Math.},
      volume={56},
       pages={1029\ndash 1077},
      eprint={0206222v2},
         url={http://doi.wiley.com/10.1002/cpa.3034
  http://onlinelibrary.wiley.com/doi/10.1002/cpa.3034/abstract},
}

\bib{Deift1995a}{article}{
      author={Deift, P},
      author={Zhou, X},
       title={{Asymptotics for the painlev{\'{e}} II equation}},
        date={1995},
        ISSN={00103640},
     journal={Commun. Pure Appl. Math.},
      volume={48},
      number={3},
       pages={277\ndash 337},
         url={http://doi.wiley.com/10.1002/cpa.3160480304},
}

\bib{Duren}{book}{
      author={Duren, P},
       title={{Theory of {$H^p$} Spaces}},
   publisher={Academic Press},
        date={1970},
}

\bib{Flaschka}{article}{
	author={Flaschka, H},
    author={Newell, A~C},
    title={Monodromy and spectrum preserving deformations, I},
    journal={Comm. Math. Phys.},
      volume={76},
       pages={67\ndash 116},	
    }

\bib{FIK}{article}{
	author={Fokas, A~S},
    author={Its, A~R},
    author={Kitaev, A~V},
    title={The isomonodromy approach to matrix models in 2D quantum gravity},
    journal={Commun. Math. Phys.},
    volume={147},
    year={1992},
    pages={395\ndash 430}
}

\bib{FokasPainleve}{book}{
      author={Fokas, A~S},
      author={Its, A~R},
      author={Kapaev, A~A},
      author={Novokshenov, V~Y},
       title={{Painlev{\'{e}} Transcendents: the Riemann--Hilbert Approach}},
   publisher={Amer. Math. Soc.},
        date={2006},
}

\bib{Garnett2007}{book}{
      author={Garnett, J},
       title={{Bounded Analytic Functions}},
      series={Graduate Texts in Mathematics},
   publisher={Springer New York},
     address={New York, NY},
        date={2007},
        ISBN={978-0-387-33621-3},
         url={http://link.springer.com/10.1007/0-387-49763-3},
}

\bib{Goluzin1969}{book}{
      author={Goluzin, G~M},
       title={{Geometric Theory of Functions of a Complex Variable}},
   publisher={Amer. Math. Soc.},
     address={Providence, RI},
        date={1969},
        ISBN={9780821886557},
}

\bib{Its1994}{article}{
      author={Its, A~R},
      author={Fokas, A~S},
      author={Kapaev, A~A},
       title={{On the asymptotic analysis of the Painleve equations via the
  isomonodromy method}},
        date={1994sep},
        ISSN={0951-7715},
     journal={Nonlinearity},
      volume={7},
      number={5},
       pages={1291\ndash 1325},
  url={http://stacks.iop.org/0951-7715/7/i=5/a=002?key=crossref.8759ab6d97d4b6a8aa29fba50dd29a55},
}

\bib{IIKS}{article}{
author={Its, A~R},
author={Izergin, V~E},
author={Korepin, V~E},
author={Slavnov, N~A},
title={Differential equations for quantum correlation functions},
journal={Int. J. Mod. Phys. B},
volume={4},
year={1990},
pages={1003}
}

\bib{Jimbo}{article}{
	author={Jimbo, M},
    author={Miwa, T},
    title={Monodromy preserving deformations of linear ordinary differential equation with rational coefficients: II.},
    journal={Physica D},
    volume={2},
       pages={407\ndash 448},	
       year={1981}
    }

\bib{Litvinchuk1987}{book}{
      author={Litvinchuk, G~S},
      author={Spitkovskii, I~M},
       title={{Factorization of Measurable Matrix Functions}},
      series={Operator Theory: Advances and Applications},
   publisher={Birkh{\"{a}}user Basel},
     address={Basel},
        date={1987},
      volume={25},
        ISBN={978-3-0348-6268-4},
         url={http://link.springer.com/10.1007/978-3-0348-6266-0},
}

\bib{DLMF}{book}{
      author={Olver, F W~J},
      author={Lozier, D~W},
      author={Boisvert, R~F},
      author={Clark, C~W},
       title={{NIST Handbook of Mathematical Functions}},
   publisher={Cambridge University Press},
        date={2010},
}

\bib{Royden}{book}{
      author={Royden, H},
      author={Fitzpatrick, P},
       title={{Real analysis}},
     edition={4},
   publisher={Pearson},
     address={London},
        date={2010},
}

\bib{sakh}{article}{
author={Sakhnovich, L~A},
title={Operators similar to the unitary operator with absolutely continuous spectrum},
journal={Functional Anal. and Appl.},
volume={2},
year={1968},
pages={48\ndash 60}
}

\end{biblist}
\end{bibdiv}

\end{document}
